\documentclass[twocolumn]{svjour3}           %
\usepackage{amsmath}
\usepackage{amsfonts}
\usepackage{amssymb}
\usepackage{ifpdf}
\usepackage{subfig}
\usepackage{wrapfig}
\usepackage{cite}
\usepackage{clrscode}
\usepackage{comment}
\usepackage{appendix}

\newif\ifarxiv
\newif\ifnaco

\ifnaco
	\arxivfalse
\else
	\arxivtrue
\fi

\ifpdf

  \usepackage[pdftex]{epsfig}
  \usepackage[pdftex]{hyperref}

\else

    \usepackage[dvips]{epsfig}
    \newcommand{\href}[2]{#2}

\fi

\vfuzz2pt %
\hfuzz2pt %

\newcommand{\Z}{\mathbb{Z}}
\newcommand{\N}{\mathbb{N}}

\newcommand{\dom}{{\rm dom} \;}

\newcommand{\termasm}[1]{\mathcal{A}_{\Box}[\mathcal{#1}]}
\newcommand{\prodasm}[1]{\mathcal{A}[\mathcal{#1}]}

\newcommand{\calT}{\mathcal{T}}

    \setcounter{topnumber}{2}
    \setcounter{bottomnumber}{2}
    \setcounter{totalnumber}{4}     %
    \setcounter{dbltopnumber}{2}    %

\pagestyle{plain}

\begin{document}

\title{Hierarchical Self-Assembly of Fractals with Signal-Passing Tiles}

\author{
 Jacob Hendricks%
    \thanks{Department of Computer Science and Information Systems, University of Wisconsin - River Falls, River Falls, WI, USA
    \protect\url{jacob.hendricks@uwrf.edu}}
\and
 Meagan Olsen
    \thanks{Fayetteville High School, Fayetteville, AR, USA
    \protect\url{olsen.megs@gmail.com}}
\and
 Matthew J. Patitz
    \thanks{Department of Computer Science and Computer Engineering, University of Arkansas, Fayetteville, AR, USA
    \protect\url{patitz@uark.edu} This author's research was supported in part by National Science Foundation Grant CCF-1422152.}
\and
 Trent A. Rogers
    \thanks{Department of Computer Science and Computer Engineering, University of Arkansas, Fayetteville, AR, USA
    \protect\url{tar003@uark.edu}.  This author's research was supported by the National Science Foundation Graduate Research Fellowship Program under Grant No. DGE-1450079, and National Science Foundation Grant CCF-1422152.}
\and
 Hadley Thomas
    \thanks{Fayetteville High School, Fayetteville, AR, USA
    \protect\url{hadleythomas88@gmail.com}}
}

\institute{}

\date{}

\maketitle

\begin{abstract}
In this paper, we present high-level overviews of tile-based self-assembling systems capable of producing complex, infinite, aperiodic structures known as discrete self-similar fractals.  Fractals have a variety of interesting mathematical and structural properties, and by utilizing the bottom-up growth paradigm of self-assembly to create them we not only learn important techniques for building such complex structures, we also gain insight into how similar structural complexity arises in natural self-assembling systems.  Our results fundamentally leverage hierarchical assembly processes, and use as our building blocks square ``tile'' components which are capable of activating and deactivating their binding ``glues'' a constant number of times each, based only on local interactions.  We provide the first constructions capable of building arbitrary discrete self-similar fractals at scale factor 1, and many at temperature 1 (i.e. ``non-cooperatively''), including the Sierpinski triangle.
\end{abstract}

\section{Introduction}

Fractal patterns have mathematically interesting characteristics, such as recursive self-similarity, and structural properties which lend naturally occurring fractal structures, such as branch patterns and circulatory systems, impressive abilities to efficiently maximize coverage, dissipate heat, etc. Such fractal patterns in nature tend to arise via local processes following relatively simple sets of rules, as forms of self-assembly.  Because of this, and the complex aperiodic nature of fractals, they are a natural target of study during the development of artificial self-assembling systems.  As one of the first mathematical abstractions of self-assembling systems, Winfree's abstract Tile Assembly Model (aTAM) \cite{Winf98} has been the platform for several results showing the impossibility of self-assembling discrete self-similar fractals such as the Sierpinski triangle\footnote{In this paper we refer only to ``strict'' self-assembly, wherein a shape is made by placing tiles only within the domain of the shape, as opposed to ``weak'' self-assembly where a pattern representing the shape can be formed embedded within a framework of additional tiles.}\cite{jSSADST} and similar fractals \cite{TreeFractals}, and also for designing systems which can approximate them \cite{jSSADST,LutzShutters12,jSADSSF}. In a more generalized model called the 2-Handed Assembly Model \cite{AGKS05g} (2HAM, a.k.a. Hierarchical Assembly Model) which allows pairs of large assemblies to bind together, rather than being restricted to only single tile additions per step like the aTAM, the impossibility of self-assembling the Sierpinski triangle \cite{Versus} has also been shown.  In further generalizations allowing larger numbers of assemblies to combine in single steps \cite{MHAM} shapes were shown to self-assemble as well one unscaled fractal, the Sierpinski carpet.

A more recently developed model of tile-based self-assembly, called the Signal-passing Tile Assembly Model (STAM) was developed in \cite{jSignals} to model the behavior of DNA-based tiles capable of strand displacement reactions initiated during the binding of their glues which can then either activate or deactivate other glues on the same tile.  Such signal-passing tiles have been experimentally demonstrated \cite{SignalTilesExperimental}, and various theoretical results have demonstrated the power of systems using such tiles to efficiently simulate Turing machines \cite{jSignals}, replicate patterns \cite{STAMPatternRep} and shapes \cite{STAMshapes}, and also to self-assemble the Sierpinski triangle at scale factor 2.  Additionally, a similar theoretical model, the Active Tile Assembly Model, has been shown to be capable of universal computation in non-cooperative ``temperature-1'' systems \cite{JonoskaSignals1} and also of self-assembling an infinite, self-similar substitution tiling pattern \cite{JonoskaSignals2} which fills the plane (rather than having arbitrarily large holes, and fractal dimension less than $2$, like self-similar fractals).

In this paper, we provide constructions in the STAM which include: (1) the first capable of self-assembling the Sierpinski triangle at scale factor 1, which in fact even works in non-cooperative assembly (i.e. at temperature 1), and (2) an algorithmic method which uses the definition of a fractal as input in order to develop an STAM system which self-assembles that fractal at scale factor 1.  The second result develops systems at temperature 1 for an infinite class of fractals, and for the full class of discrete self-similar fractals at temperature 2.  Our results fundamentally leverage techniques of hierarchical self-assembly and utilize specifically designed instances of geometric hindrance to allow fractals to grow in a carefully controlled, stage-by-stage manner.  In the following sections, we first give an overview of the models and terminology used in the paper, then provide both main constructions.
\ifnaco
Please note that an extended abstract version of this paper was published in \cite{STAM-fractals}, and due to space constraints this current version also has some technical details of the first main result omitted, and they can be found in \cite{STAM-fractals-arxiv}.
\else
Please note that an extended abstract version of this paper was published in \cite{STAM-fractals}, and this version contains the full set of technical details.
\fi

\def\latent/{{\texttt{latent}}}
\def\on/{{\texttt{on}}}
\def\off/{{\texttt{off}}}
\def\calF{{\mathcal{F}}}

\section{Preliminaries}\label{sec:prelims}

Here we provide informal descriptions of the models and terms used in this paper.  Due to space limitations, the formal definitions can be found in \cite{Signals3DArxiv}.

\section{Informal definition of the 2HAM}\label{sec:2ham-informal}

The 2HAM \cite{AGKS05g,DDFIRSS07} is a generalization of the abstract Tile Assembly Model (aTAM) \cite{Winf98} in that it allows for two assemblies, both possibly consisting of more than one tile, to attach to each other. We now give a brief, informal, sketch of the 2HAM.

A \emph{tile type} is a unit square with each side having a \emph{glue} consisting of a \emph{label} (a finite string) and \emph{strength} (a non-negative integer).   We assume a finite set $T$ of tile types, but an infinite number of copies of each tile type, each copy referred to as a \emph{tile}.
A \emph{supertile} is (the set of all translations of) a positioning of tiles on the integer lattice $\Z^2$.  Two adjacent tiles in a supertile \emph{interact} if the glues on their abutting sides are equal and have positive strength.
Each supertile induces a \emph{binding graph}, a grid graph whose vertices are tiles, with an edge between two tiles if they interact.
The supertile is \emph{$\tau$-stable} if every cut of its binding graph has strength at least $\tau$, where the weight of an edge is the strength of the glue it represents.
That is, the supertile is stable if at least energy $\tau$ is required to separate the supertile into two parts.  Note that throughout this paper, we will use the term \emph{assembly} interchangeably with supertile.

A \emph{(two-handed) tile assembly system} (\emph{TAS}) is an ordered triple $\mathcal{T} = (T, S, \tau)$, where $T$ is a finite set of tile types, $S$ is the \emph{initial state}, and $\tau\in\N$ is the temperature.  For notational convenience we sometimes describe $S$ as a set of supertiles, in which case we actually mean that $S$ is a multiset of supertiles with one count of each supertile. We also assume that, in general, unless stated otherwise, the count for any single tile in the initial state is infinite.  Commonly, 2HAM systems are defined as pairs $\mathcal{T} = (T, \tau)$, with the initial state simply consisting of an infinite number of copies of each singleton tile type of $T$, and throughout this paper this is the notation we will use.

Given a TAS $\calT=(T,\tau)$, a supertile is \emph{producible}, written as $\alpha \in \prodasm{T}$, if either it is a single tile from $T$, or it is the $\tau$-stable result of translating two producible assemblies without overlap.
A supertile $\alpha$ is \emph{terminal}, written as $\alpha \in \termasm{T}$, if for every producible supertile $\beta$, $\alpha$ and $\beta$ cannot be $\tau$-stably attached.
A TAS is \emph{directed} if it has only one terminal, producible supertile. A set, or shape, $X$ \emph{strictly self-assembles} if there is a TAS $\mathcal{T}$ for which every assembly $\alpha\in\termasm{T}$ satisfies $\dom \alpha = X$. Essentially, strict self-assembly means that tiles are only placed in positions defined within the shape.  This is in contrast to the notion of \emph{weak self-assembly} in which only specially marked tiles can and must be in the locations of $X$ but other locations can perhaps receive tiles of other types.  All results in this paper are for strict self-assembly of shapes.

\subsection{Informal description of the STAM}
The STAM, as formulated, is intended to provide a model based on experimentally plausible mechanisms for glue activation and deactivation. %
A detailed, technical definition of the STAM model is provided in \cite{Signals3DArxiv}.

In the STAM, tiles are allowed to have sets of glues on each edge (as opposed to only one glue per side as in the aTAM and 2HAM).  Tiles have an initial state in which each glue is either ``\on/'' or ``\latent/'' (i.e. can be switched \on/ later).  Tiles also each implement a transition function which is executed upon the binding of any glue on any edge of that tile.  The transition function specifies, for each glue $g$ on a tile, a set of glues (along with the sides on which those glues are located) and an action, or \emph{signal} which is \emph{fired} by $g$'s binding, for each glue in the set.  The actions specified may be to: 1. turn the glue \on/ (only valid if it is currently \latent/), or 2. turn the glue \off/ (valid if it is currently \on/ or \latent/).  This means that glues can only be \on/ once (although may remain so for an arbitrary amount of time or permanently), either by starting in that state or being switched \on/ from \latent/ (which we call \emph{activation}), and if they are ever switched to \off/ (called \emph{deactivation}) then no further transitions are allowed for that glue.  This essentially provides a single ``use'' of a glue (and the signal sent by its binding).  Note that turning a glue \off/ breaks any bond that that glue may have formed with a neighboring tile. Also, since tile edges can have multiple active glues, when tile edges with multiple glues are adjacent, it is assumed that all matching glues in the \on/ state bind (for a total binding strength equal to the sum of the strengths of the individually bound glues).  The transition function defined for a tile type is allowed a unique set of output actions for the binding event of each glue along its edges, meaning that the binding of any particular glue on a tile's edge can initiate a set of actions to turn an arbitrary set of the glues on the sides of the same tile \on/ or \off/.

As the STAM is an extension of the 2HAM, binding and breaking can occur between tiles contained in pairs of arbitrarily sized supertiles.  It was designed to model physical mechanisms which implement the transition functions of tiles but are arbitrarily slower or faster than the average rates of (super)tile attachments and detachments.  Therefore, rather than immediately enacting the outputs of transition functions, each output action is put into a set of ``pending actions'' which includes all actions which have not yet been enacted for that glue (since it is technically possible for more than one action to have been initiated, but not yet enacted, for a particular glue). Any event can be randomly selected from the set, regardless of the order of arrival in the set, and the ordering of either selecting some action from the set or the combination of two supertiles is also completely arbitrary.  This provides fully asynchronous timing between the initiation, or firing, of signals (i.e. the execution of the transition function which puts them in the pending set) and their execution (i.e. the changing of the state of the target glue), as an arbitrary number of supertile binding events may occur before any signal is executed from the pending set, and vice versa.  %

An STAM system consists of a set of tiles and a temperature value.  To define what is producible from such a system, we use a recursive definition of producible assemblies which starts with the initial tiles and then contains any supertiles which can be formed by doing the following to any producible assembly:  1. executing any entry from the pending actions of any one glue within a tile within that supertile (and then that action is removed from the pending set), 2. binding with another supertile if they are able to form a $\tau$-stable supertile, or 3. breaking into $2$ separate supertiles along a cut whose total strength is $< \tau$.

\subsection{Discrete Self-Similar Fractals}

We define $\mathbb{N}_g$ as the subset $\{0,1,...,g-1\}$ of $\mathbb{N}$, and
if $A,B \subseteq \mathbb{N}^2$, then $A+(x,y)B = \{(x_a,y_a) + (x\cdot x_b,y \cdot y_b) | (x_a,y_a) \in A$ and $(x_b,y_b) \in B\}$.  We then define discrete self-similar fractals as follows:

We say that $\bf{X} \subset \mathbb{N}^2$ is a \emph{discrete self-similar fractal} (or \emph{dssf} for short) if there
exists a set $\{(0,0)\} \subset G \subset \mathbb{N}^2$ where $G$ is connected, $w_G = \max(\{x|(x,y) \in G\})+1$, $h_G = \max(\{y|(x,y)\in G\})+1$, $w_G$ and $h_G > 1$, and $G \subsetneq \mathbb{N}_{w_G} \times \mathbb{N}_{h_G}$, such that $\bf{X}$ $ = \bigcup^\infty_{i=1} X_i$, where $X_i$, the $i^{th}$ stage of $\bf{X}$, is defined by $X_1 = G$ and $X_{i+1} = X_i + (w_G^i,h_G^i)G$.  We say that $G$ is the generator of $\bf{X}$.  Essentially, the generator is a connected set of points in $\mathbb{N}^2$ containing $(0,0)$, points at both $x>0$ and $y>0$, and is not a completely filled rectangle.  Every stage after the generator is composed of copies of the previous stage arranged in the same pattern as the generator.

A connected discrete self-similar fractal is one in which every component is connected in every stage, i.e. there is only one connected component in the grid graph formed by the points of the shape.

Figure~\ref{fig:triangle} shows, as an example, the first $4$ stages of the discrete self-similar fractal known as the Sierpinski triangle.  In this example, $G = \{(0,0),(1,0),(0,1)\}$.

\begin{figure}[htp]
\begin{center}
\includegraphics[width=2.0in]{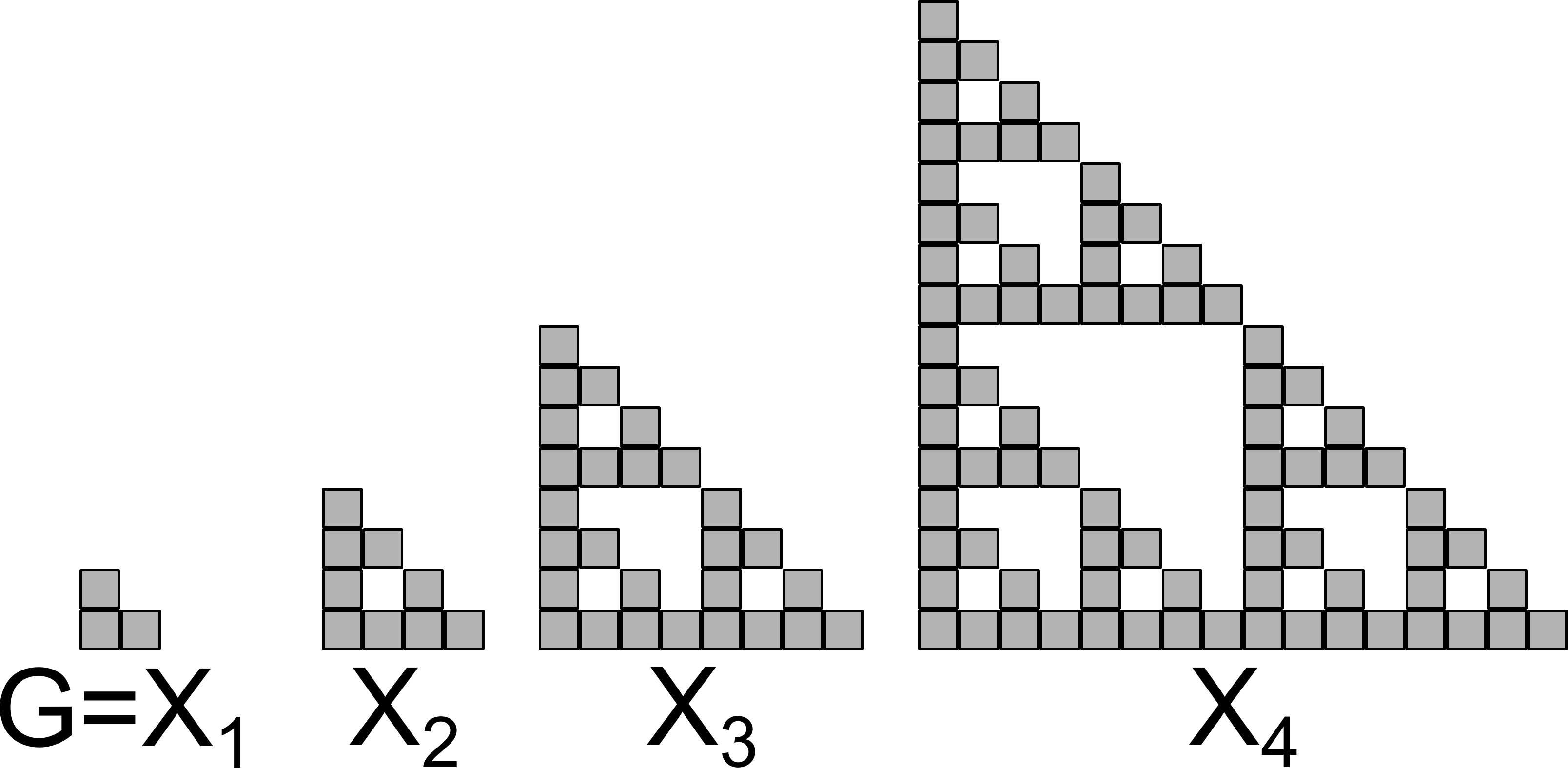}
\caption{Example discrete self-similar fractal:  the first $4$ stages of the Sierpinski triangle}
\label{fig:triangle}
\end{center}
\end{figure}

We also define a subset of connected discrete self-similar fractals which we call \emph{singly-concave} as containing any connected discrete self-similar fractal $\calF$ such that, if stage 2 of $\calF$, $\calF_2$, is contained within a bounding box, on the straight line path $p$ from any point on the bounding box into the first location adjacent to $\calF_2$, the set of all edges along which $p$ is adjacent to $\calF_2$ are contiguous. Intuitively, singly-concave fractals do not have concavities which occur within the ``sides'' of other concavities.  Examples can be seen in Figure~\ref{fig:singly-concave-examples}.

\begin{figure}[htp]
\centering
\includegraphics[width=3.0in]{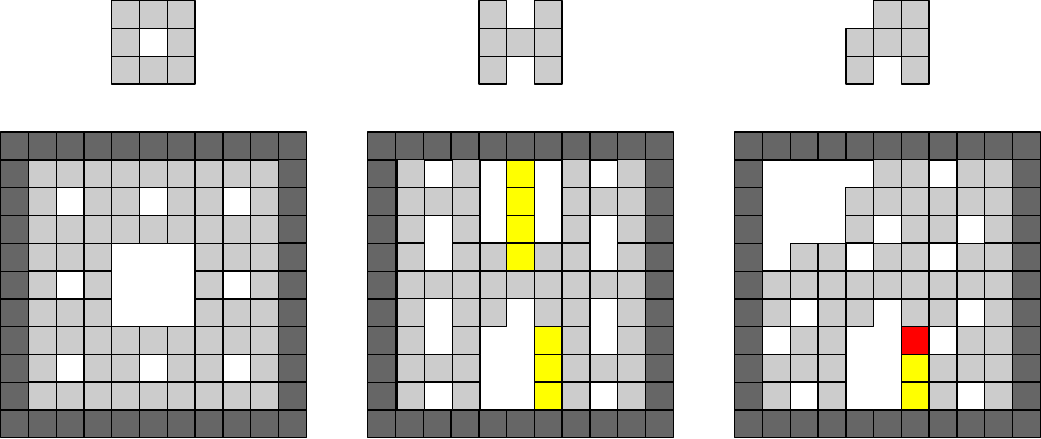}
\caption{Three example generators, and their associated second stages contained within bounding boxes (darker grey).  The left two are singly-concave because any paths (examples in yellow) from the bounding boxes to the edge of their second stage tiles meet only contiguous sets of edges of the fractal.  The rightmost isn't because the red and yellow tiles of one path meet two non-contiguous sets.}
\label{fig:singly-concave-examples}
\end{figure}

\def\on/{{\texttt{on}}}
\def\off/{{\texttt{off}}}
\def\latent/{{\texttt{latent}}}

\section{Strict Self-Assembly of the Discrete Sierpinski Triangle}

\begin{theorem}\label{thm:triangle}
There exists an STAM system $\calT_\Delta$ = (T, 1) such that $\calT_\Delta$ has exactly one infinite terminal supertile $\alpha_\Delta$, and $\dom(\alpha_\Delta) = S_\Delta$, i.e. is exactly the discrete Sierpinski triangle, and for all $\alpha \in \termasm{T}$ such that $\alpha \not= \alpha_\Delta, |\dom(\alpha)| \le 4$.
\end{theorem}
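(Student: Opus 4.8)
The plan is to prove the theorem constructively: I would exhibit an explicit tile set $T$ and argue that the temperature-$1$ STAM system $\calT_\Delta=(T,1)$ produces the discrete Sierpinski triangle and nothing else of unbounded size. Since strict self-assembly of $S_\Delta$ is impossible in both the aTAM and the 2HAM, the construction must lean entirely on the two extra capabilities the STAM affords: signal-passing (single-use glue activation and deactivation driven by binding events) and hierarchical combination of completed subassemblies, together with the geometric hindrance supplied by the holes of the fractal. The guiding idea is to build $S_\Delta$ stage by stage, where each producible assembly representing stage $i$ presents a specific set of active connector glues on its exterior, three such copies snap together into the L-shaped stage $i+1$, and the binding events fire signals that turn \off/ the now-internal connectors and expose the connectors appropriate for the next level.

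First I would fix the generator $G=\{(0,0),(1,0),(0,1)\}$ and design a small group of tile types that assemble into stage $1$ (an L-tromino) with a designated active-glue geometry on its outer faces. Next I would specify, for an arbitrary stage-$i$ assembly, where its active connector glues sit along the boundary and which signals each connector's binding fires. The combination rule is then that a bottom-left copy binds a bottom-right copy along the single shared edge where the two stages touch, and a top-left copy binds along its single shared edge with the bottom-left copy, each attachment at temperature $1$ requiring only one matching glue. The firing of those bonds must both (i) deactivate the glues that have just become interior and (ii) turn \on/ the \latent/ glues that reproduce the stage-$i$ connector pattern one scale larger, so that the induction closes. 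Crucially, the only relative placements of two stage-$i$ copies that are simultaneously glue-compatible and overlap-free must be the intended ones, because the jagged boundary of a Sierpinski-triangle stage geometrically blocks every competing alignment; this is where geometric hindrance does the work that cooperativity cannot do at temperature $1$.

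With the construction in hand, correctness splits into two arguments. For producibility of $\alpha_\Delta$, I would induct on $i$: assuming three copies of the stage-$i$ assembly are producible with the claimed connector pattern, show that the intended binding-and-signalling sequence yields the stage-$(i+1)$ assembly with the correct pattern, and that $\dom$ of the limit is exactly $S_\Delta=\bigcup_{i=1}^{\infty} X_i$. For the uniqueness and junk-bound claims, I would track the finite state of every glue (\latent/, \on/, or \off/) through all legal interleavings of bindings, signal firings, and breaks, and argue that any producible supertile either lies on the path toward $\alpha_\Delta$ (hence is not terminal, since it can still grow) or is one of finitely many small dead ends of at most $4$ tiles, confirming that $\alpha_\Delta$ is the unique infinite member of $\termasm{T}$.

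The hard part will be the terminality analysis under the STAM's fully asynchronous timing. Because a signal placed in the pending set may execute before or after arbitrarily many further binding and breaking events, I cannot assume connectors are reconfigured instantly upon stage completion; I must show the system is robust to every interleaving, so that no premature or delayed signal lets a copy bind in a wrong orientation and lock in, and so that a partially reconfigured assembly never becomes a large stable terminal structure. At temperature $1$ this is delicate, since a single stray glue suffices to attach, so the argument hinges on showing that every such stray attachment is either blocked by geometric hindrance or remains detachable (its cut has strength $<1$ after the relevant deactivation), thereby keeping all off-path terminal assemblies bounded by $4$ tiles.
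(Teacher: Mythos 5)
Your high-level skeleton --- stage-by-stage hierarchical growth, connector glues exposed only on completed stages, binding events firing signals that retire used glues and expose next-level connectors, then an induction plus an interleaving analysis --- matches the paper's strategy. But there is a fatal gap at exactly the step you lean on hardest, namely the claim that ``the jagged boundary of a Sierpinski-triangle stage geometrically blocks every competing alignment.'' You bind bare stage-$i$ copies directly along the single shared tile edge where two substages touch, and for the Sierpinski triangle that interface provides \emph{no} geometric discrimination at all. The three substages of stage $i+1$ meet only at corner adjacencies: the bottom-left copy's tile at $(2^i-1,0)$ is edge-adjacent to the bottom-right copy's tile at $(2^i,0)$, and the column $x=2^i-1$ of the left copy contains no tile other than that corner. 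Consequently a stage-$j$ assembly, for any $j \neq i$, can bind at that same glue with no overlap and no steric conflict whatsoever: the two shapes touch at exactly one edge and interpenetrate nowhere. Since the tile set is finite, the connector glues of all stages must carry the same labels (your induction explicitly reproduces the same connector pattern at every scale), so at $\tau = 1$ this mixed-stage attachment is stable and irreversible, producing malformed supertiles that destroy both the uniqueness of $\alpha_\Delta$ and the size-$4$ junk bound. No amount of care in the asynchrony/terminality analysis can repair this; the defect is in the geometry of the attachment interface itself.

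The missing idea --- and the heart of the paper's construction --- is that the discriminating geometry must be \emph{manufactured}, not inherited from the fractal. After a stage completes and an initiator tile nondeterministically differentiates it into one of three substage roles, sacrificial ``filler'' tiles grow around it so that the sides to be joined become flat except for a one-tile protrusion (a \emph{tooth}) on one assembly and a one-tile cavity (a \emph{gap}, policed by a blocker tile) on its partner, placed so that the tooth seats in the gap exactly when both assemblies represent the same stage; any size mismatch makes the flat, filled faces collide before the connection glue can bind. Only after this geometric handshake is in place is the connecting glue activated, and after combination the filler and blocker tiles detach in pieces of size at most $4$ --- which is precisely where the theorem's junk bound comes from. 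Your scheme produces essentially no junk, which is a symptom of the missing mechanism rather than an advantage. A secondary, related gap: deactivating the now-invalid connectors on the far corners of the combined copies (e.g.\ the north connector of the bottom-right copy, which lies on the exterior of stage $i+1$ but is no longer a legal attachment site) requires signal cascades across entire rows or columns of the assembly, and since STAM signals are single-use you must verify --- as the paper does explicitly with its analysis of which base tiles carry signals --- that no tile is ever asked to carry such a cascade more than once over the infinitely many stages.
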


\begin{proof}
We prove Theorem~\ref{thm:triangle} by construction, and thus present an STAM tile assembly system $\calT_\Delta$ and show that it strictly self assembles $S_\Delta$, while any assemblies which detach from the assembly (or otherwise form) during its growth (which we call ``junk'' assemblies) all become terminal at sizes $\le 4$.  At a high level, $\calT_\Delta$ uses 2HAM principles (i.e. combinations of large supertiles) to combine a northern, southern, and western version of each stage $n$ for $1 < n < \infty$ through geometric matching, to produce stage $n + 1$.
\ifnaco
We now provide a functional overview.  For the full set of tiles and additional technical details about their functioning, please refer to \cite{STAM-fractals-arxiv}.
\else
In this section we provide a functional overview of the construction, and full technical details and tile type definitions can be found in Section~\ref{sec:tricon-details}.
\fi

From a ``hard-coded'' start at stage two (i.e. base tiles initially combine to form this stage before allowing formation of subassemblies), each stage $n$ must completely grow before the subassemblies that make that stage are able to combine and form the next stage, $n + 1$.  Only when an assembly representing stage $n$ is completely built can an \textit{initiator} tile attach to it and turn \on/ a specific glue that allows for nondeterministic binding of one of three tiles, which then tells that copy of stage $n$ to become one of three substages for stage $n + 1$.  By definition of the Sierpinski triangle, there are three substages of each stage that correspond to the three points in the generator, i.e. $(0,0)$, $(1,0)$, and $(0,1)$.  The nondeterministic binding of one of the initiator types initiates an assembly sequence which grows either a \textit{tooth} or \textit{gap} on the assembly to which it is attached.  A tooth is a one-tile protrusion from a flat surface, and a gap is a one-tile cavity in a flat surface (see Figure~\ref{fig:stage1} for examples).  One produces a southern tooth and becomes the northwest portion of $S_\Delta$ stage \textit{n} + 1, and a second produces a western tooth and becomes the southeast portion of $S_\Delta$ stage \textit{n} + 1; the third goes through two main phases, first filling in along the diagonal to roughly make a square with a gap in the north face and then opening a gap in the east after connecting to the northern piece to allow its connection to the eastern piece.  We will call the substage assemblies $S_{\Delta{s}}$, $S_{\Delta{w}}$, and $S_{\Delta{u}}$, respectively, and the tile sets (which are subsets of $T$) that form them $T_{\Delta{s}}$, $T_{\Delta{w}}$, and $T_{\Delta{u}}$.  Note that the glues that allow connections of the substages are only activated after the necessary geometry is in place to verify the sizes of the complementary pieces; after the substage connections, all tiles not within the domain of stage $n$ fall off of the assembly.

\begin{figure}[htp]
\centering
\includegraphics[width=\linewidth]{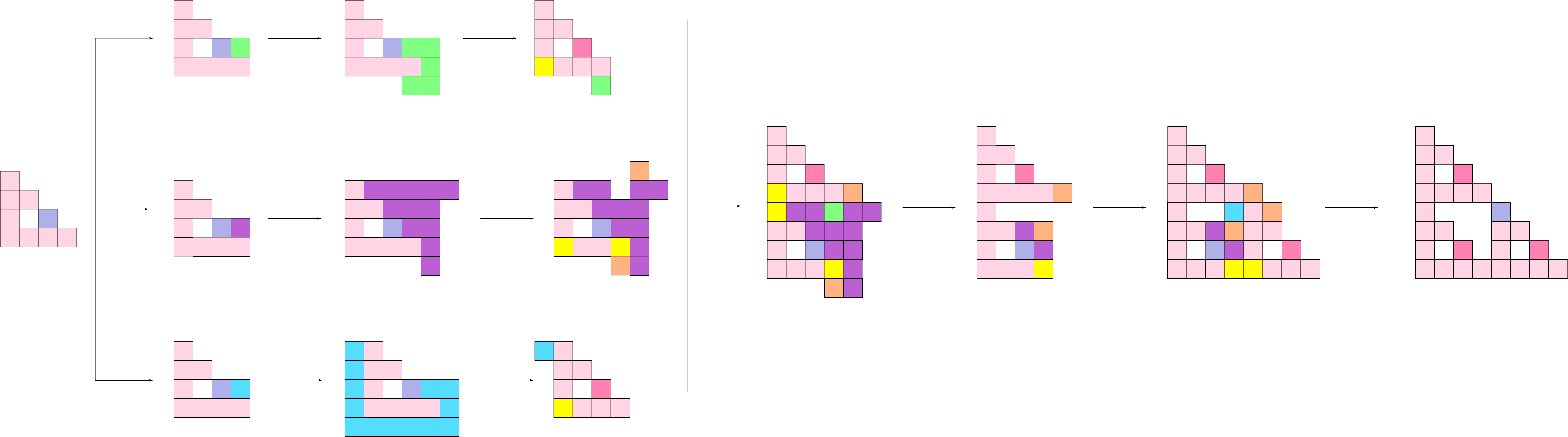}
\caption{High-level formation process of the stage of $S_\Delta$ immediately following the initial stage two formation.  From top to bottom are shown $S_{\Delta{s}}$, $S_{\Delta{u}}$, and $S_{\Delta{w}}$.}
\label{fig:stage1}
\end{figure}

As depicted in Figure~\ref{fig:stage1}, $S_{\Delta{s}}$ and $S_{\Delta{u}}$ are the first subassemblies to combine with attachment points on the southwest and northwest corners of their respective assemblies.  The southern tooth of $S_{\Delta{s}}$, depicted in green, aligns with the slot created in the $S_{\Delta{u}}$ assembly, shown in purple.  The two assemblies can only align when they are of the proper size due to the orange blocker tile located to the immediate right of the $S_{\Delta{u}}$ slot. $S_{\Delta{u}}$ cannot decay appropriately (i.e. cause ``unwanted'' tiles to fall off) until this blocker tile is in place; only after the blocker tile binds does a series of glues activate that result in the removal of the gap tile.  After a sequence of detachments removes the uppermost row and eastmost column of the resulting assembly, a second blocker tile attaches to the southeastern corner of the $S_{\Delta{s}}$ assembly, finishing the decay and enabling the alignment of the $S_{\Delta{w}}$ assembly of the same stage.  A final series of decay removes all other tiles that do not fit with the formation of stage $n$ of the Sierpinski triangle.

\begin{figure}[htp]
\centering
\includegraphics[width=\linewidth]{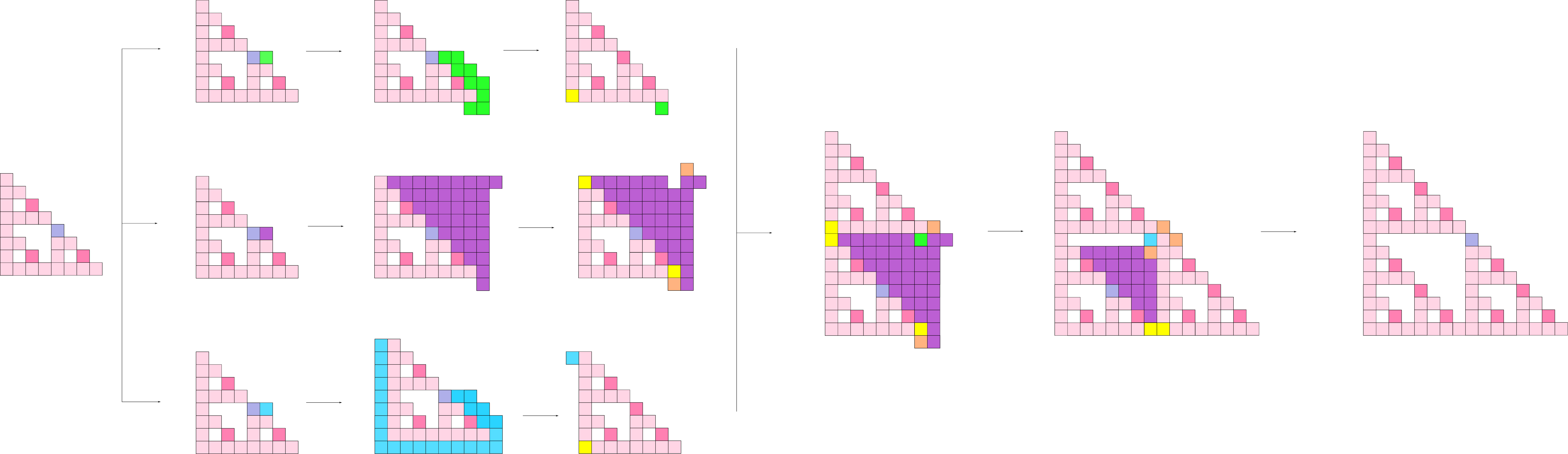}
\caption{High-level formation process of the fourth stage of $S_Δ$ following formation of stage three.  From top to bottom are shown $S_{\Delta{s}}$, $S_{\Delta{u}}$, and $S_{\Delta{w}}$.}
\label{fig:stage2}
\end{figure}

Assembly of substages and complete stages after the first combination depicted in Figure~\ref{fig:stage2} follow the same general pattern of creation and decay, with few notable differences.  The formation of the $S_\Delta$ subassemblies does require slightly more intricate systems of signals, largely to create the stair-step mechanism seen in $S_{\Delta{s}}$ and $S_{\Delta{w}}$.

\begin{figure}[htp]
\centering
\includegraphics[width=\linewidth]{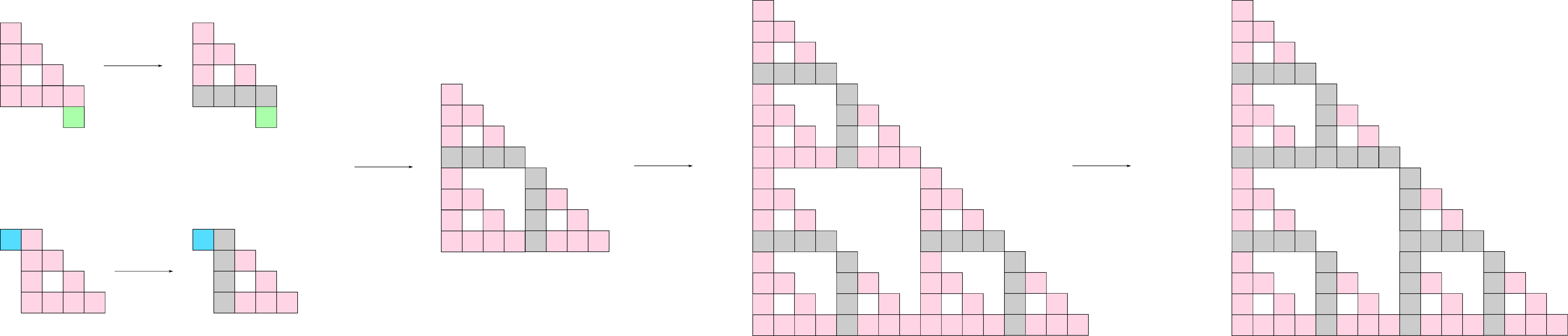}
\caption{Base tiles that carry either vertical or horizontal signals are depicted in gray; the location of the signals demonstrate that no tiles used to carry either vertical or horizontal signals are used in the same manner again, i.e. to pass signals during the formation of more than one stage.}
\label{fig:basesignals}
\end{figure}

Due to STAM properties that maintain that no tile can send its signals more than once, care has been taken to ensure that no signals sent through the tiles of the $\calT_\Delta$ tile set are used more than once.  As shown in Figure~\ref{fig:basesignals}, the two regions that send signals through the base tiles are never in a location to be used for the same horizontal or vertical signal paths more than once.

\begin{figure}[htp]
\centering
\includegraphics[width=\linewidth]{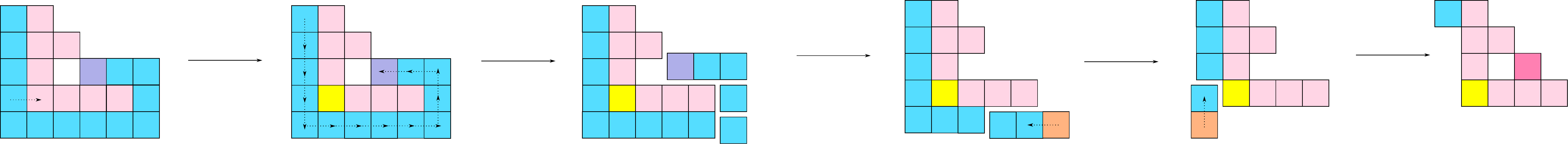}
\caption{Blocker tiles depicted in orange function alongside the blue tiles that make up the $T_{\Delta{w}}$ set to ensure that potential junk tiles do not negatively affect any further assembly.}
\label{fig:westdetach}
\end{figure}

Throughout the assembly process, junk tiles and subassemblies are continuously removed with the help of \emph{blocker} tiles.  Figure~\ref{fig:westdetach} displays this process for $S_{\Delta{w}}$.  Each junk assembly removes itself only when the appropriate signals have been passed through it and, for many assemblies, a corresponding blocker tile has attached.  This prevents active bonds that cannot be guaranteed deactivation within the asynchronous STAM model from potentially interfering with active constructions.  By binding a blocker tile, junk assemblies are created with no volatile perimeter glues. Blocker tiles also change the geometry of the junk assemblies to prevent any interference.

It is worth noting that some junk assemblies, particularly within the $T_{\Delta{u}}$ set, are capable of interacting with $\calT_\Delta$ subassemblies at various stages.  The glues that are capable of interaction on these assemblies, however, have already had their signals used and function like existing blocker and helper tiles.  This means that their interaction does not result in a negative impact on the assembly as a whole, instead assisting in the proper formation of $S_\Delta$ stages.

\begin{figure}[htp]
\centering
\includegraphics[width=\linewidth]{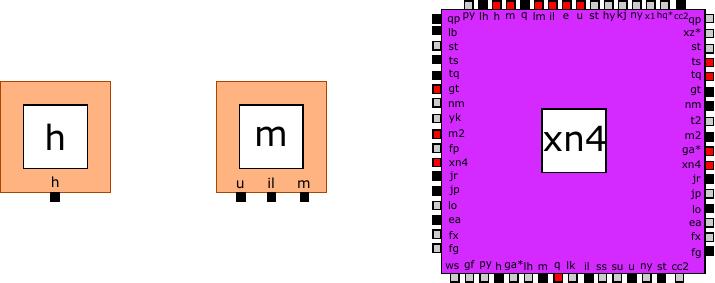}
\caption{Comparison of two helper tiles, \textit{h} and \textit{m}, alongside $T_{\Delta{u}}$ tile \textit{xn4}, after it has attached to a subassembly and then subsequently detached, displays similar glues that are \on/ (black) and \off/ (red).  All glues that are \on/ for the \textit{xn4} tile have either had their signals used, as is the case for the \textit{m}, \textit{il}, \textit{h}, and \textit{u} glues, or are not capable of interacting with subassemblies due to their counterparts turning \on/ in isolation.}
\label{fig:junk}
\end{figure}

Figure~\ref{fig:junk} depicts an example of the previously described scenario.  In this case, the \textit{xn4} tile functions like either the \textit{m} or \textit{h} tile, depending on which of its glues bind.  The only glues that remain exposed in junk assemblies are either capable of performing a similar function, or do not interact with subassembly formation due to their counterpart glues turning \on/ in isolation (i.e. a western counterpart for an eastern glue turns \on/ only when a tile has attached to the eastern face of the tile in question).

In this process in which three separate versions of any assembly at stage \textit{n} form and combine through the creation and alignment of teeth and gaps to ensure proper size integration to produce stage \textit{n} + 1, and as throughout the process of assembly the junk assemblies are detaching in constant sized pieces that will not interact with the assembly in a negative manner, the correct strict self assembly of the Sierpinski triangle at scale 1 is produced.

\ifnaco
Here we have provided an overview of the construction, but due to space constraints several technical details are omitted.  However, the full set of tile types, associated glues and signals, and more details and examples of the full assembly process can be found in \cite{STAM-fractals-arxiv}.
\else
In this section we have provided a high-level overview of the construction.  The next section contains technical details and the full tile set.
\fi

\end{proof}

\ifarxiv
\section{Technical Details for the Sierpinski Triangle Construction}\label{sec:tricon-details}

Here we provide the technical details for the construction of the discrete self-similar Sierpinski Triangle at temperature one. The tile sets for the assembly are provided below.

\begin{figure}[htp]
\centering
\includegraphics[width=\linewidth]{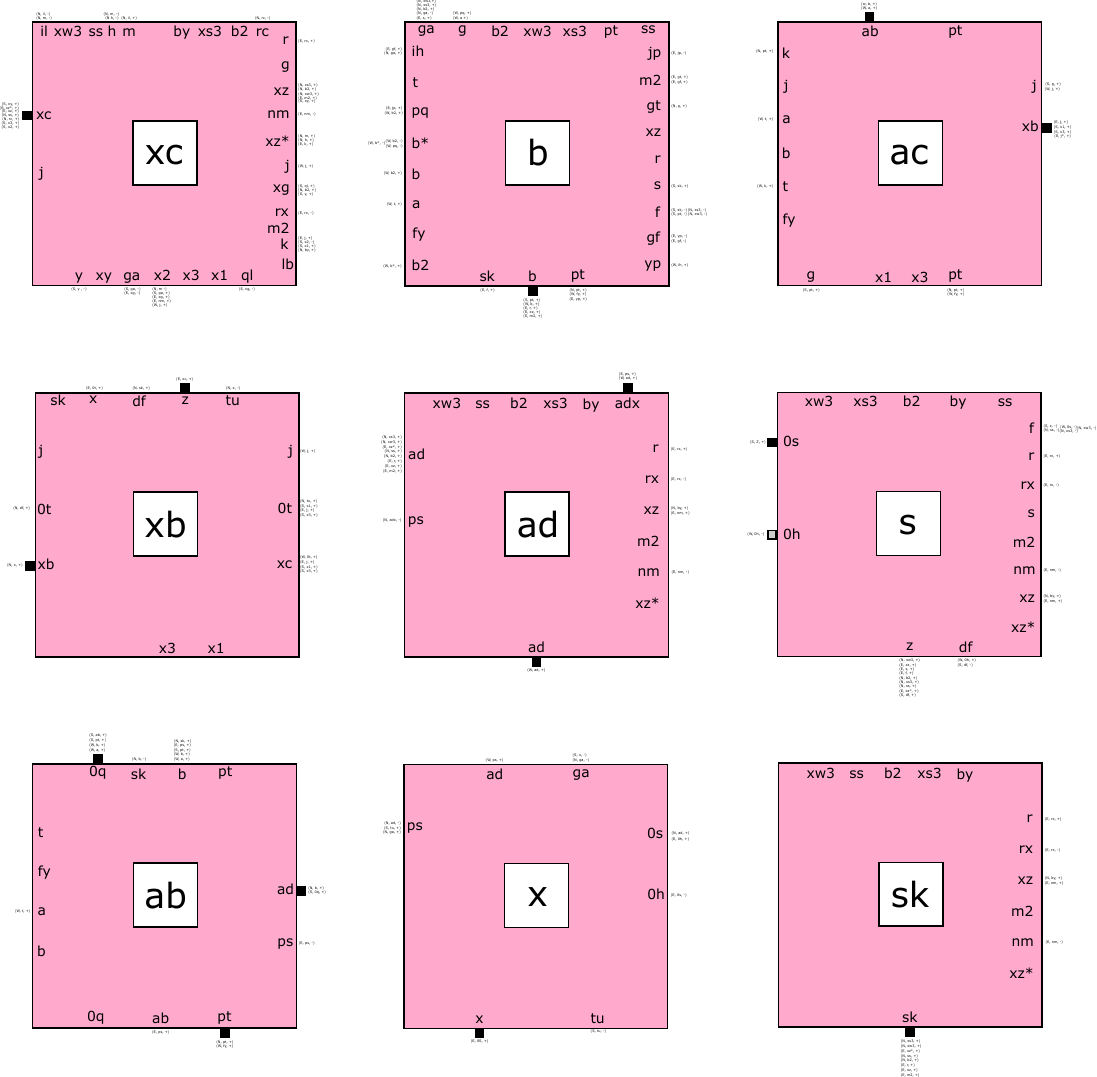}
\caption{The tiles that comprise the base of the Sierpinski triangle.  These are the tiles that form the triangle itself, not the tiles that assist in the formation and combination of subassemblies.}
\label{fig:basetile}
\end{figure}

\begin{figure}[htp]
\centering
\includegraphics[width=\linewidth]{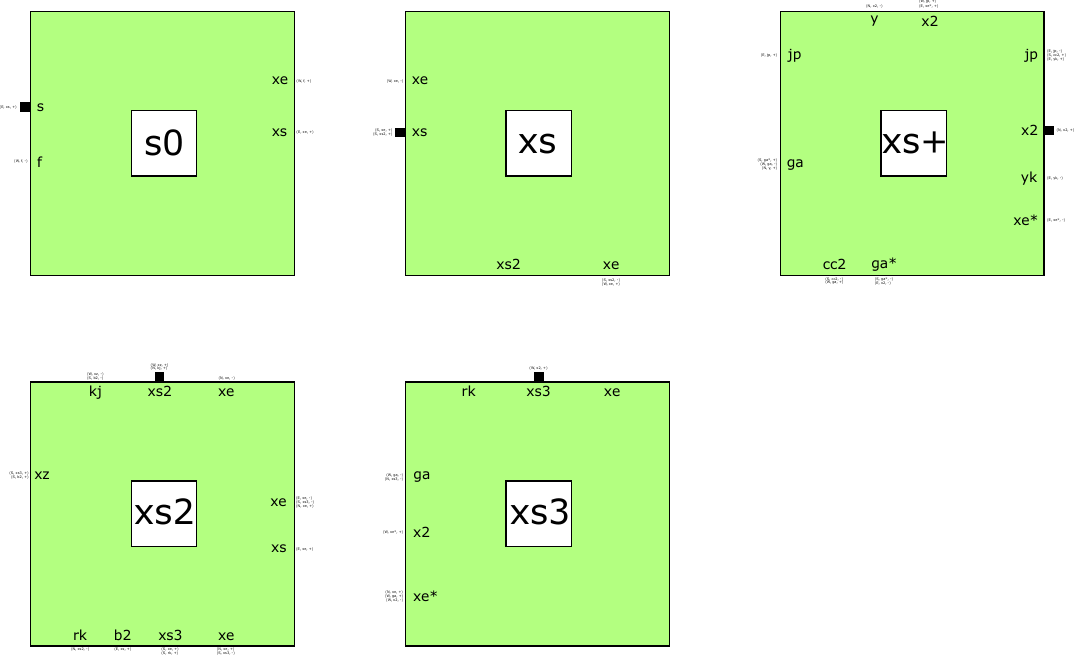}
\caption{The tiles that work to form the south tooth in the $S_{\Delta{s}}$ subassembly are shown here.}
\label{fig:southtile}
\end{figure}

\begin{figure}[htp]
\centering
\includegraphics[width=\linewidth]{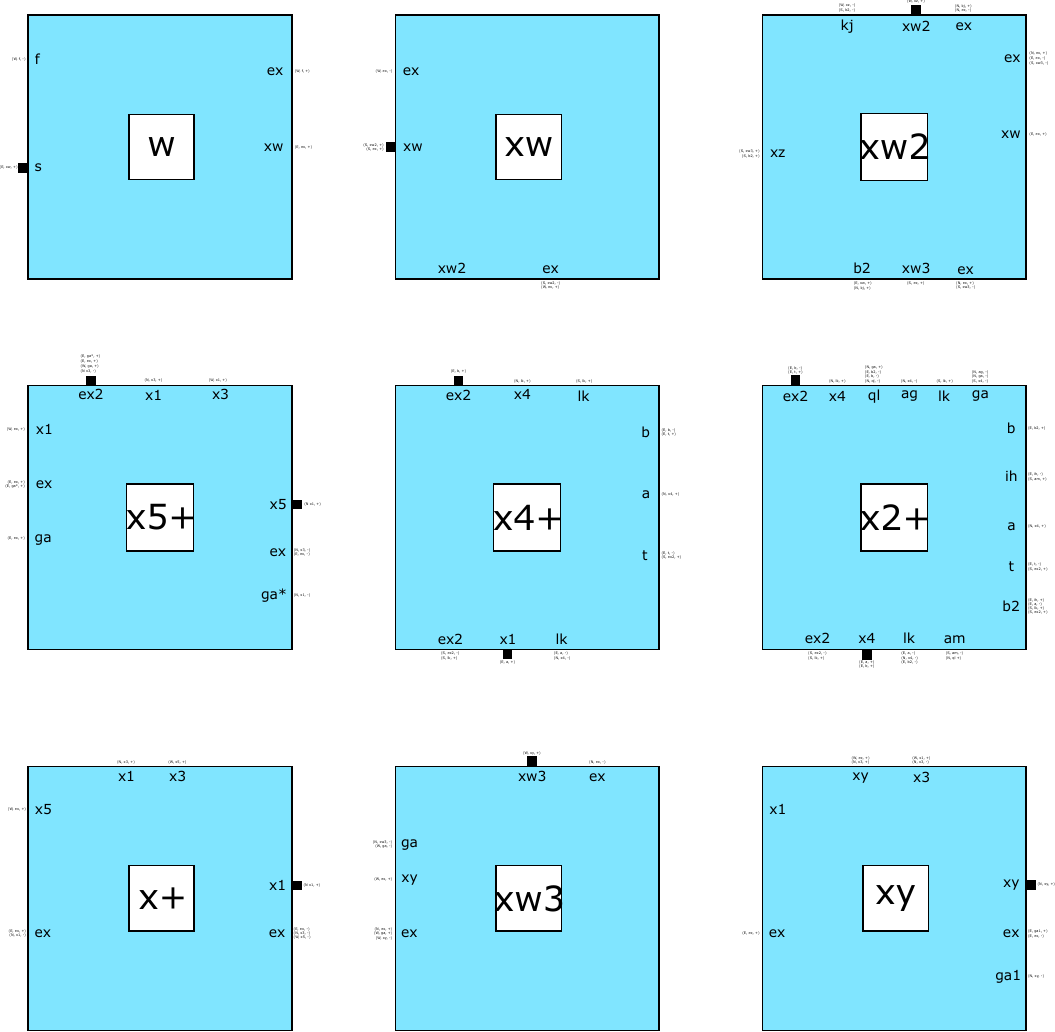}
\caption{The tiles that work to form the west tooth in the $S_{\Delta{s}}$ subassembly are shown here.}
\label{fig:westtile}
\end{figure}

\begin{figure}[htp]
\centering
\includegraphics[width=\linewidth]{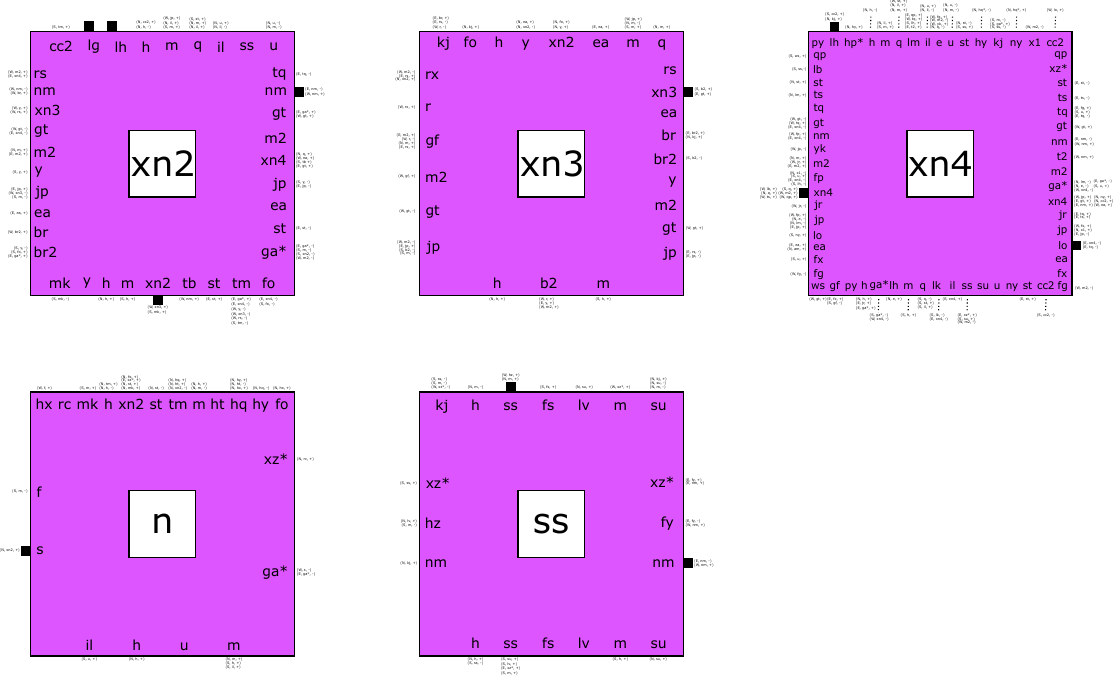}
\caption{The tiles that work to form the upper gap in the $S_{\Delta{s}}$ subassembly are shown here.}
\label{fig:uppertile}
\end{figure}

\begin{figure}[htp]
\centering
\includegraphics[width=\linewidth]{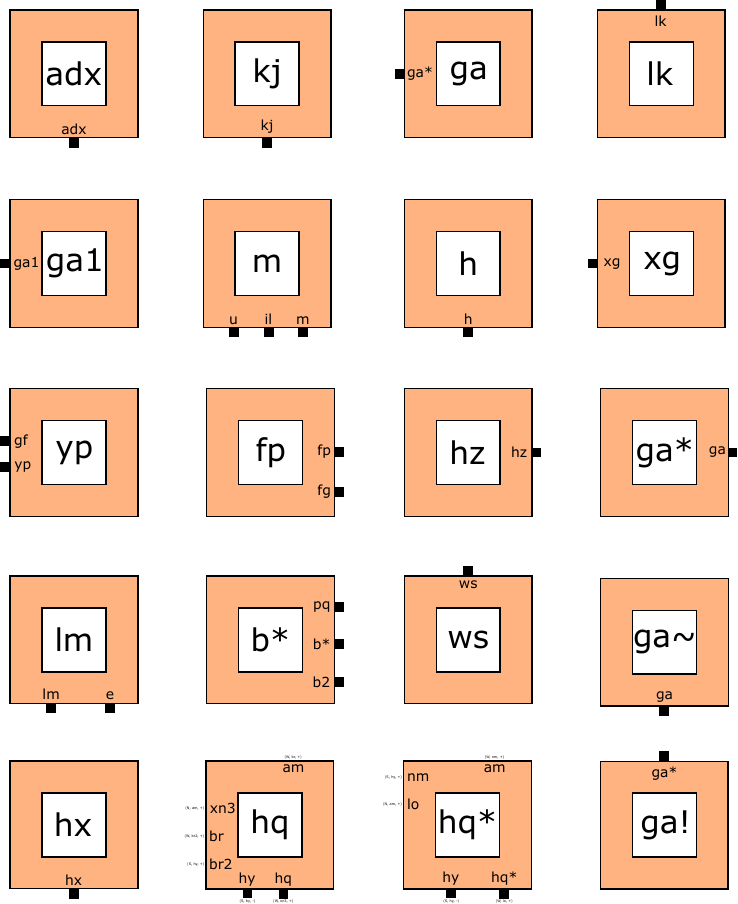}
\caption{The helper tiles shown here assist in the proper formation of subassemblies and correct combination of subassemblies.  They largely consist of blocker tiles, although some act as activator tiles for particular glues.}
\label{fig:helpertile}
\end{figure}

Helper tiles have been classified in this manner because they do not fit in the general geometry of the subassemblies, but they are essential in ensuring proper formation of $S_\Delta$.  The smaller size shown in the figures merely identifies the particular tiles as helper tiles and is not indicative of any difference in importance or function.  Many of these tiles, such as \textit{kj} and \textit{xg}, are blocker tiles that prevent the interaction of volatile glues, which cannot be guaranteed to turn \off/ before detachment from a subassembly, from interfering with further fractal growth.  Other tiles, such as \textit{m}, serve as activators for particular glues within assemblies.  These activators are external (i.e. not part of the subassembly tile set) to ensure that particular glues are only activated when exposed in a manner that cannot be guaranteed solely by tiles in subassembly tile sets.

\subsection{Formation of $S_\Delta$ Base}

Assembly of $S_\Delta$ begins with the binding of hard-coded base tiles to form the second stage of $S_\Delta$.

\begin{figure}[htp]
\centering
\includegraphics[width=\linewidth]{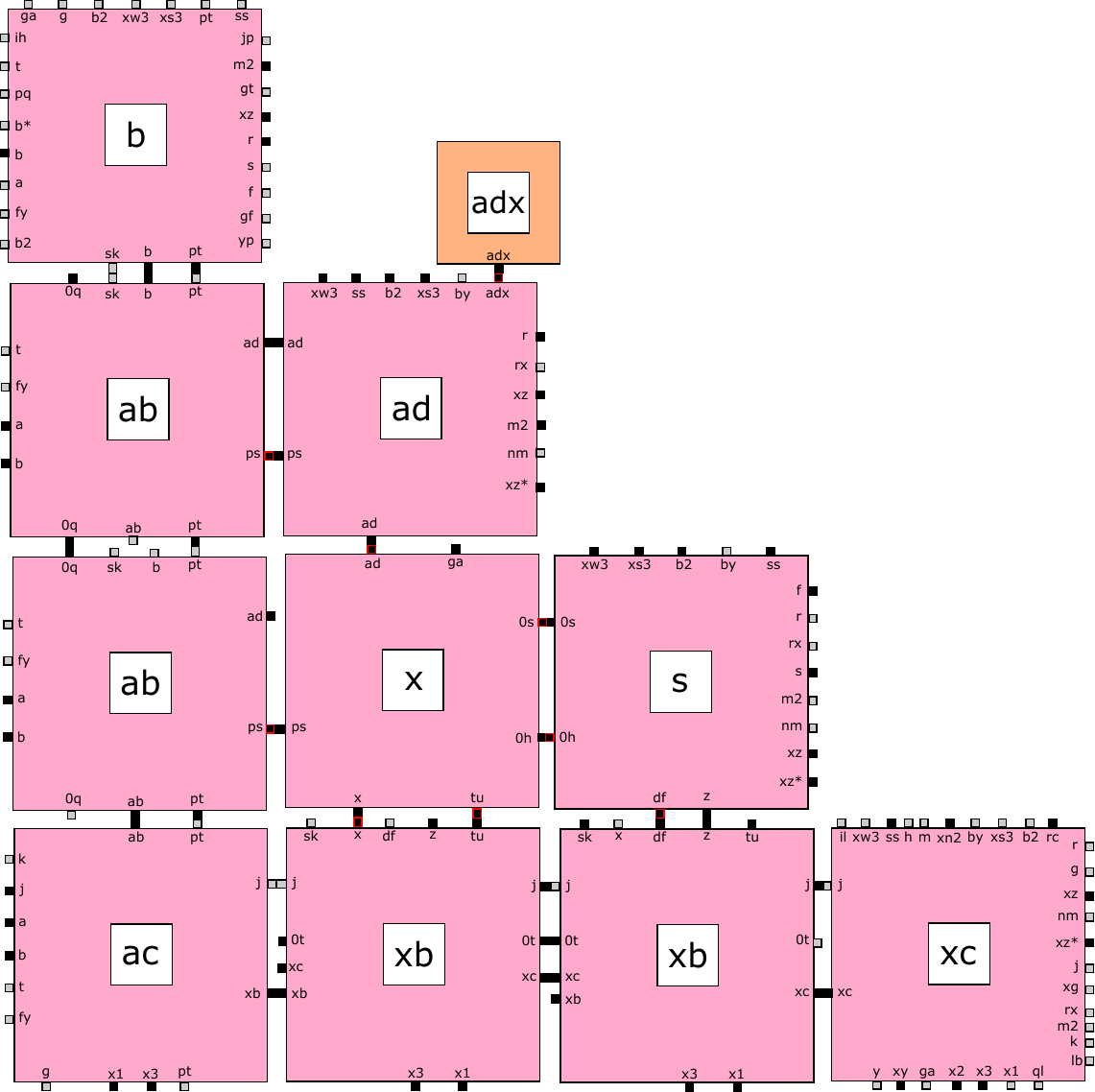}
\caption{The initial assembly of the second stage of $S_\Delta$ through the binding of hard-coded base tiles, before detachment.  Glues outlined in red are those for which their tiles have received deactivation signals but which may or may not yet have turned \off/ due to the asynchronous nature of the STAM.}
\label{fig:base1}
\end{figure}

The base tiles, shown in Figure~\ref{fig:basetile}, initially bind in a particular order that prevents interference from already formed assemblies and junk assemblies.  Key points in this binding pattern are described here and the overall assembly is displayed in Figure~\ref{fig:base1}.  To prevent rebinding of the \textit{s} tile after it detaches at the end of subassembly formation or combination, the \textit{x} tile cannot detach until its eastern \textit{0h} glue binds; this glue is turned \off/ on detached \textit{s} tiles, which will eventually turn of their western \textit{0s} glues.  To prevent rebinding of the \textit{b} tile after it detaches, the \textit{adx} tile must be in place.  Because detached \textit{b} tiles that have already been used are bound with at minimum one eastern tile, the \textit{adx} tile does not allow the junk assembly to interfere with base tile assembly.  Instead, new \textit{b} tiles are the only ones capable of joining.  Only after attachment can the base tiles turn \on/ their glues that interact with other portions of $\calT_\Delta$ to form subassemblies and expand $S_\Delta$.

\begin{figure}[htp]
\centering
\includegraphics[width=\linewidth]{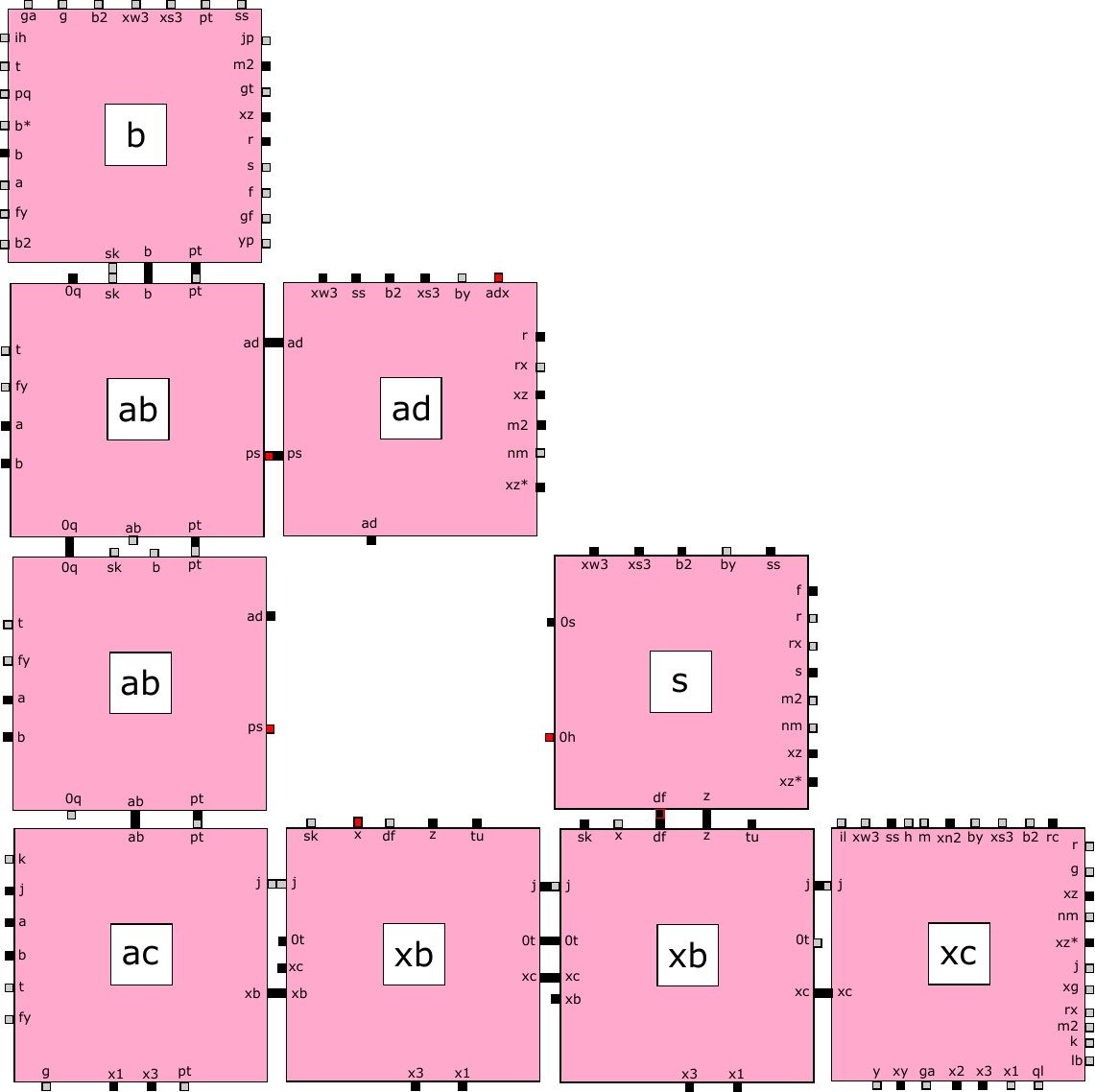}
\caption{The initial binding of base tiles after detachment of the \textit{adx} and \textit{x} tiles and before interaction with initiator tiles and formation of subassemblies.}
\label{fig:base2}
\end{figure}

After the detachment of the \textit{adx} and \textit{x} tiles, which play roles in the formation of the stage two $S_\Delta$ but do not comprise the base itself, the initial combination of base tiles of $S_\Delta$ at stage two is ready to begin formation of subassemblies.  At this point, depicted in Figure~\ref{fig:base2}, all glues necessary to form any of the three substages are \on/ and an initiator tile can bind with the initiator point (i.e. the glue \textit{s}).

Due to the nature of the STAM, subassemblies of various sizes and types are grown simultaneously.  Here, we will describe the three subassemblies $S_{\Delta{s}}$, $S_{\Delta{w}}$, and $S_{\Delta{u}}$ separately, in no relation to the order that they occur.

\subsection{Formation of the $S_{\Delta{s}}$ Subassembly}

Formation of the south tooth and the $S_{\Delta{s}}$ assembly consists of a series of tiles that wrap around the southeast portion of the base assembly before terminating at the southern face of the southeasternmost tile and beginning the series of detachment.  At larger stages, a ``stair-step'' pattern is used to move around the southeastern portion of the hypotenuse of $S_\Delta$.

\begin{figure}[htp]
\centering
\includegraphics[width=\linewidth]{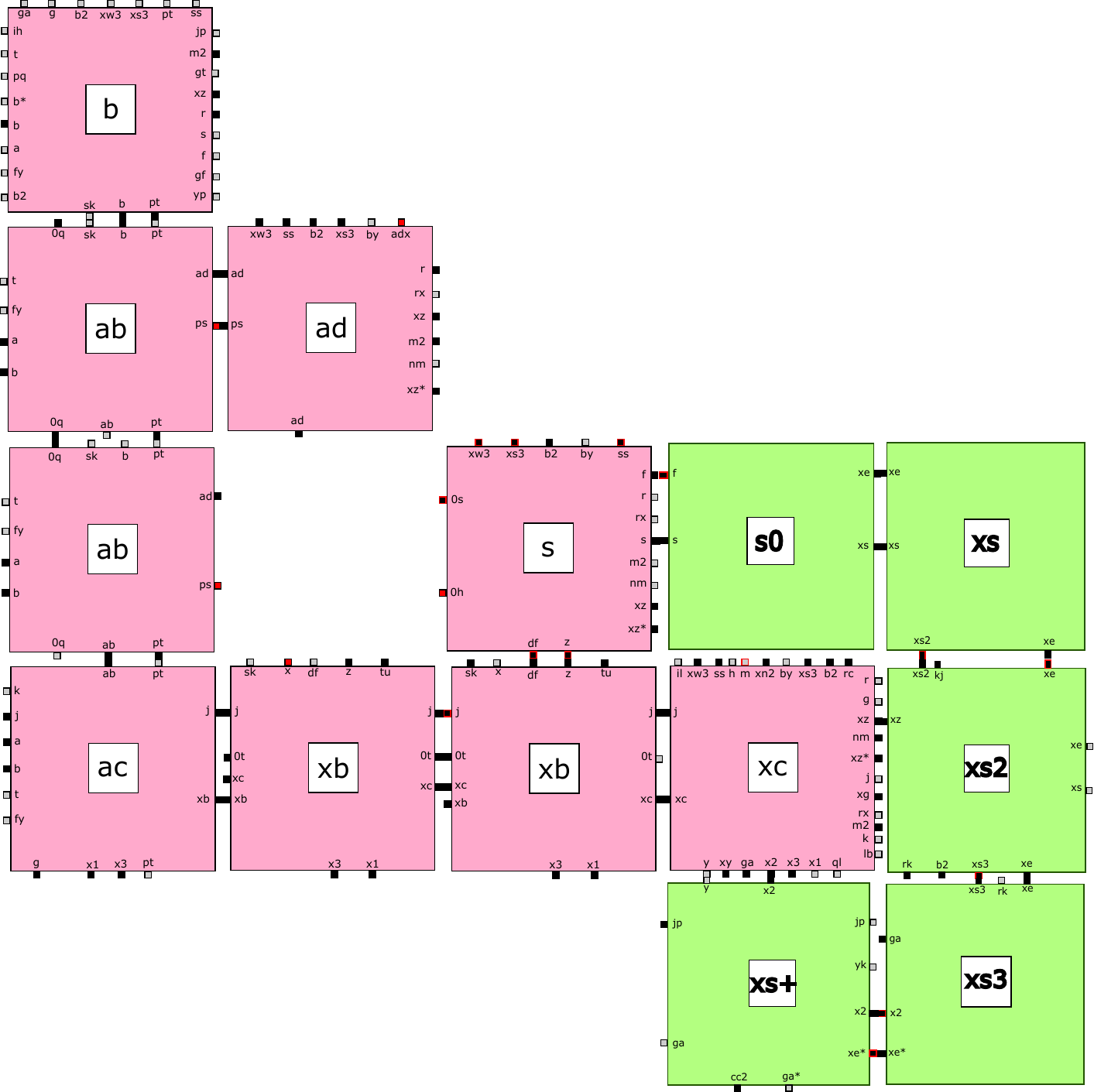}
\caption{The $S_{\Delta{s}}$ subassembly, at the first stage after initial base tile binding, after wrapping around the southeastern portion of the base tiles but before tile detachment.}
\label{fig:south1-1}
\end{figure}

South tooth assembly begins with the binding of initiator tile \textit{s0} to the base assembly.  This turns \on/ an eastern glue, which enables binding of the next tile, that then presents a southern face.  The tile that binds to this southern face senses whether it is next to a hypotenuse tile or the southeastern corner tile; at the stage depicted in Figure~\ref{fig:south1-1},it is next to the corner tile and therefore activates a southern glue.  If the \textit{xs2} tile was not next to the corner tile, as shown in Figure~\ref{fig:south2stair}, a series of alternating \textit{xs2} and \textit{xs} tiles forms the stair-step pattern.  Once the tiles reach the southeastern corner, the \textit{xs3} is added to open a western face and allow attachment of the \textit{xs+} tile.  This tile can only bind to the western glue activated on the \textit{xs3} tile and can therefore only attach under the southeastern corner tile.  Once the \textit{xs+} tile has bound, detachment can begin.

\begin{figure}[htp]
\centering
\includegraphics[width=\linewidth]{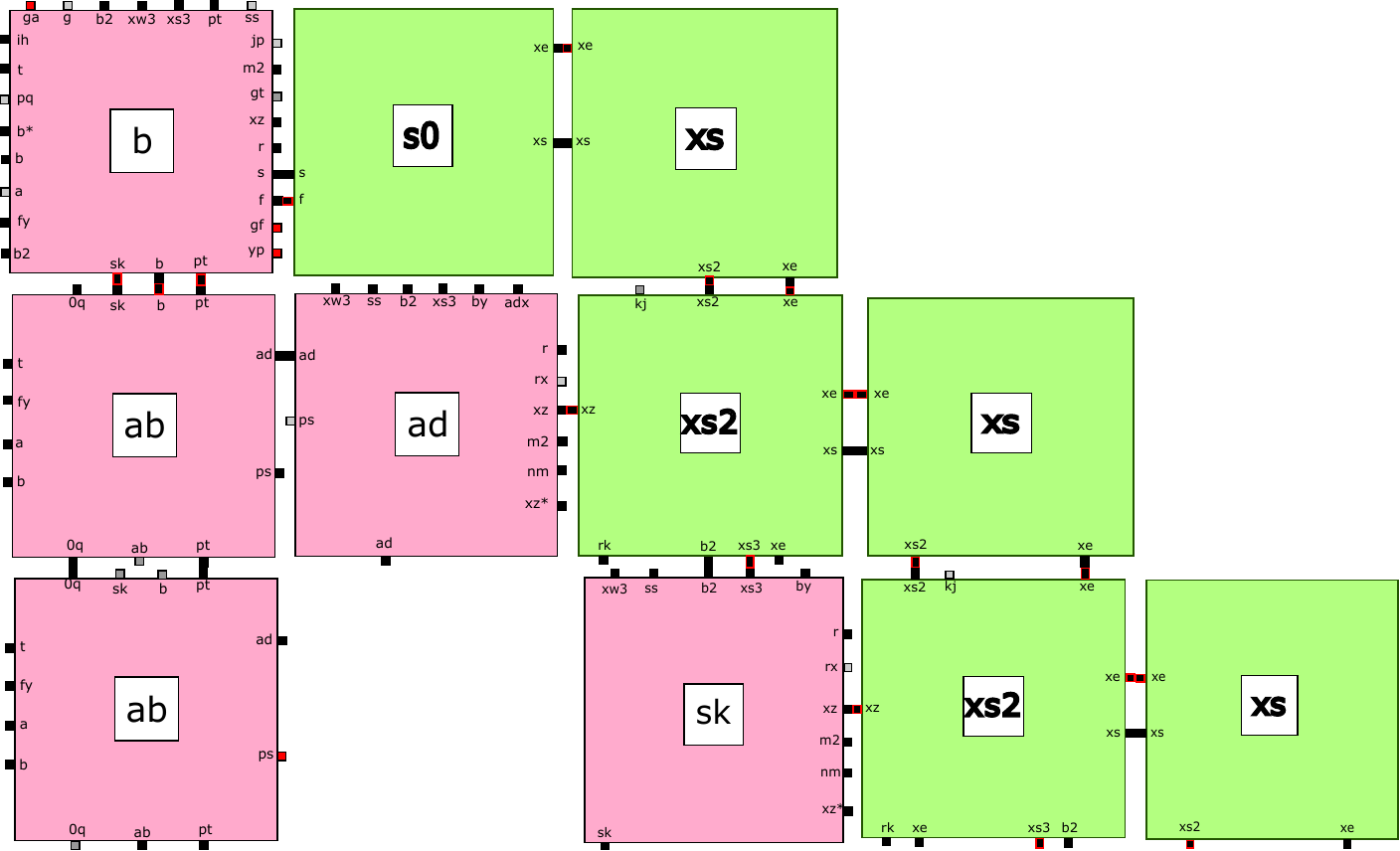}
\caption{Depiction of the stair-step pattern that forms at larger subassembly stages for $S_{\Delta{s}}$.}
\label{fig:south2stair}
\end{figure}

Stair-steps form to maneuver around the hypotenuse of the $S_\Delta$ base assembly.  General assembly of the stair-steps begin with the addition of a tile to the southern face of the tile to the right of the initiator tile (such as \textit{xs2} below the \textit{xs} tile in Figure~\ref{fig:south2stair}).  If this added tile senses that it is next to a corner of the hypotenuse, it opens an eastern glue that allows the binding of another tile that then opens a southern glue.  The process then begins again, continuing until the stair-steps reach the southeastern \textit{xc} corner tile.  At this point, different glues on the eastern face of the base tile turn \on/ a southern glue for the \textit{xs2} or similar tile, preventing further eastern expansion and causing the growing assembly to turn the corner and continue growing.

\begin{figure}[htp]
\centering
\includegraphics[width=\linewidth]{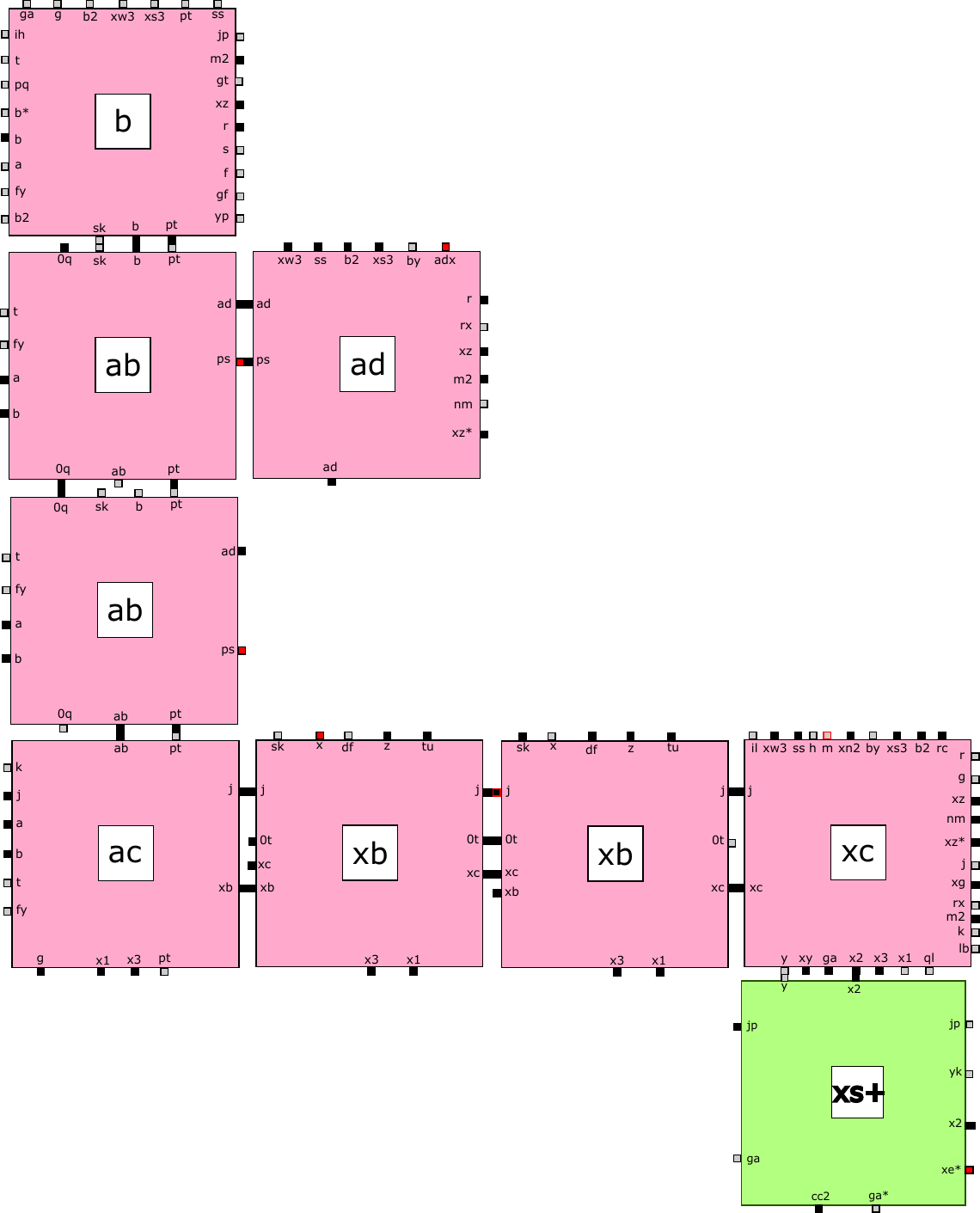}
\caption{The $S_{\Delta{s}}$ subassembly after detachment of all $\calT_{\Delta{s}}$ tiles apart from \textit{xs+}, which serves as the south tooth.}
\label{fig:south1-2}
\end{figure}

Detachment of the $\calT_{\Delta{s}}$ tiles begins with an exit signal that is sent through all attached $\calT_{\Delta{s}}$ tiles, back to the initiator tile and through to the base tile that presented the initiator glue.  Removal of this base tile is necessary to prevent interference with other assemblies — otherwise the \on/ initiator glue and the initiator tile would both have to simultaneously deactivate the same glue, something that cannot be guaranteed in the STAM due to its asynchronous nature (i.e. both tiles containing those glues could receive signals to deactivate them, but one may turn \off/ before the other, allowing detachment while the other is still temporarily \on/).  Therefore, the base tile, initiator tile, and the tile to the immediate right of the initiator tile are removed in a size 3 junk assembly.  The \textit{xs3} tile detaches with an active north face glue that could bind to other active $S_{\Delta{s}}$ subassemblies, but helper tiles prevent permanent attachment, as explained later.  The \textit{xs2} tile, and therefore the strings of 2 tiles that form the stair-steps, cannot detach until a blocker tile binds to its northern face.  This prevents interference from the northern \textit{xs2} glue that cannot be guaranteed to turn \off/ and could otherwise interact with other subassemblies.  In this manner, all $\calT_{\Delta{s}}$ tiles that do not comprise the south tooth itself are removed in junk assemblies.

\begin{figure}[htp]
\centering
\includegraphics[width=\linewidth]{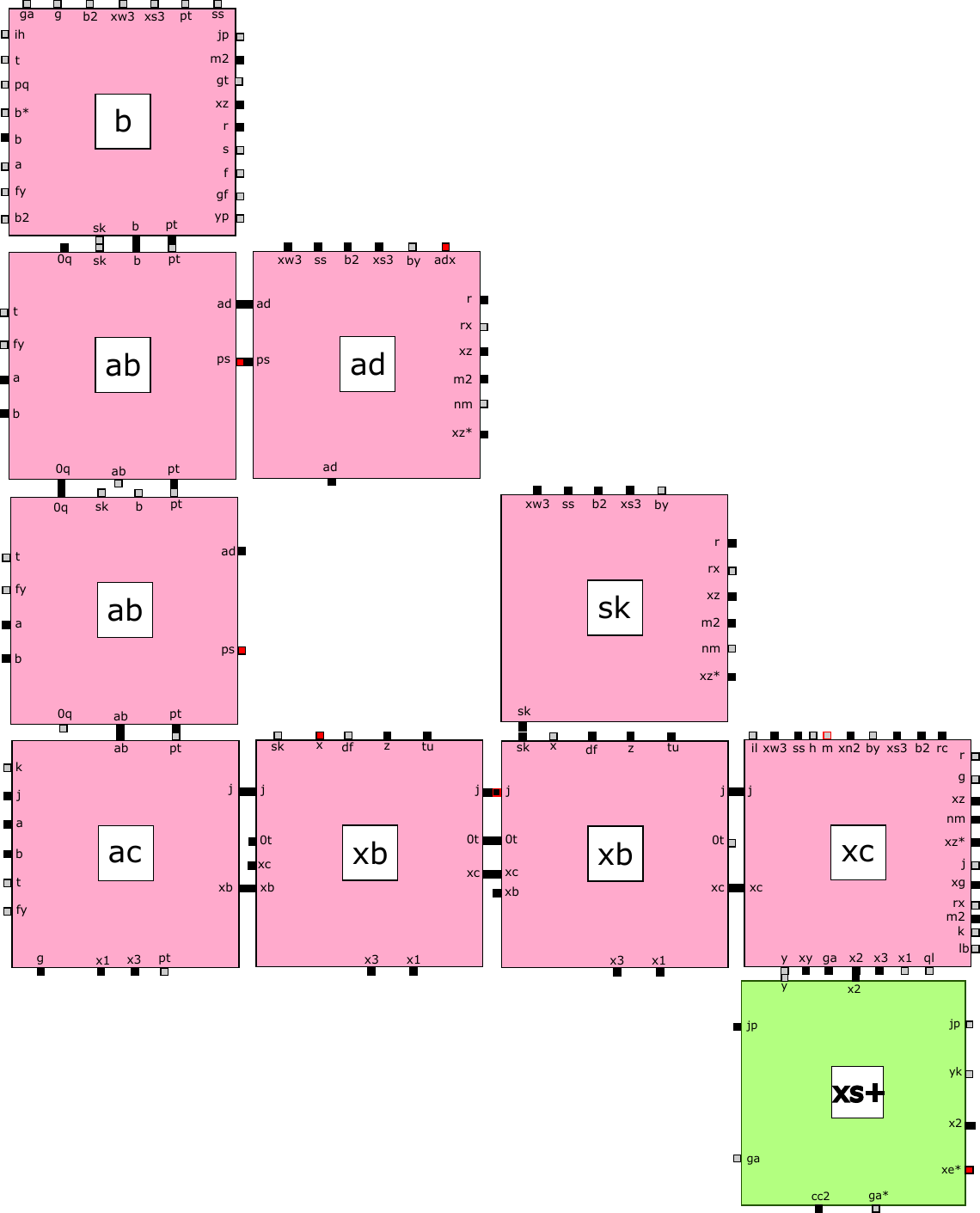}
\caption{The $S_{\Delta{s}}$ subassembly after attachment of the \textit{sk} tile to replace the removed base tile and immediately before combination begins with the $S_{\Delta{u}}$ subassembly.}
\label{fig:south1-3}
\end{figure}

The addition of the \textit{sk} tile replaces the removed base tile, explained in the previous paragraph, and reforms the $S_\Delta$ base.  It acts like the already present \textit{ad} tile.  The $S_{\Delta{s}}$ subassembly is ready for combination with the $S_{\Delta{u}}$ subassembly after this addition and after the binding glue on the southwestern corner tile has been turned \on/.  This glue is activated after the \textit{xs+} tile binds with the southeastern corner; a series of signals is sent through the base tiles to turn \on/ the glue.  Signals can be sent through the base tiles because the south row of the $S_{\Delta{s}}$ subassembly becomes the middle row of the next stage and will never need to be used in the same manner again, as depicted in Figure~\ref{fig:basesignals}.  Now, $S_{\Delta{s}}$ is ready to begin combining with other subassemblies to form the next stage of $S_\Delta$.

\begin{figure}[htp]
\centering
\includegraphics[width=\linewidth]{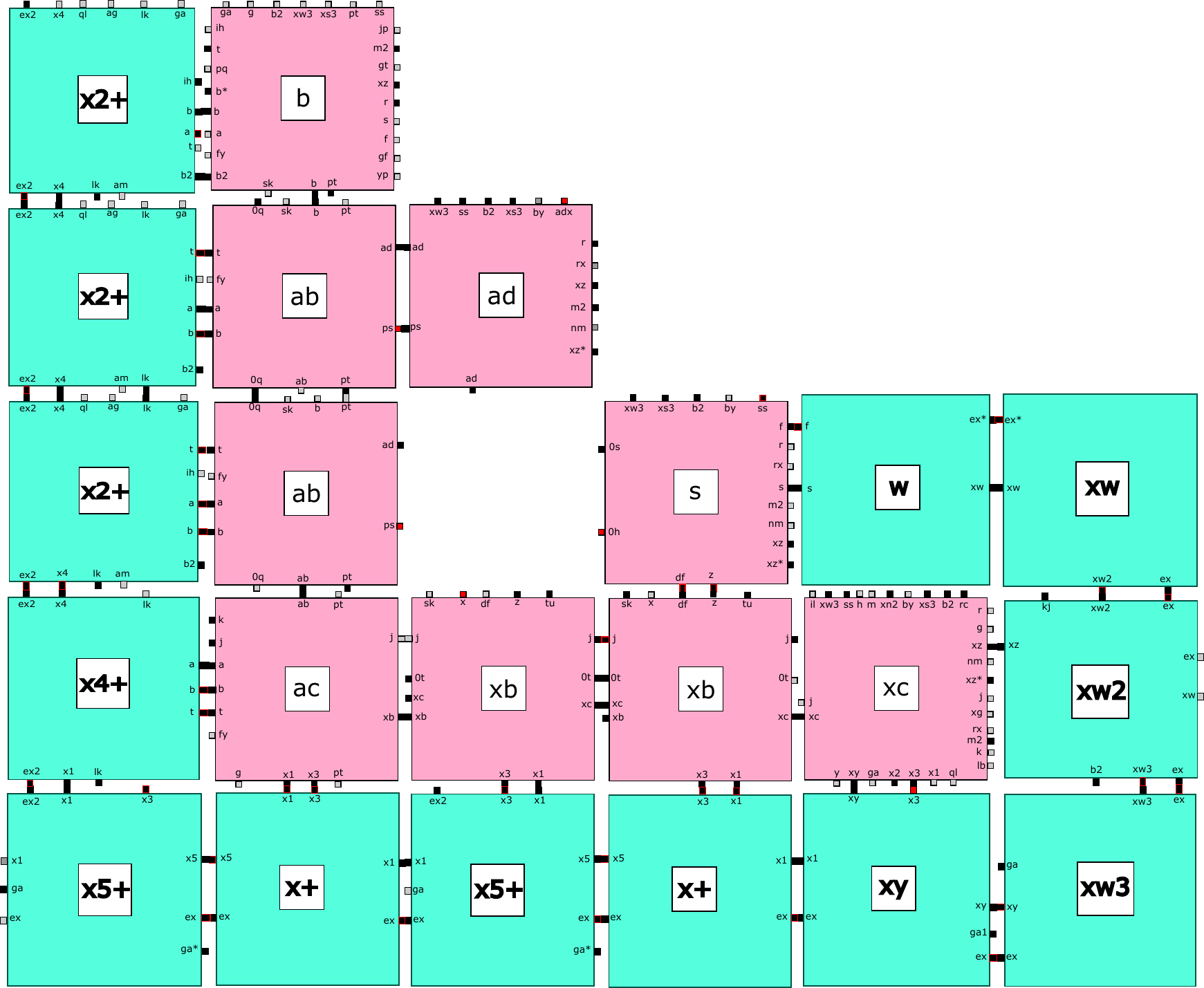}
\caption{The $S_{\Delta{w}}$ subassembly, at the first stage after initial base tile binding, after wrapping around the $S_\Delta$ base tiles but before tile detachment.}
\label{fig:west1-1}
\end{figure}

\subsection{Formation of the $S_{\Delta{w}}$ Subassembly}

Formation of the west tooth and the $S_{\Delta{w}}$ subassembly consists of a series of tiles that wrap around the southeast portion, the southern face, and western face of the base assembly before terminating at the western face of the northwesternmost tile and beginning the series of detachment (as shown in Figure~\ref{fig:west1-1}.  At larger stages, a ``stair-step'' pattern is used to move around the southeastern portion of the hypotenuse of $S_\Delta$, similar to the south tooth assembly, as seen in Figure~\ref{fig:south2stair}.  After the stair-step, two tiles attach to the south, presenting a west face.  A tile then attaches to the west, turns on a northern glue, and if the tile detects it is below the assembly, it subsequently turns on the appropriate glue on its west face.  Once the tile is unable to detect the assembly above it and has an exposed north face, a tile can attach to the north.  The tile turns \on/ an eastern glue to detect its relation to the west edge of the assembly, and opens a north face glue until it reaches the top \textit{b} tile.  The tile that attaches to the top \textit{b} tile becomes the west tooth and causes the stage to begin detachment.

\begin{figure}[htp]
\centering
\includegraphics[width=\linewidth]{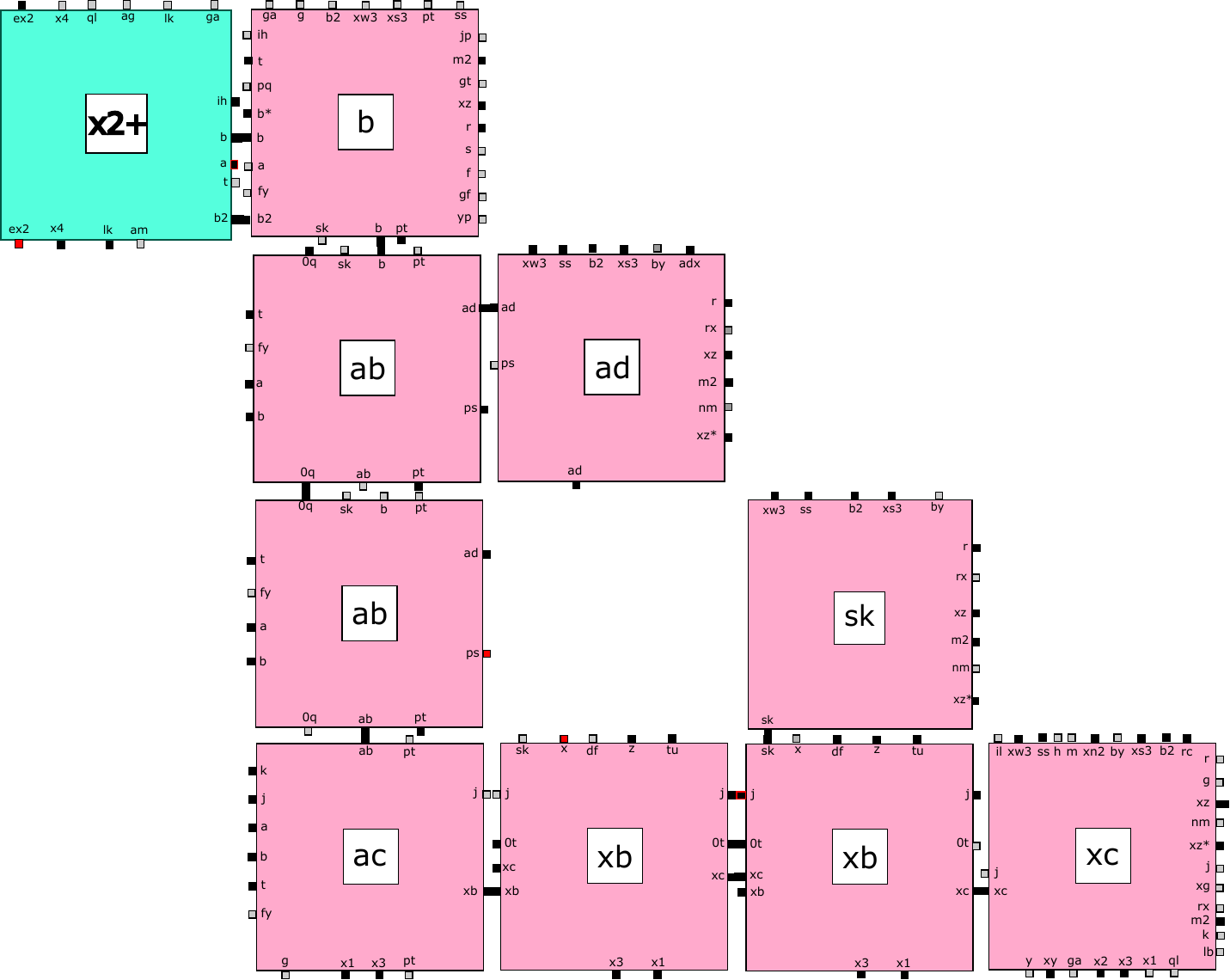}
\caption{The $S_{\Delta{w}}$ subassembly after detachment of all filler tiles, attachment of the \textit{sk} tile to replace the removed base tile, and immediately before combination begins with the other subassemblies.}
\label{fig:west1-3}
\end{figure}

Detachment of $S_{\Delta{w}}$ begins with an exit signal sent through all attached $\calT_{\Delta{w}}$, starting from the top \textit{b} tile (the tile that stays in place and becomes the west tooth).  The stair-step detaches similarly to the way the stair-step detaches in the south tooth assembly, as seen in Figure~\ref{fig:south2stair}.  After the stair-step detaches, the bottom rightmost tile detaches, allowing a helper tile to attach in its place.  The helper tile sends a signal through the bottom row of tiles, allowing another helper tile to attach to the left, and causing the bottom row to detach.  A helper tile is then able to attach to the south of the leftmost column of tiles, causing a exit signal to pass through the leftmost column and the remaining four tiles to detach.  Throughout the detachment similar precautions were taken to the precautions taken in the south tooth assembly to ensure nothing can attach to the completed stage, except via the west tooth.  For example, $\calT_{\Delta{w}}$ tiles attached to the bottom or west faces of the $S_\Delta$ base cannot detach until a blocker tile is in place to prevent interaction of volatile glues.

\subsection{Formation of the $S_{\Delta{u}}$ Subassembly}

Formation of the upper gap and the $S_{\Delta{u}}$ subassembly consists of a series of tiles that fill the space above the hypotenuse of the $S_\Delta$ stage before sending a series of signals, with the help of a blocker tile, that results in the removal of a single tile to form the upper gap.  Larger stages follow the same general principle of assembly, with the only difference being the sequence of tiles a particular detachment signal is sent through.

\begin{figure}[htp]
\centering
\includegraphics[width=\linewidth]{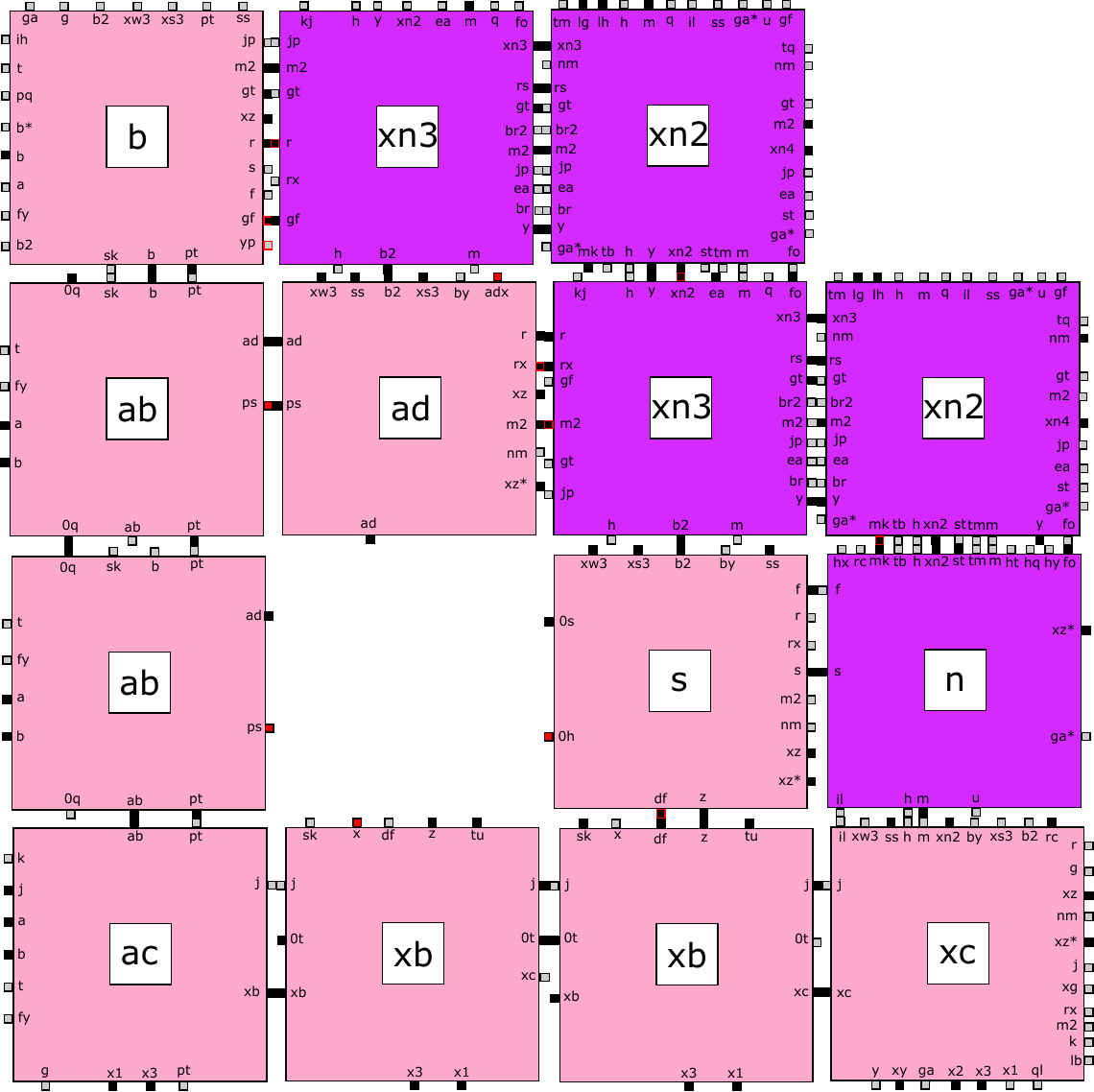}
\caption{Formation of the initial stair-step pattern for the $S_{\Delta{u}}$ subassembly.  From this point, tiles can begin attaching to the eastern portions of the subassembly.}
\label{fig:upgap1-1}
\end{figure}

Upper gap assembly begins with the binding of initiator tile \textit{n} to the base assembly.  The \textit{n} tile subsequently turns \on/ a northern glue, beginning the stair-step proliferation of tile types \textit{xn2} and \textit{xn3} along the hypotenuse of the $S_\Delta$ assembly.  After the western \textit{xn3} tile binds to a row in the stair-step, the \textit{xn2} of that row turns \on/ an eastern glue that enables further filling of space.  This eastern glue can turn \on/ before the completion of the stair-step because the filler tiles that will bind to it can only extend to a certain length, as described below, which prevents unchecked growth.

\begin{figure}[htp]
\centering
\includegraphics[width=\linewidth]{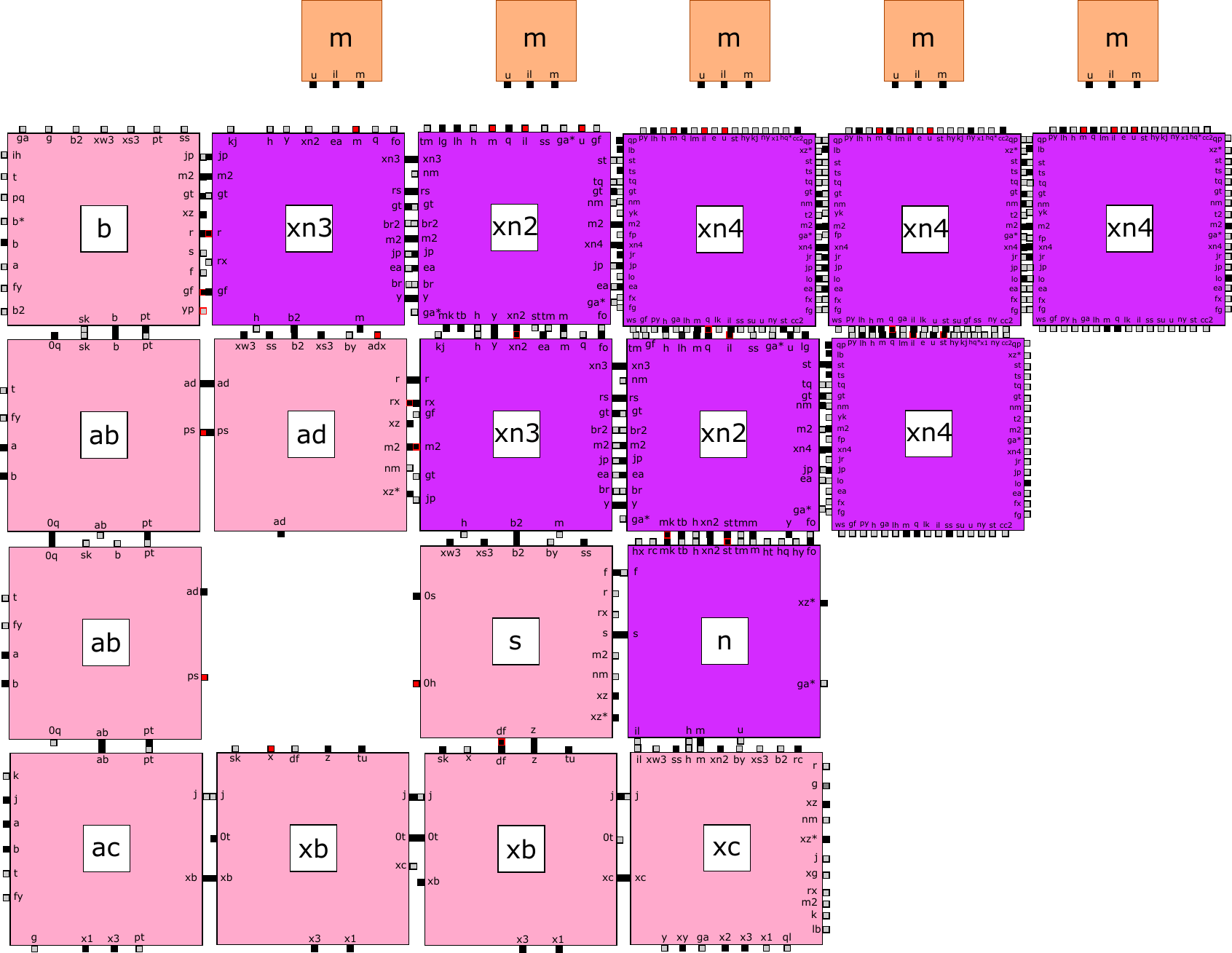}
\caption{Addition of the space-filling \textit{xn4} tiles to the $S_{\Delta{u}}$ subassembly.}
\label{fig:upgap1-2}
\end{figure}

As depicted in Figure~\ref{fig:upgap1-2}, addition of \textit{xn4} tiles fills a large portion of the space above the $S_\Delta$ assembly.  Each \textit{xn4} tile can only turn on its respective eastern glue, allowing for another \textit{xn4} tile to bind, when it has bound to a tile underneath it. However, because the \textit{n} tile that binds immediately above the initiator tile site has no \textit{xn2} or \textit{xn4} tile underneath it, it can not extend an eastern glue.  This prevents unchecked growth of the \textit{xn4} tiles to the right and results in an inverted stair-step appearance.  This inverse stair-step carries a signal that turns \on/ southern glues only on the tiles that have a southern exposed face and are part of the immediate inverse stair-step.  Additional filler tiles can then attach to these tiles and fill in the remaining area above the hypotenuse of the $S_{\Delta{u}}$ subassembly.

\begin{figure}[htp]
\centering
\includegraphics[width=\linewidth]{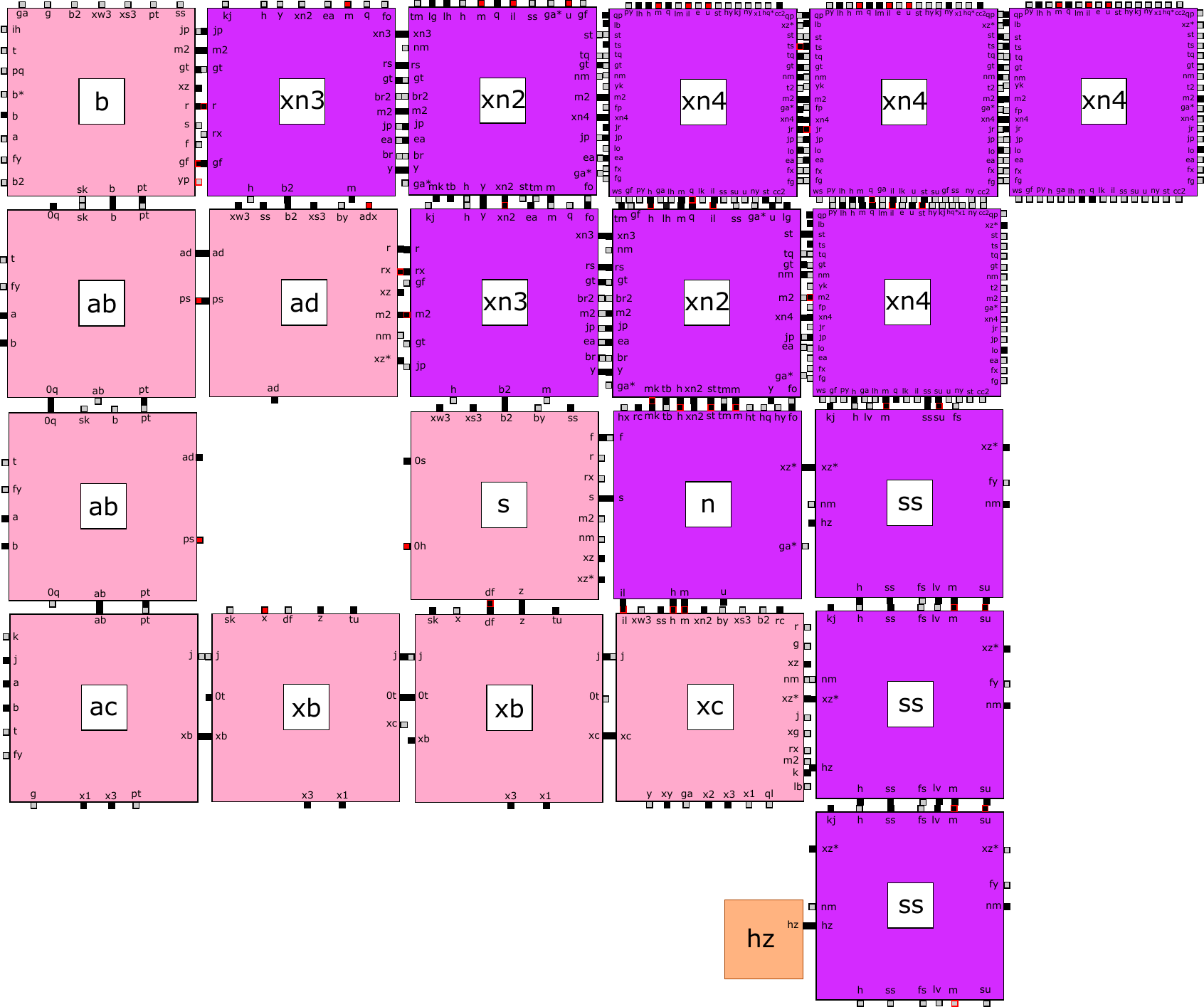}
\caption{Addition of the space-filling \textit{ss} tiles to the $S_{\Delta{w}}$ subassembly.}
\label{fig:upgap1-3}
\end{figure}

The addition of the \textit{ss} tiles takes place in a similar manner to the addition of the \textit{xn4} tiles — only when the \textit{ss} tile is next to a western tile that presents a particular glue can it turn \on/ a southern glue, allowing for further assembly.  Due to this, the \textit{ss} tile column directly to the side of the southeasternmost corner tile extends one below the south face of $S_{\Delta{u}}$.  A blocker tile binds to this extending \textit{ss} tile to prevent interference after it detaches.  The binding of a \textit{ss} tile to the southeasternmost corner tile \textit{xc} also turns \on/ the connecting glue on the eastern face of said tile.  This connecting glue is not exposed until detachment of the \textit{ss} tiles, which does not occur until after the $S_{\Delta{s}}$ and $S_{\Delta{u}}$ subassemblies combine.

\begin{figure}[htp]
\centering
\includegraphics[width=\linewidth]{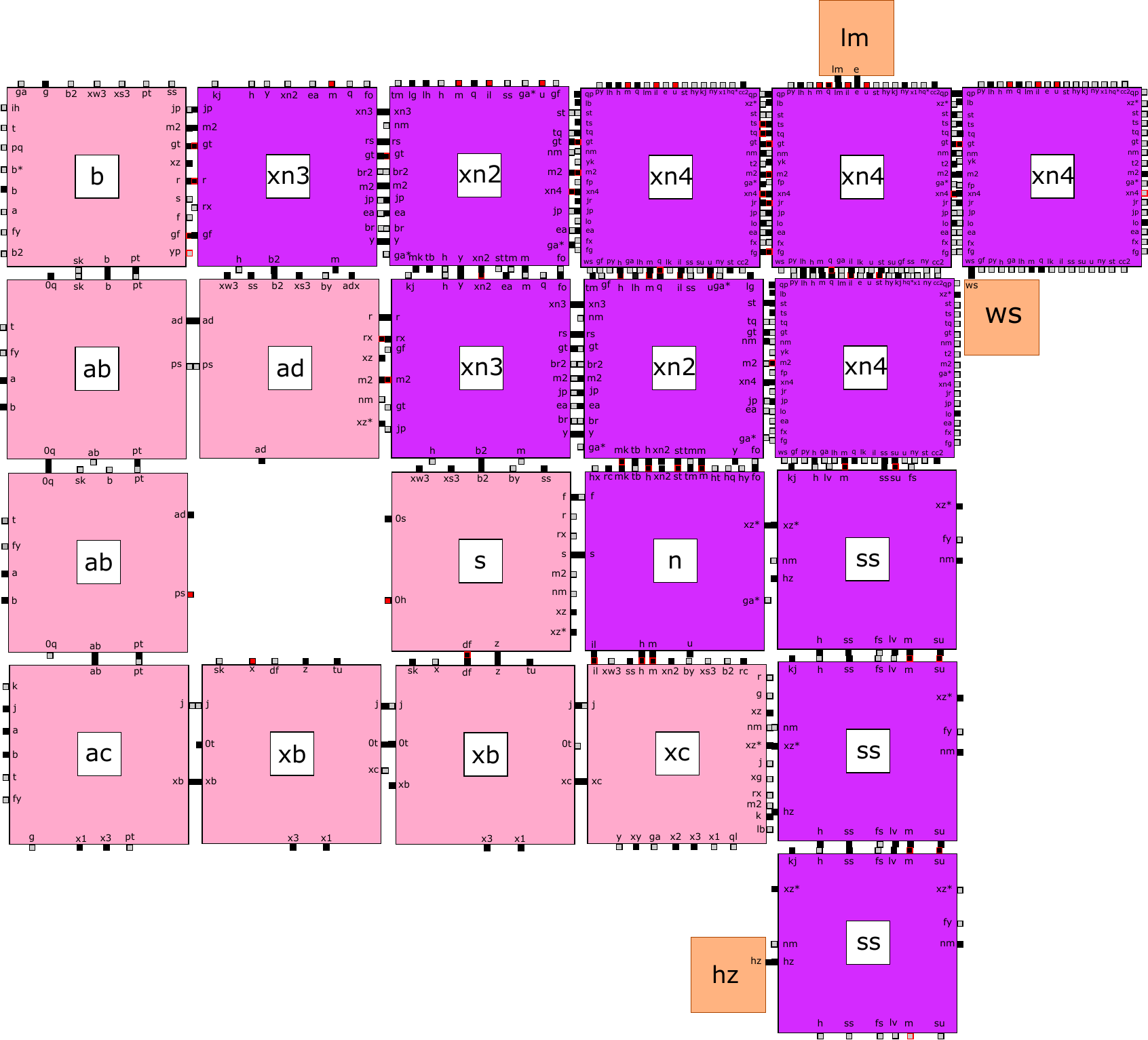}
\caption{Addition of the blocker tile to the east of the upper gap of $S_{\Delta{u}}$ to ensure combination of proper size subassemblies.}
\label{fig:upgap1-4}
\end{figure}

A series of signals that stems from the northern face of the southeastern \textit{xc} corner tile travels north, eventually signaling the tile that will be removed from the filler tiles to form the upper gap as well as the tile to which a blocker tile will bind.  This presents the only major difference between the first stage of assembly after formation of the stage two $S_\Delta$ and the other stages — this northern signal must move through a series of \textit{ss} tiles at larger stages.  This, however, does not present any changes in the overall construction.  After these signals have been sent, a blocker tile attaches to the \textit{xn4} tile that lies to the immediate right of where the gap will be.  Only after this blocker tile has attached can a series of western-moving glues be initiated, the gap tile be removed, and the connecting glue on the northwestern \textit{b} corner tile be turned \on/.

\begin{figure}[htp]
\centering
\includegraphics[width=\linewidth]{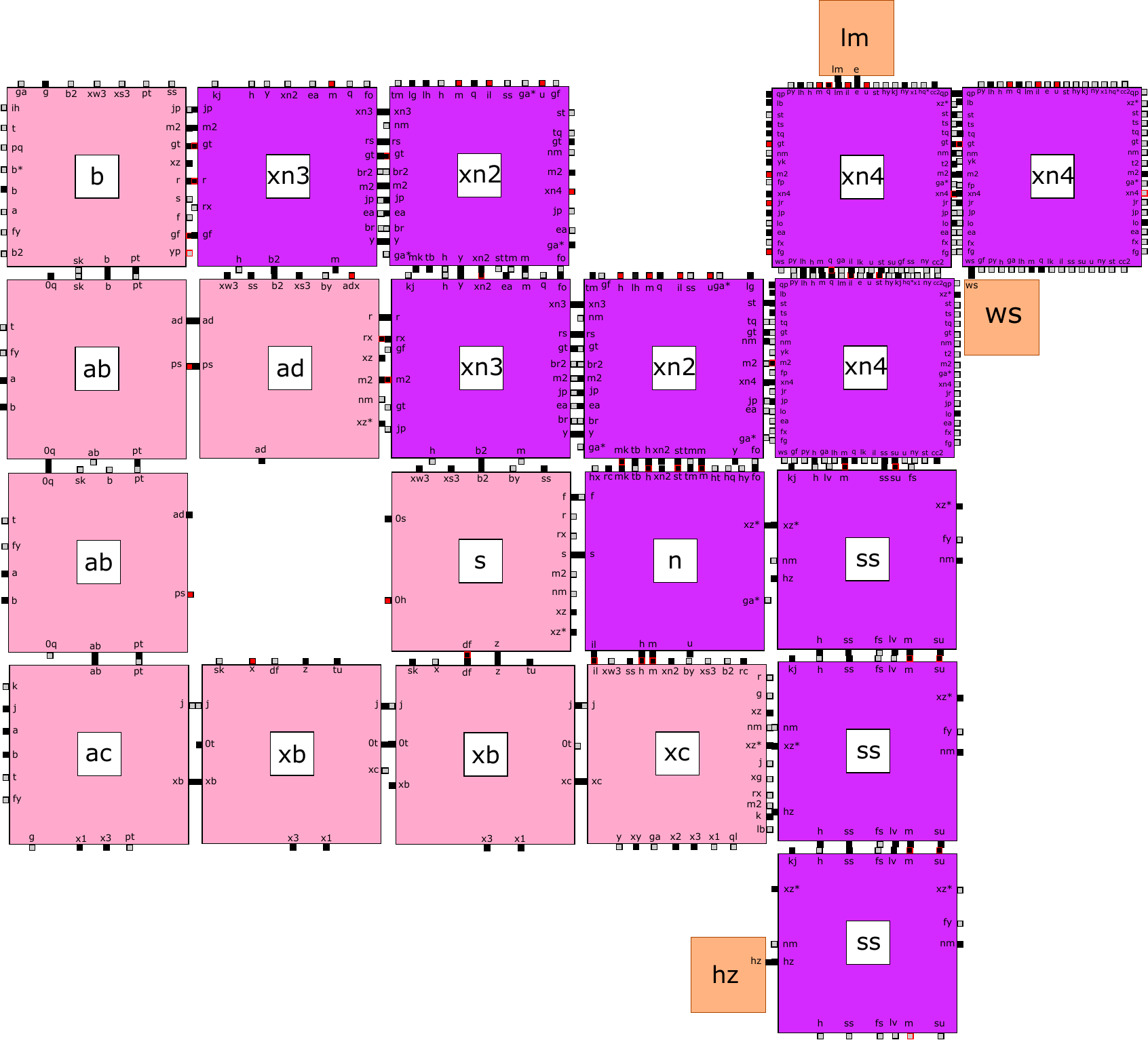}
\caption{The $S_{\Delta{u}}$ subassembly after formation of the upper gap and immediately before combination with $S_{\Delta{s}}$.}
\label{fig:upgap1-5}
\end{figure}

The $S_{\Delta{u}}$ subassembly shown in Figure~\ref{fig:upgap1-5} is ready for combination with the $S_{\Delta{s}}$ subassembly at this point.

\subsection{Combination of Subassemblies}

The first portion of subassembly combination involves the binding of the $S_{\Delta{s}}$ and $S_{\Delta{u}}$ subassemblies.  The southern tooth of the $S_{\Delta{s}}$ subassembly fits into the slot formed by the upper gap; the blocker tile to the right of the upper gap ensures that only the appropriate-sized subassemblies combine.  Appropriate sizing of subassemblies is also guaranteed by only one connecting glue on the $S_{\Delta{s}}$ and $S_{\Delta{u}}$ subassemblies that can bind to one another; this glue is located on the northwestern \textit{b} tile of the $S_{\Delta{u}}$ subassembly and the southwestern \textit{ac} tile of the $S_{\Delta{s}}$ subassembly.  The combination of these two subassemblies results in the detachment of the northern row of $S_{\Delta{u}}$ filler tiles, the south tooth, and the rightmost column of \textit{ss} filler tiles.

\begin{figure}[htp]
\centering
\includegraphics[width=\linewidth]{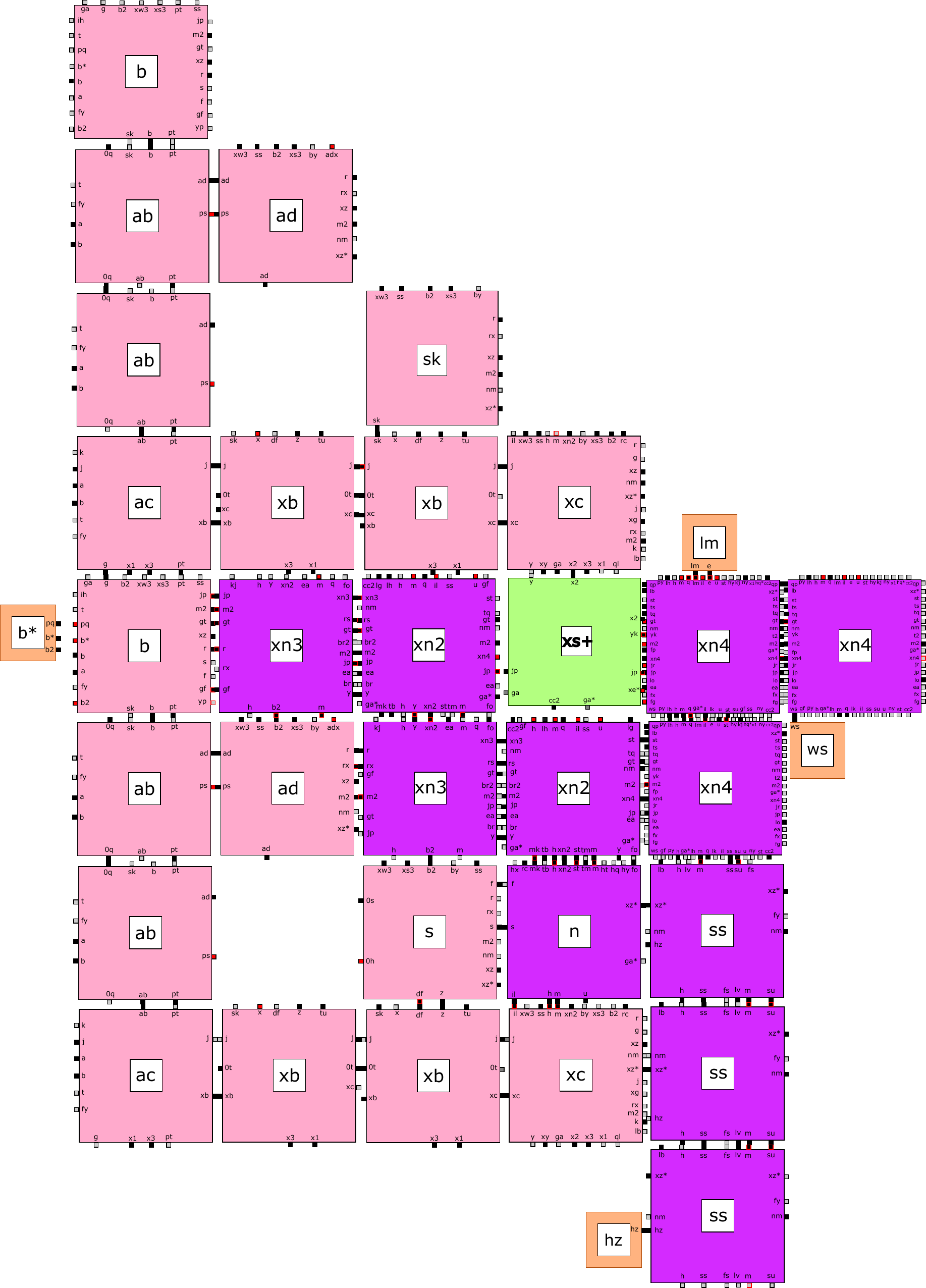}
\caption{Initial combination of $S_{\Delta{s}}$ and $S_{\Delta{u}}$ subassemblies.}
\label{fig:combo1a-1}
\end{figure}

Upon binding the connecting glues of the $S_{\Delta{s}}$ and $S_{\Delta{u}}$ subassemblies, as shown in Figure~\ref{fig:combo1a-1}, the connecting glues activate a variety of signals in the base tiles.  The southwest \textit{ac} tile of $S_{\Delta{s}}$ activates a southern glue that is used if the resulting stage of $S_\Delta$ differentiates into a $S_{\Delta{w}}$ subassembly.  The northwest \textit{b} tile of $S_{\Delta{u}}$ turns \on/ two western glues — one that serves in the possible formation of a $S_{\Delta{w}}$ subassembly and another that binds a helper tile.  This helper tile deactivates western glues on the \textit{b} tile that would result in premature termination of the $S_{\Delta{w}}$ assembly (if this stage of $S_\Delta$ differentiates to a $S_{\Delta{w}}$ subassembly).  Only after this helper tile \textit{b*} has bound does a series of eastern glues travel through the northern row of the $S_{\Delta{u}}$ filler tiles, initiating detachment.  The first tiles to be removed are the \textit{xn3} and \textit{xn2} tiles, which detach as a pair.  At larger stages, the row of \textit{xn4} tiles that precede the south tooth are removed through the help of a blocker tile that ensures all volatile \on/ glues cannot interfere with other assemblies.

\begin{figure}[htp]
\centering
\includegraphics[width=\linewidth]{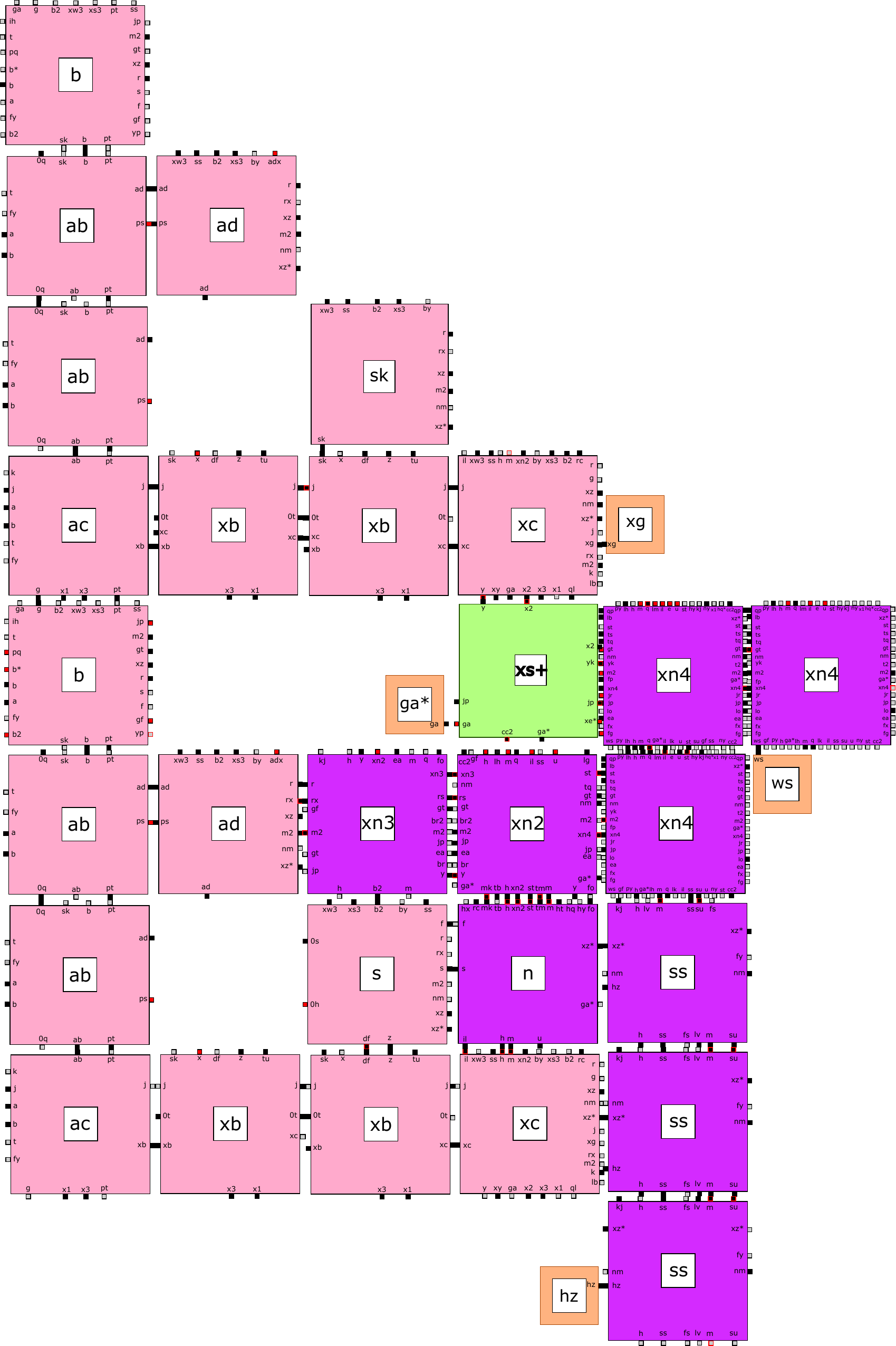}
\caption{Addition of a blocker tile to ensure proper combination of the first stage of subassembly binding and the $S_{\Delta{w}}$ subassembly, as well as preparation for the removal of the south tooth.}
\label{fig:combo1a-2}
\end{figure}

Only after the addition of the blocker tile \textit{xg} to the southeast \textit{xc} corner tile of what was the $S_{\Delta{s}}$ subassembly can the south tooth begin to detach.  Through interactions with southern glues on the \textit{xc} tile (after it binds to \textit{xg}), the south tooth turns \on/ a western glue to which a helper tile attaches.  This helper tile results in the detachment of the \textit{xs+} south tooth.  The $S_{\Delta{u}}$ filler tile to the immediate south of the \textit{xs+} south tooth is also removed as a result of these signals.  This removal is necessary to prevent interference of junk \textit{xn2} tiles with the $S_{\Delta{w}}$ subassembly and with other \textit{xn2} tiles.

\begin{figure}[htp]
\centering
\includegraphics[width=\linewidth]{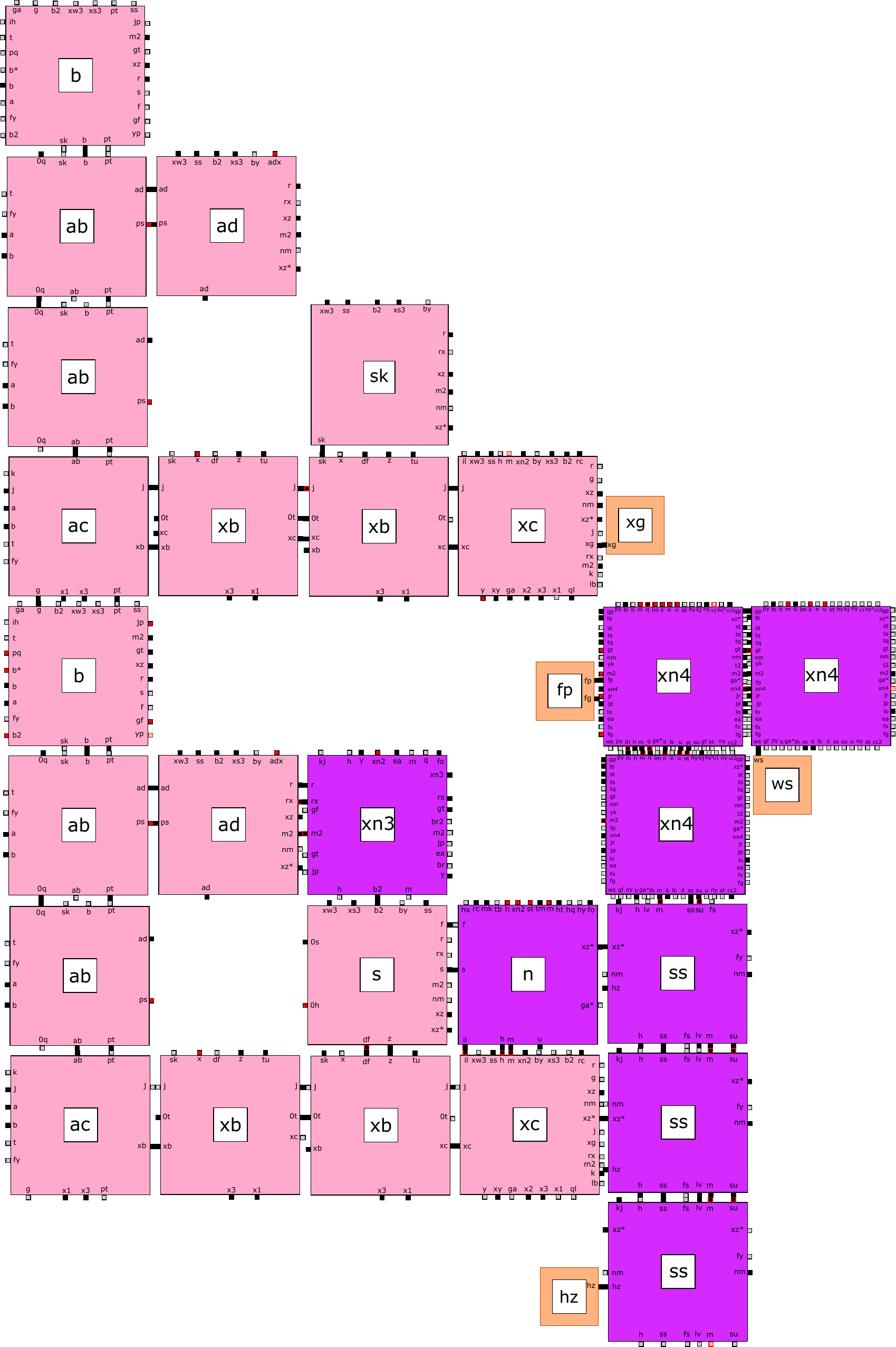}
\caption{Preparation for removal of the final portion of the top row of $S_{\Delta{u}}$.}
\label{fig:combo1a-3}
\end{figure}

The end portion of the top $S_{\Delta{u}}$ filler tile row can only be removed after the binding of a helper tile \textit{fp}, which sends a series of signals resulting in the turning \off/ of glues holding the set of tiles in place.  This removal creates a space for a helper tile to interact with the rightmost column of filler tiles.  The assembly depicted in Figure~\ref{fig:combo1a-3} and Figure~\ref{fig:combo1a-4} depicts a column composed only of \textit{ss} tiles.  However, larger assemblies also feature \textit{xn4} tiles in this rightmost column — detachment of the \textit{xn4} tiles follows the same pattern of detachment seen in \textit{ss} tiles (e.g. involving a helper tile).

\begin{figure}[htp]
\centering
\includegraphics[width=0.8\linewidth]{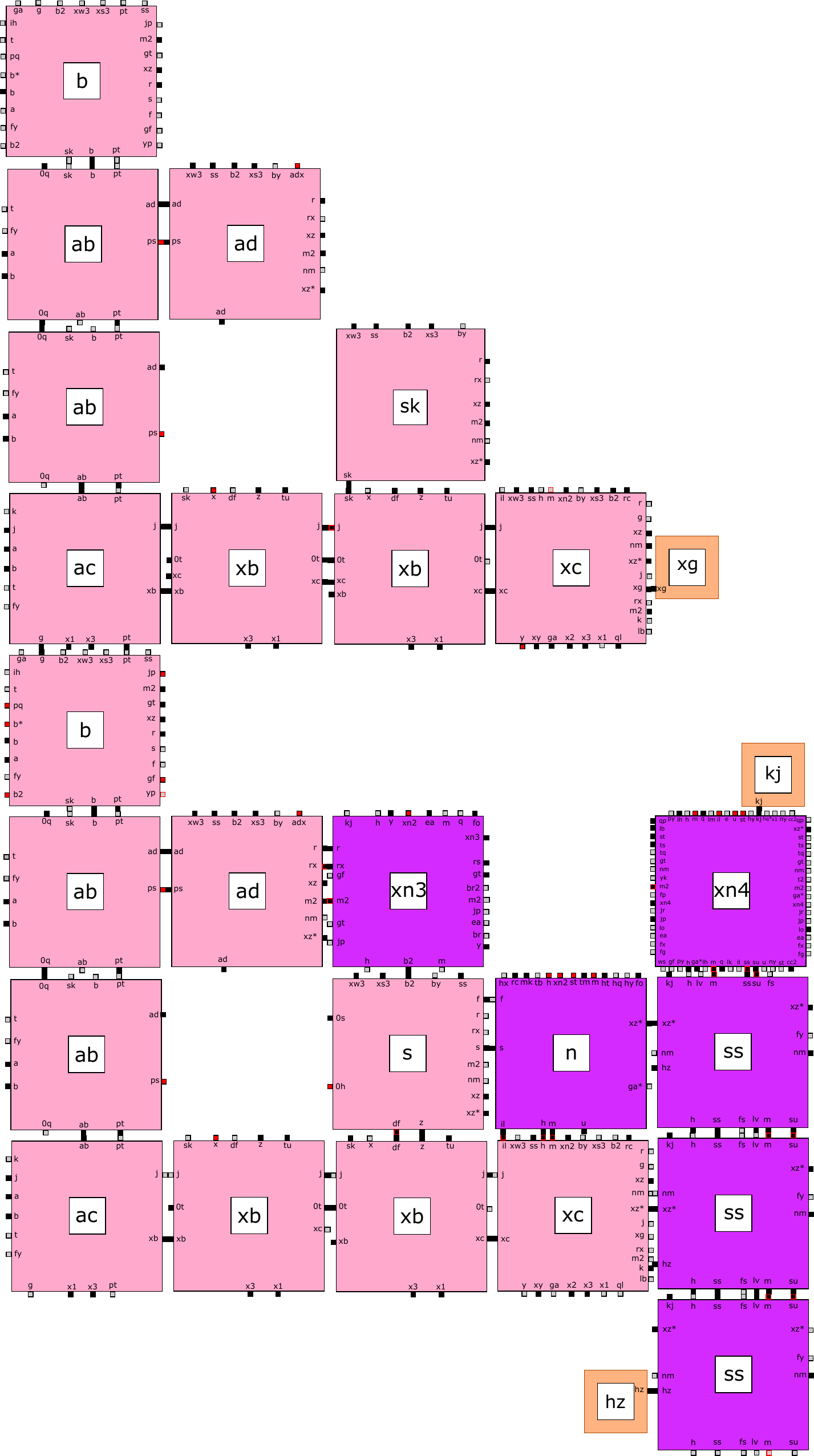}
\caption{Removal of the eastern column of $S_{\Delta{u}}$ filler tiles with helper tiles.}
\label{fig:combo1a-4}
\end{figure}

The $S_{\Delta{u}}$ filler tiles in the eastern row are removed through the assistance of a helper tile; only after the helper tile \textit{kj} has bound can western glues be turned \off/ and the filler tiles removed.  After removal of the column of filler tiles, the binding glue on the southeastern \textit{xc} corner tile is exposed, allowing binding of the $S_{\Delta{w}}$ subassembly.  This can happen before the binding of the helper tile, as depicted in Figure~\ref{fig:combo1a-5}, because after binding of the $S_{\Delta{w}}$ subassembly occurs, nothing else can take place until the helper tile \textit{hq} or \textit{hq*} (depending on if a \textit{n} or \textit{xn4} is presented for binding) attaches.  Addition of the helper tile results in an interface that allows interaction with the west tooth \textit{x2+} tile.

\begin{figure}[htp]
\centering
\includegraphics[width=0.8\linewidth]{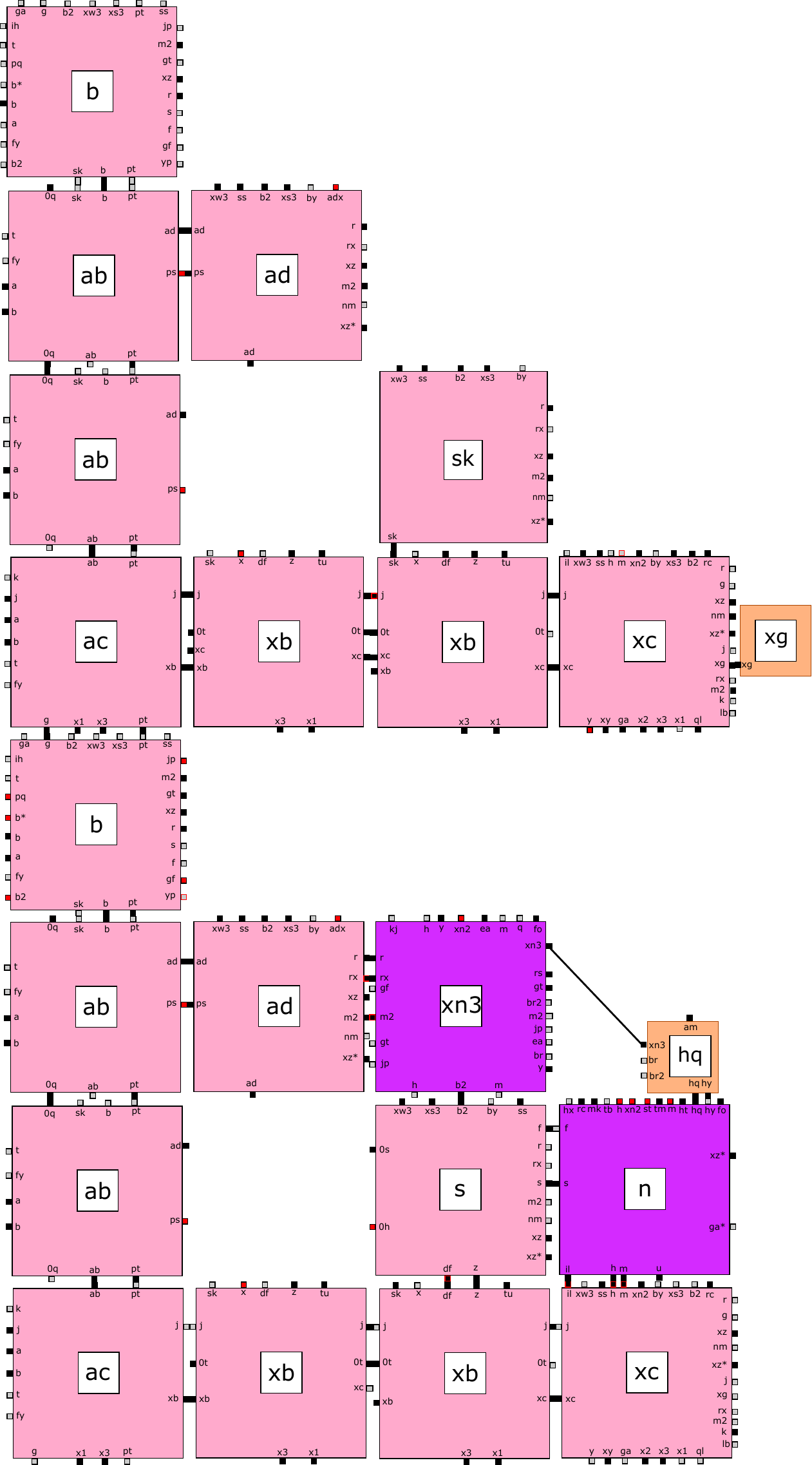}
\caption{$S_{\Delta{u}}$ after attachment of the helper tile \textit{hq}.  It is shown here before before binding of $S_{\Delta{w}}$ subassembly for clarity in depiction.}
\label{fig:combo1a-5}
\end{figure}

The second portion of subassembly combination involves the binding of the $S_{\Delta{w}}$ subassembly and the subassembly formed in the first portion of combination.  The western tooth of the $S_{\Delta{w}}$ subassembly fits into the slot formed by the combination of $S_{\Delta{s}}$ and $S_{\Delta{u}}$. Proper sized assemblies are ensured to bind together in the same manner as the first portion of subassembly combination (i.e. blocker tiles and only one connecting glue being \on/).  This portion of combination results in the removal of all filler and helper tiles and leaves the newly-formed stage of $S_\Delta$ ready for differentiation.

\begin{figure}[htp]
\centering
\includegraphics[width=\linewidth]{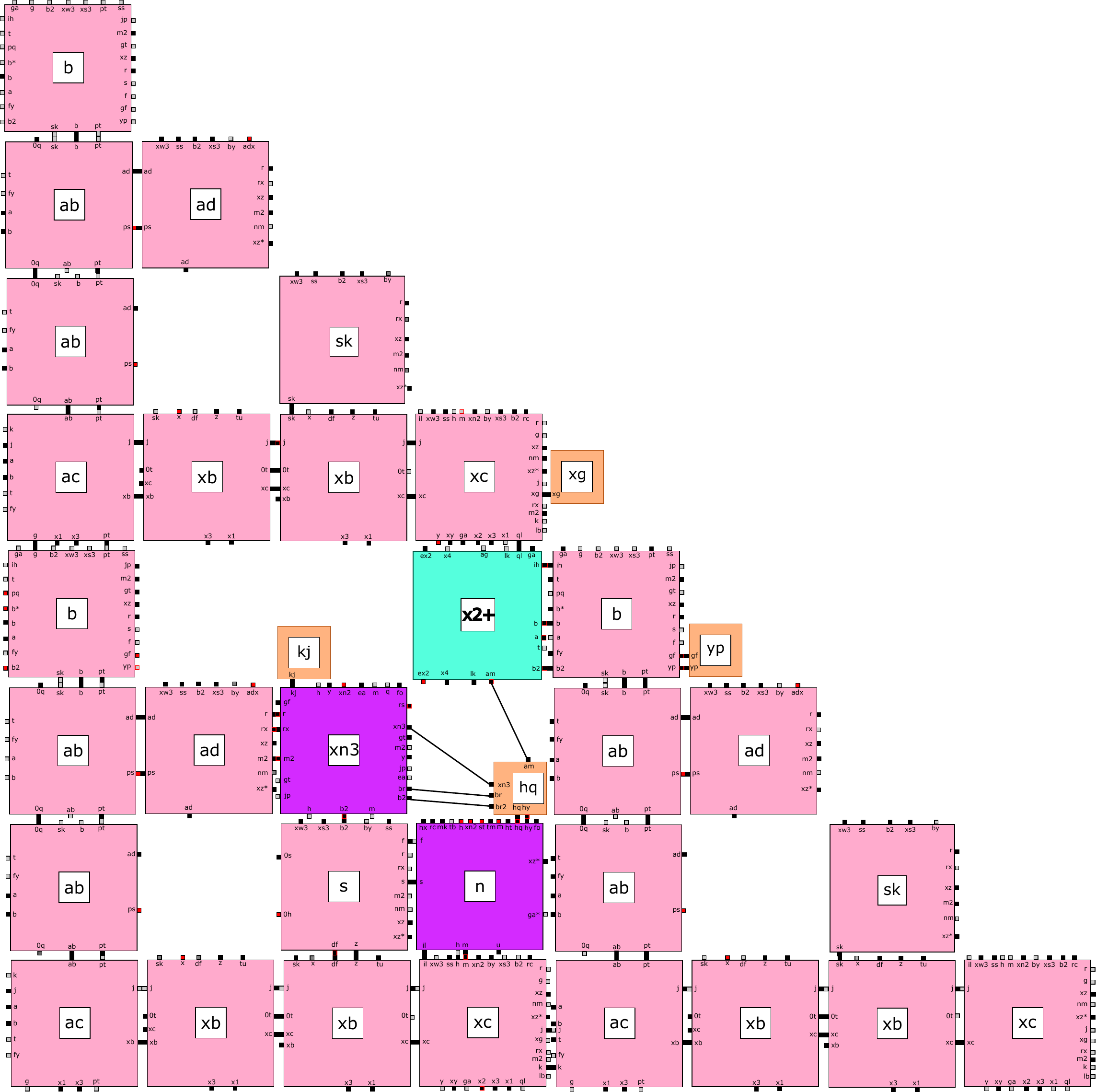}
\caption{Initial combination of the subassembly stages in the final portion of combination and interaction between the west tooth and helper tile.}
\label{fig:combo1b-1}
\end{figure}

After the connecting glues of the $S_{\Delta{w}}$ subassembly and what was the $S_{\Delta{u}}$ subassembly bind, the connecting glues activate a variety of signals in the base tiles.  The southeast \textit{xc} tile of the former $S_{\Delta{u}}$ activates a variety of glues that are used if the resulting stage of $S_\Delta$ differentiates into a $S_{\Delta{w}}$ or $S_{\Delta{s}}$ subassembly.  The southwest \textit{ac} tile of $S_{\Delta{w}}$ turns \on/ a northern glue that carries a signal through the base tiles of the $S_{\Delta{w}}$ subassembly (which can happen because these tiles are never used for the same purpose again, as shown in Figure~\ref{fig:basesignals}).  This signal turns \on/ an eastern glue on the \textit{b} tile, which binds with a helper tile.  The binding of this helper tile \textit{yp} ensures that glues on the northwestern \textit{b} that could potentially interfere with assembly are turned \off/.  Only after the binding of this helper tile can western signals be initiated; these signals result in the detachment of the west tooth, as well as interaction with the helper tile and $S_{\Delta{u}}$ filler tiles.

\begin{figure}[htp]
\centering
\includegraphics[width=\linewidth]{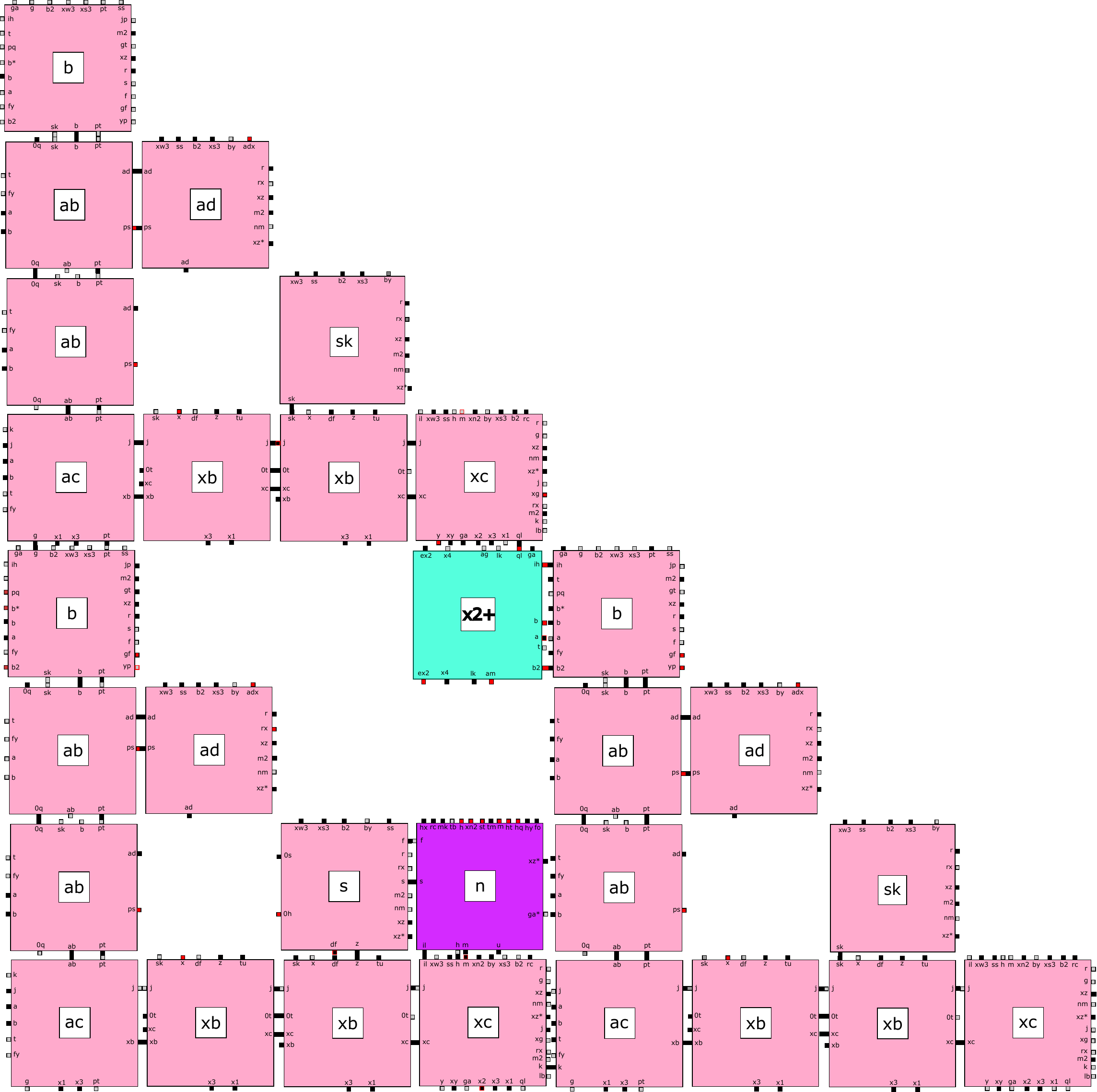}
\caption{The combined assembly after detachment of the final stair-step portion; the west tooth is also capable of detachment at this point.}
\label{fig:combo1b-2}
\end{figure}

The final portion of the stair-step, which includes the helper tile \textit{hp} as shown in Figure~\ref{fig:combo1b-1}, detaches in the same manner as the other portions of the stair-step pattern.  However, the helper tile that results in this detachment does not turn \on/ until \textit{hp} has interacted with the west tooth tile.  This interaction results in the removal of the \textit{x2+} west tooth tile as well.  At larger stages, additional $S_{\Delta{u}}$ filler tiles must be removed; these decay in a manner almost identical to the previous detachment of similar tiles.

\begin{figure}[htp]
\centering
\includegraphics[width=\linewidth]{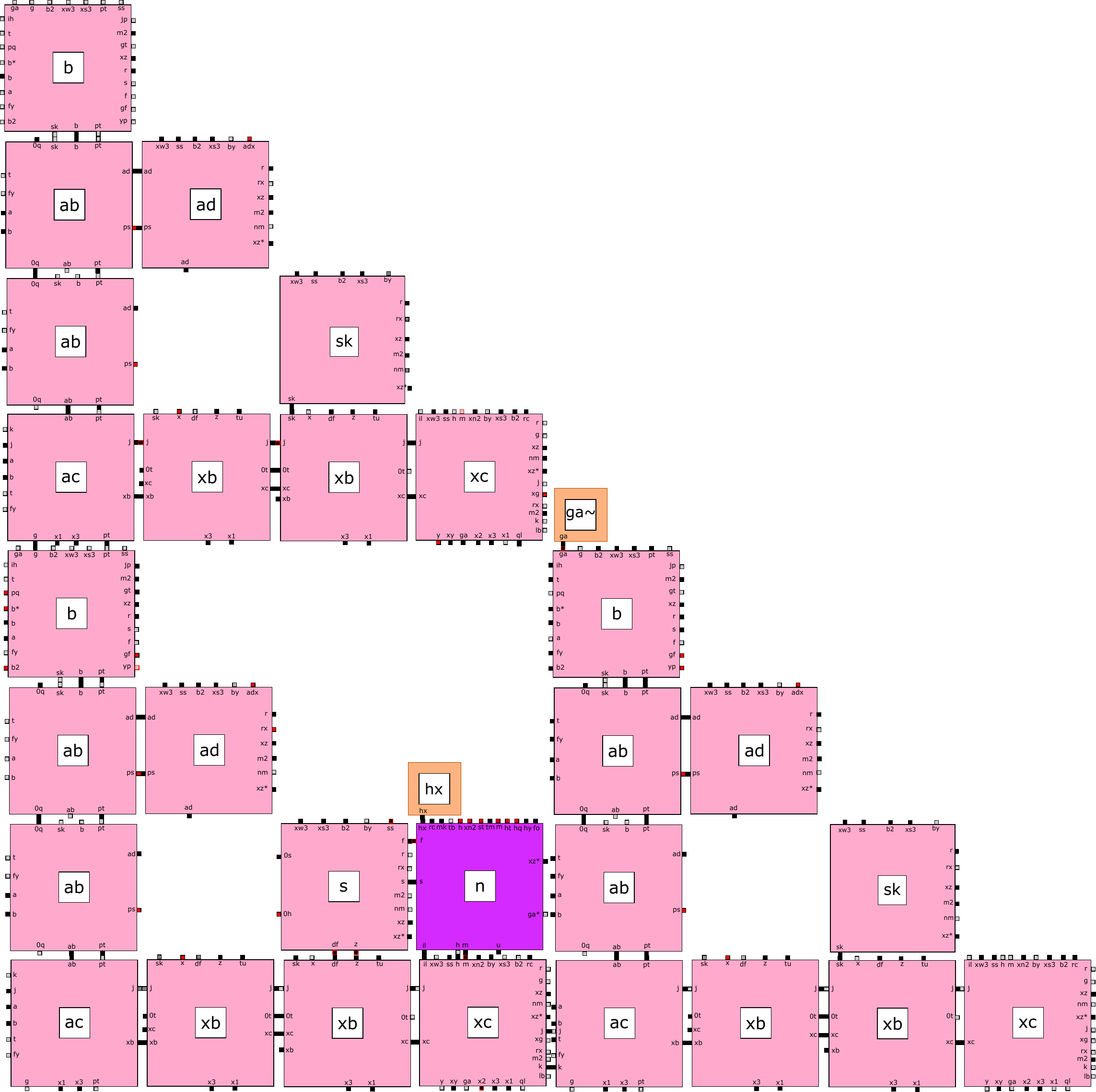}
\caption{Addition of a helper tile to \textit{n} to remove the final portion of the $S_{\Delta{w}}$ filler tiles and activation of the initiator point on \textit{b}.}
\label{fig:combo1b-3}
\end{figure}

The addition of a helper tile to \textit{n} results in the removal of the initiator junk assembly for the $S_{\Delta{u}}$ subassembly.  Removal of the base tile that presents the initiator glue for $S_{\Delta{u}}$ occurs for the same reason as in $S_{\Delta{w}}$ and $S_{\Delta{s}}$, and an \textit{sk} tile will fill in the resulting gap (as shown in Figure~\ref{fig:combo1b-4}).  Activation of the northwestern \textit{b} tile in the former $S_{\Delta{w}}$ subassembly occurs after the binding of a helper tile.  This can happen before the removal of all filler tiles from the interior of the combination of the three $S_\Delta$ subassemblies, and this is able to happen because this interior area will never interfere with differentiation and future growth.

\begin{figure}[htp]
\centering
\includegraphics[width=\linewidth]{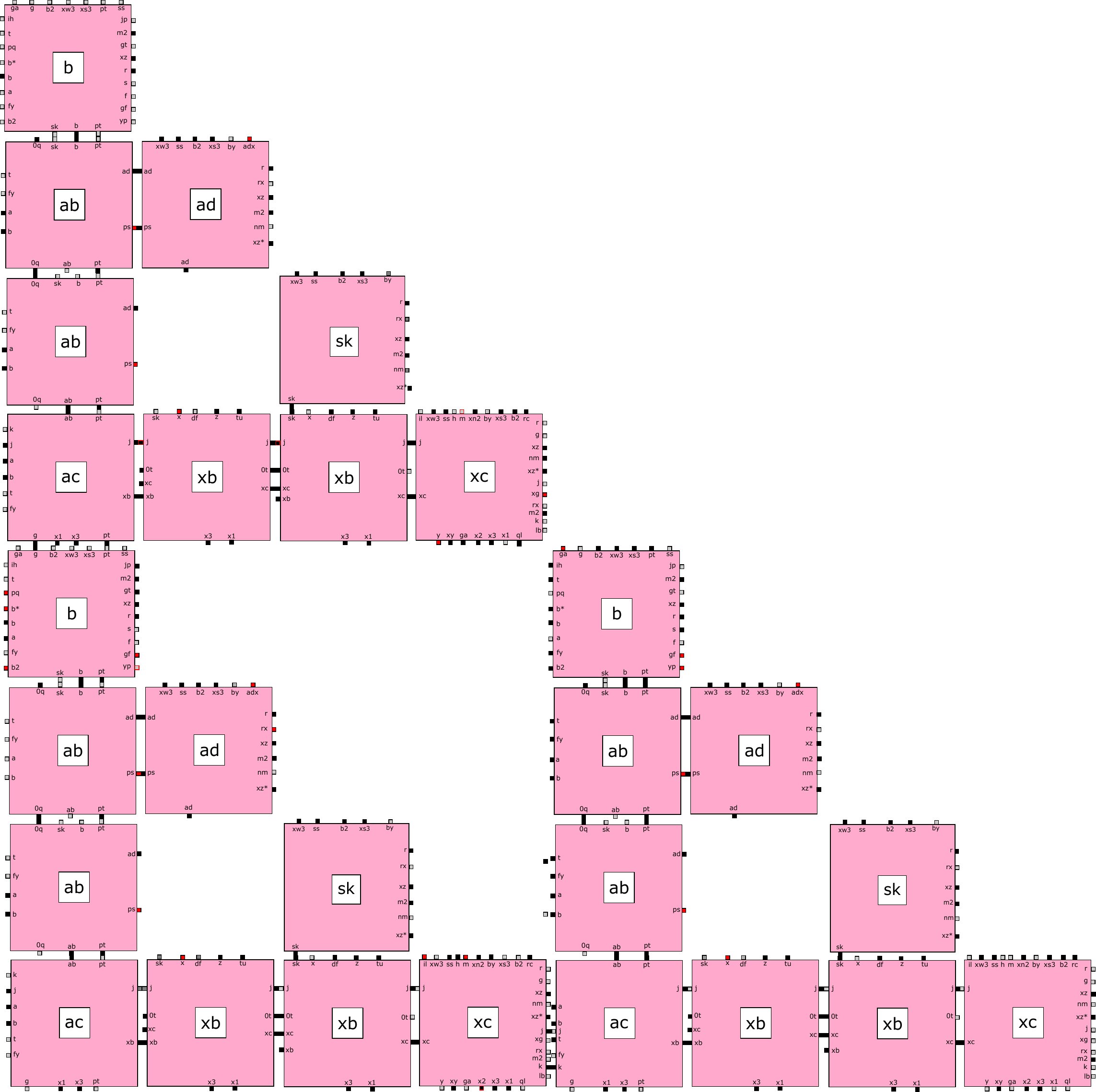}
\caption{Addition of the \textit{sk} tile to fill in the gap left by removal of the $S_{\Delta{u}}$ base initiator tile.  At this point, the $S_\Delta$ is ready for differentiation and further combination.}
\label{fig:combo1b-4}
\end{figure}

After the addition of the \textit{sk} tile and activation of the new base tile initiator point, the resulting $S_\Delta$ stage is complete.

\subsection{Examples of Junk Assemblies}

All junk assemblies that detach during formation of $S_\Delta$ reach a terminal size of $\le 4$ and do not negatively interfere with the assembly of other $S_\Delta$ stages.  Although there are a variety of different junk assemblies, we will discuss the major groups of junk assemblies and the steps taken to ensure that these assemblies cannot interfere with assembly.

One variety of junk assembly is a single tile that detaches from the overall subassembly with a glue still \on/ that could interfere with other assemblies. The \textit{xn4} tile removed to form the upper gap is depicted in Figure~\ref{fig:junk1-1} and falls into this category.  Upon detachment, this \textit{xn4} tile has a western \textit{xn4} glue that could interfere with formation of $S_{\Delta{u}}$.  To prevent this problem, a helper tile can attach to the single junk tile and turn off this interfering glue.  Although the tile can attach to other assemblies before this glue has been turned \off/, any other glues that could interact with active assemblies are turned \off/, resulting in the interfering \textit{xn4} tile eventually being removed with no damage.

\begin{figure}[htp]
\centering
\includegraphics[width=\linewidth]{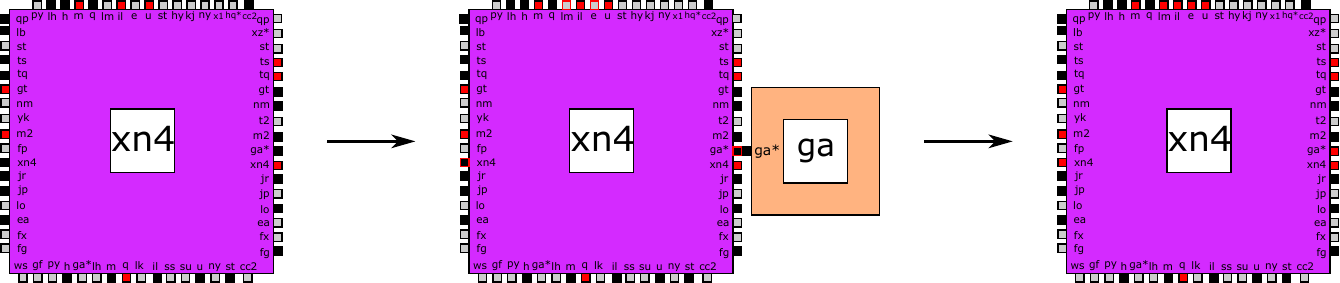}
\caption{An example of a helper tile attaching after a single tile detaches to turn off a glue that could be potentially problematic}
\label{fig:junk1-1}
\end{figure}

To prevent interference between the initiator tile and the initiator point on the base assembly, the two tiles are removed as a single junk assembly with the volatile interface hidden from solution.  This is depicted with the $S_{\Delta{s}}$ initiator junk assembly in Figure~\ref{fig:junk1-2}.  The gap in the overall $S_\Delta$ assembly formed by this removal is filled by a \textit{sk} tile.  All northern glues that are potentially \on/ after detachment of the initiator junk assembly are eventually turned \off/ and do not negatively interfere with the growing assembly.

\begin{figure}[htp]
\centering
\includegraphics[width=\linewidth]{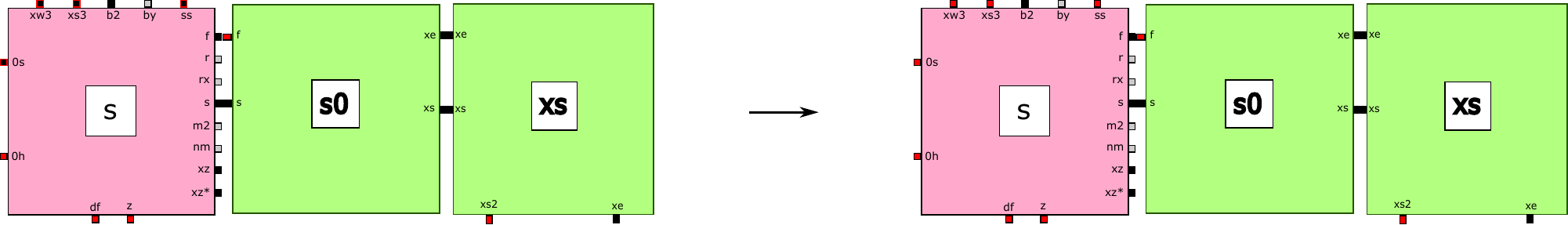}
\caption{An illustration of the initiator and two other tiles detaching in the $S_{\Delta{s}}$ subassembly. This example also applies to $S_{\Delta{w}}$.}
\label{fig:junk1-2}
\end{figure}

Although the initiator tile and initiator point on the base assembly for the $S_{\Delta{u}}$ subassembly are removed in a pair for the same reason as the other two subassemblies, it undergoes a slightly different process due to the $S_{\Delta{u}}$ subassembly's formation.  Because $S_{\Delta{u}}$ grows an upwards stair-step instead of one moving down, like in $S_{\Delta{s}}$, the junk assembly displayed in Figure~\ref{fig:junk1-3} requires a blocker tile before removal to prevent interference by the \textit{n} tile.  Other than these two differences, the initiator junk assembly for $S_{\Delta{u}}$ follows the same pattern as the other two initiator junk assemblies.

\begin{figure}[htp]
\centering
\includegraphics[width=\linewidth]{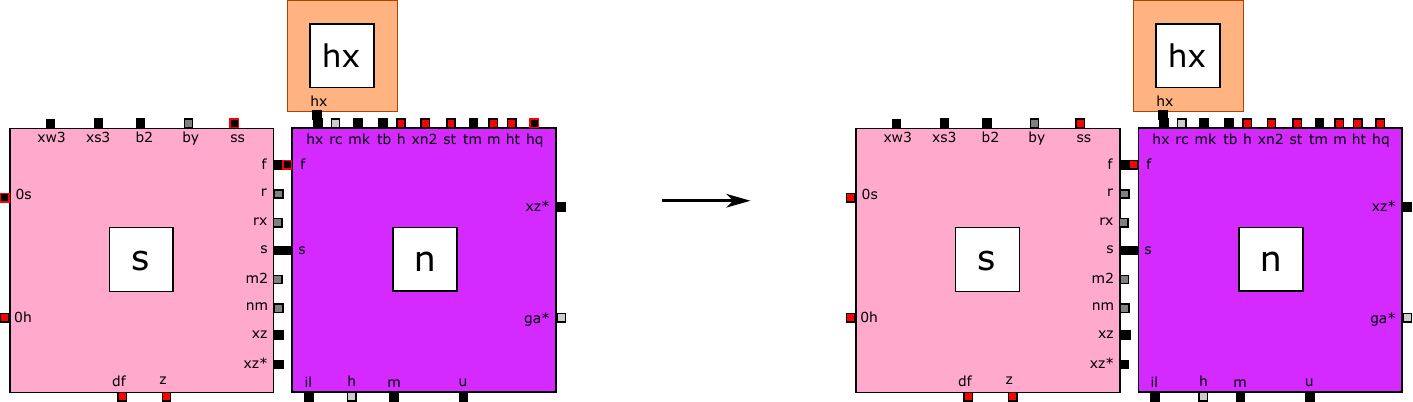}
\caption{The initiator tile detachment for the $S_{\Delta{u}}$ subassembly.}
\label{fig:junk1-3}
\end{figure}

Due to potentially problematic glues that cannot otherwise be guaranteed to turn \off/ or not interfere with assembly, many tiles are removed in groups of two or three with the addition of a blocker tile.  This is depicted in Figure~\ref{fig:junk1-4} for a set of tiles from the $S_{\Delta{w}}$ subassembly. Because these junk assemblies cannot detach from the overall assembly until a blocker tile is in place, all volatile glues are prevented from interfering with parallel and future assemblies.

\begin{figure}[htp]
\centering
\includegraphics[width=\linewidth]{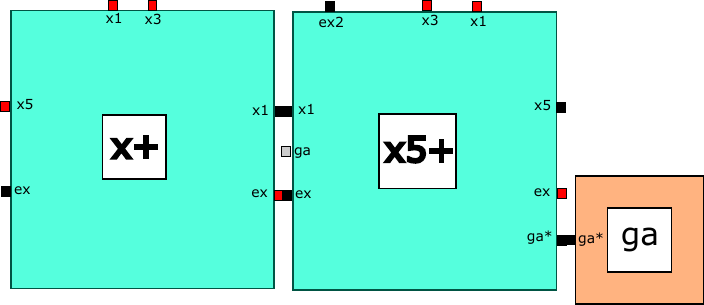}
\caption{An example of a group of tiles removed by helper tiles.}
\label{fig:junk1-4}
\end{figure}

The same principle is used for single tiles as well, as shown in Figure~\ref{fig:junk1-5}.  In these scenarios, individual junk tiles cannot detach until a blocker tile covers a potentially problematic glue.  Any other glues that could bind to separate assemblies are ensured to not cause problematic growth and are eventually turned \off/ by a helper tile that binds to the junk tile.

\begin{figure}[htp]
\centering
\includegraphics[width=\linewidth]{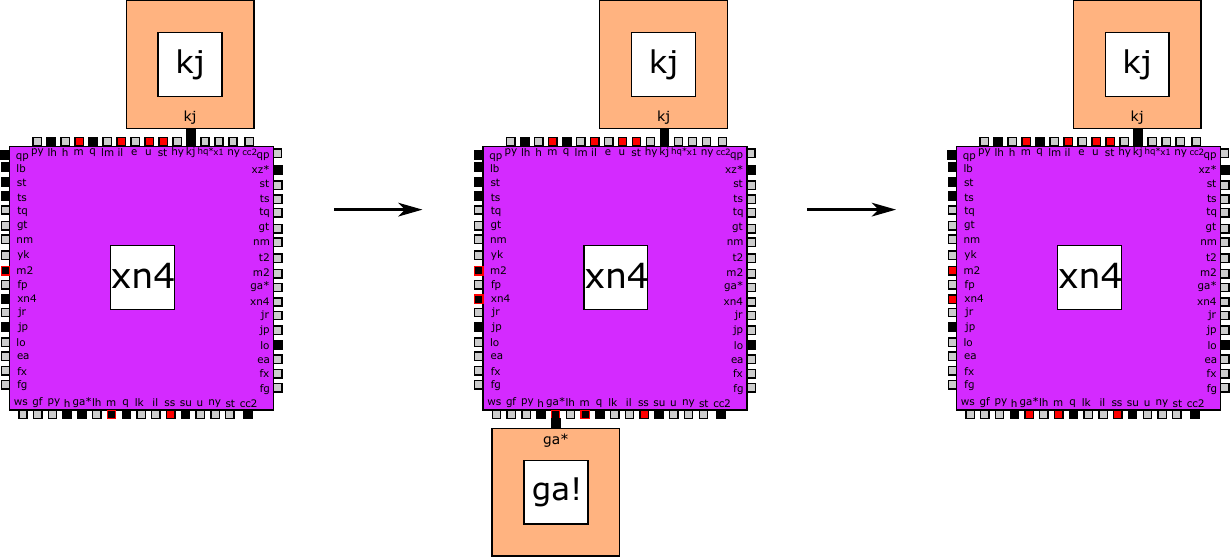}
\caption{An example of a single tile being removed by helper tiles and then acted upon by additional helper tiles to prevent interference.}
\label{fig:junk1-5}
\end{figure}

In this way, the discrete Sierpinski triangle is assembled at temperature 1 and all produced junk assemblies are of size $\le 4$.

\fi
\def\calTF{{\mathcal{T}_{\mathcal{F}}}}
\def\gpreinit/{{\texttt{preinit}}}
\def\ginit/{{\texttt{init}}}
\def\gend/{{\texttt{end}}}
\def\gpreconnect/{{\texttt{preconnect}}}
\def\gconnect/{{\texttt{connect}}}
\section{Self-Assembly of Arbitrary Discrete Self-Similar Fractals} \label{sec:app-gen}

In this section, we state and prove our main theorems, which are generalizations of the techniques of Theorem~\ref{thm:triangle}, related to the strict self-assembly of arbitrary connected discrete self-similar fractals using temperature 2 STAM systems, and an infinite set of such fractals using temperature 1 STAM systems.

\begin{theorem}\label{thm:fractals-temp2}
For any connected discrete self-similar fractal $\calF$, there exists an STAM system $\calTF = (T,2)$ such that $\calTF$ has exactly one infinite terminal supertile $\alpha$, and $\dom(\alpha) = \calF$, i.e. is exactly the discrete self-similar fractal $\calF$, and for all $\gamma \in \termasm{T}$ such that $\gamma \not= \alpha, |\dom(\gamma)| \le 2$.
\end{theorem}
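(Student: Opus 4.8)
The plan is to prove Theorem~\ref{thm:fractals-temp2} by construction, directly generalizing the machinery of Theorem~\ref{thm:triangle}. Given an arbitrary connected dssf $\calF$ with generator $G$, I would build a tile set $T$ (parameterized by $G$) whose behavior mirrors the triangle construction: a hard-coded assembly of stage two, followed by hierarchical, stage-by-stage growth in which a completed assembly representing stage $n$ nondeterministically \emph{differentiates} into one of $|G|$ roles, one for each point of $G$. Whereas the triangle used exactly three roles ($S_{\Delta{s}}$, $S_{\Delta{u}}$, $S_{\Delta{w}}$) matching the three points of its generator, here each role corresponds to a point $p \in G$ and, guided by the adjacency structure of $G$, grows an appropriate pattern of \emph{teeth} and \emph{gaps} on the sides of its bounding box that face the copies adjacent to $p$. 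Because $\calF$ is connected, the grid graph on the points of $G$ is connected, so I would fix a spanning tree (or any fixed connected traversal) of this graph and have the $|G|$ differentiated copies combine in an order consistent with it, each new copy binding to the partially assembled stage-$(n+1)$ structure along the shared boundary dictated by $G$.

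The interlocking teeth-and-gap matching plays the same role as before: it certifies that two combining subassemblies are of the correct, equal stage and are correctly aligned, and after each binding a cascade of deactivation signals (assisted by \emph{blocker} tiles) removes every transient tile so that only tiles in the fractal's coordinates survive. The essential use of temperature $2$ is to replace pure geometric hindrance with \emph{cooperative} binding for this size- and alignment-verification: a correct combination is gated behind a strength-$2$ requirement (e.g.\ two strength-$1$ glues that become simultaneously available only when the complementary teeth and gaps are exactly matched), which is what lets the construction handle generators with arbitrarily nested concavities rather than only the \emph{singly-concave} class reachable at temperature $1$. I would argue that, for any $G$, a perimeter-walking signal can locate the bounding box and measure the one-tile teeth and gaps so that the cooperative lock fires if and only if the alignment is the unique intended one.

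For the junk bound, I would show that every transient (``junk'') assembly detaches only after a blocker tile has capped any glue whose deactivation cannot be guaranteed in time, so that no junk carries a volatile active glue. At temperature $2$, once the cooperative bonds holding a transient piece to the growing fractal are deactivated, the residual \on/ glues internal to a junk fragment are all strength $1$; any cut crossing a single such glue has strength $1 < 2$, so the fragment cannot remain connected except across an edge carrying two strength-$1$ glues (or one strength-$2$ glue). Designing the detachments so that at most one such doubly-glued edge ever survives in any fragment yields the claimed bound $|\dom(\gamma)| \le 2$ for all finite terminal $\gamma$, and guarantees (together with the capping) that junk never re-attaches productively to a live assembly.

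The main obstacle I anticipate is the arbitrary geometry of $G$: designing, uniformly in $G$, the per-role teeth/gap patterns, the perimeter-measuring signals, and the combination order so that (i) every correct stage-$(n+1)$ assembly forms in exactly one way, giving a \emph{unique} infinite terminal supertile with domain $\calF$, and (ii) no incorrect combination ever locks in and no junk fragment grows beyond size $2$ or interferes. Establishing (i) amounts to a careful induction on the stage number showing the differentiated copies interlock uniquely, and the hardest sub-case is ruling out spurious bindings between copies of mismatched stages, or between junk and live assemblies, for generators whose concavities are deep or mutually adjacent --- exactly the situation where temperature-$2$ cooperation, rather than geometry alone, must be shown to forbid the unwanted attachment.
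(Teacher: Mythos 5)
Your overall architecture is the same as the paper's: hard-coded assembly of stage 2, nondeterministic differentiation of a completed stage into one of $|G|$ roles via initiator tiles, a spanning tree of the generator graph fixing the order in which substages combine, tooth-and-gap interfaces to certify equal stage and correct alignment, blocker/capping tiles for junk, and an induction on the stage number. The genuine problem is the role you assign to temperature 2. The paper does \emph{not} replace geometric hindrance with cooperative binding: in both the $\tau=2$ and $\tau=1$ constructions, stage verification remains purely geometric (a flat, filled-in side with a single tooth and gap, plus one $\tau$-strength \gconnect/ glue), and the two constructions are otherwise identical. What temperature 2 actually buys is the ability to \emph{neutralize} the strength-1 ``detection'' glues that filler rows must expose while growing across concavities: at $\tau=2$ a single strength-1 glue cannot bind two supertiles, whereas at $\tau=1$ it can, which is precisely the ``stickiness'' error of Section~\ref{sec:errors} (Figure~\ref{fig:temp1-stickiness-problem}) and the sole reason Theorem~\ref{thm:fractals-temp1} is restricted to singly-concave fractals.

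Your cooperative lock does not address this error; if anything it aggravates it. If a correct combination needs only two strength-1 glues to become simultaneously matchable, then two \emph{spuriously} exposed strength-1 detection glues --- e.g.\ on adjacent tiles of a filler row bridging a deep concavity --- matched by two adjacent wall glues of a smaller-stage supertile also sum to strength 2, producing exactly the mismatched-stage binding you must exclude. So the ``only if'' in your claim that ``the cooperative lock fires if and only if the alignment is the unique intended one'' is unsupported, and supporting it requires the paper's additional machinery anyway: keeping such detection glues below $\tau$, exposing them on only one growing row at a time, and carefully designing the decay of filler rows --- at which point the cooperative lock adds nothing over a single strength-2 connect glue. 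A related tension appears in your junk argument: transient tiles must be held with total strength $\ge 2$ while attached at $\tau=2$, so you cannot simply assert that residual internal bonds of a detached fragment are all strength 1; the paper instead drives junk down to size $\le 2$ by explicit signal cascades (the $DT$/$DTW$ glues that deactivate all other glues on a tile) together with capping tiles, a mechanism that works independently of temperature.
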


\begin{theorem}\label{thm:fractals-temp1}
For any connected discrete self-similar fractal $\calF$ which is singly-concave, there exists an STAM system $\calTF = (T,1)$ such that $\calTF$ has exactly one infinite terminal supertile $\alpha$, and $\dom(\alpha) = \calF$, i.e. is exactly the discrete self-similar fractal $\calF$, and for all $\gamma \in \termasm{T}$ such that $\gamma \not= \alpha, |\dom(\gamma)| \le 2$.
\end{theorem}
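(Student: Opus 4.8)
The plan is to generalize the construction of Theorem~\ref{thm:triangle}, replacing the three substages $S_{\Delta{s}},S_{\Delta{w}},S_{\Delta{u}}$ of the Sierpinski triangle with one substage type per point of the generator $G$. First I would hard-code a tile set whose initial, cooperation-free bindings assemble stage two $\calF_2$; since $\calF_2$ is a fixed finite shape determined entirely by $G$, this requires only finitely many tile types and only single-glue (temperature~1) attachments, using an ordered binding protocol together with \textit{adx}/\textit{x}-style locking tiles to prevent used junk tiles from rebinding into the base. From a completed stage-$n$ assembly, a nondeterministically selected initiator tile would differentiate it into one of the $|G|$ substages, each of which grows a distinguishing pattern of \emph{teeth} (one-tile protrusions) and \emph{gaps} (one-tile cavities) that encode the local geometry of $G$ at the corresponding generator point. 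Because $G$ is connected, I would fix a spanning tree of the adjacency graph on generator copies and combine the $|G|$ copies in an order following that tree, so that each attachment step mates a single new copy against the partially assembled structure across one interface.

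The heart of the argument is size- and position-verification at temperature~1 carried out purely through geometric hindrance, since no cooperative binding is available to lock relative positions. For each pair of substages that must attach, the teeth of one piece must register against the gaps of the other, and a blocker tile placed immediately adjacent to each gap would ensure that only a correctly sized piece can slide into alignment, exactly as the orange blocker beside the $S_{\Delta{u}}$ slot does in the triangle case. I would prove that precisely one connecting glue per mating pair is turned \on/, and only after the verifying geometry is in place, so that no under- or over-sized stage-$n$ copy can bind. Once alignment is verified, a wave of deactivation signals strips away every tile outside $\dom$ of stage $n+1$; each volatile \on/ glue is hidden by a blocker tile before its piece detaches, so that every junk assembly detaches with no exposed reactive glue. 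Controlling the order of detachment and blocker placement is what keeps each $\gamma \neq \alpha$ at $|\dom(\gamma)| \le 2$, and a bookkeeping argument on signal paths (as in Figure~\ref{fig:basesignals}) shows no tile is ever asked to fire the same signal twice across successive stages.

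The main obstacle, and the reason the theorem is restricted to \emph{singly-concave} fractals, is that a single tooth--gap probe can unambiguously verify sizing only along a \emph{contiguous} stretch of a concavity's boundary. For an arbitrary generator the concavities between adjacent copies may nest---a concavity sitting within the side of another concavity---and then a temperature-1 geometric probe cannot distinguish the intended alignment from a shifted one; this is precisely the ambiguity that cooperative binding resolves in Theorem~\ref{thm:fractals-temp2}. The singly-concave hypothesis rules this out: by definition, the edges along which any straight-line path from the bounding box first meets $\calF_2$ form a contiguous set, so every mating interface between two substages is a single contiguous concavity that one tooth--gap--blocker assignment suffices to lock.

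I would therefore make the geometric assignment explicit as the final step: index the points of $G$, determine for each tree edge the (contiguous) interface along which the two adjacent copies must bind, and install a tooth on one side of that interface and a matching gap-with-blocker on the other. The singly-concavity property guarantees both that each such interface is probeable by one tooth--gap pair and that the assignments are globally consistent, so that geometric hindrance alone forces correct assembly and excludes every spurious mate at temperature~1. I expect the delicate part of the write-up to be verifying this global consistency---that when a newly attached copy abuts several previously placed copies at once, the contiguous-interface guarantee holds simultaneously for all of them---and that the decay signals never strand a volatile glue on a junk piece larger than size $2$.
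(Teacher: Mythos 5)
Your overall architecture does match the paper's: a hard-coded stage 2, nondeterministic initiator-driven differentiation into $|G|$ substage types, a spanning tree of $G$ fixing the order of combination, tooth/gap geometric hindrance guarded by blocker tiles, and size-$\le 2$ junk. However, there is a genuine gap at the heart of your size-verification argument. A tooth-and-gap pair can certify that two supertiles represent the same stage only if every mismatched pairing is guaranteed to produce a steric collision, and the boundary of a fractal substage is anything but flat---it contains concavities at all scales. If a smaller-stage copy presents its \texttt{connect} glue, its mis-offset tooth will in general not collide with anything: it can simply land inside one of the many unrelated concavities of the larger copy's side, so at temperature 1 that single glue bond is stable and the assembly is corrupted. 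The paper closes this hole with a mechanism your proposal never constructs: before any connection is enabled, a perimeter path grows around the substage and rows of filler tiles are placed into every concavity of the connecting side (Section~\ref{sec:perimeter-paths}, Figure~\ref{fig:generator-example1-filled-paths}), making that side completely flat except for the single tooth and gap; only then is the \texttt{connect} glue activated, and the filler tiles detach only after the complementary supertile has bound. It is flatness plus the tooth/gap offset, not the tooth/gap alone, that forces every wrong-size or wrong-substage pairing into a collision.

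This omission also leads you to misdiagnose the singly-concave hypothesis. You attribute the restriction to an ``alignment ambiguity'' of the geometric probe that cooperative binding resolves; in fact the paper's $\tau=1$ and $\tau=2$ constructions use identical geometry, and cooperativity is never used to lock relative positions of the two supertiles. The actual temperature-1 failure mode (Section~\ref{sec:errors}, Figure~\ref{fig:temp1-stickiness-problem}) is created by the filler rows themselves: a row growing into a concavity must expose active glues in order to detect the far wall, and when a concavity sits within the side of another concavity, such a row grows over positions with nothing beneath it while those glues are exposed. At $\tau=2$ these detection glues are strength 1, so no supertile can stably attach to them; at $\tau=1$ a single strength-1 glue already forms a stable bond, so a smaller substage supertile could attach to the hanging row and incorrectly link supertiles of different stages. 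Singly-concavity is precisely the condition that prevents filler rows from ever being exposed over nested concavities. Without the filler mechanism and this analysis, your write-up asserts, rather than proves, that geometric hindrance alone excludes all spurious bindings at temperature 1.
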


\begin{figure*}[tp]
\centering

  \subfloat[][]{%
        \label{fig:initiators-triangle}%
        \includegraphics[width=1.2in]{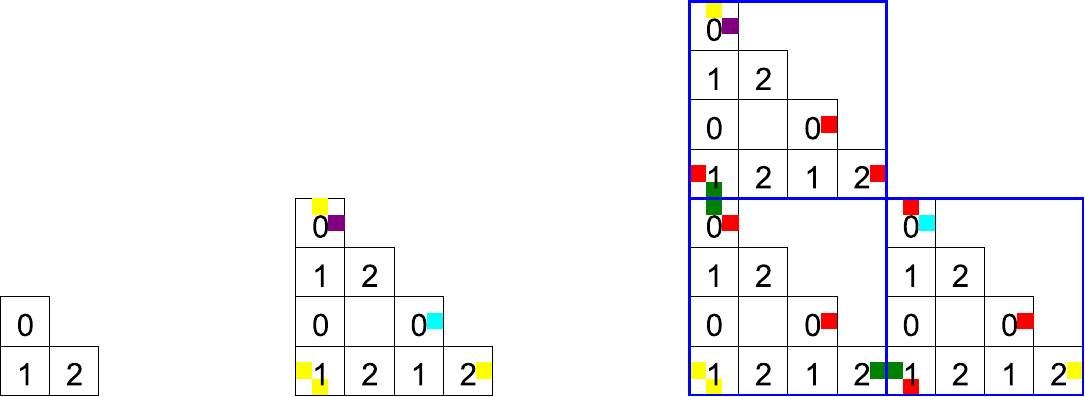}
        }\quad\quad
  \subfloat[][]{%
        \label{fig:initiators1}%
        \includegraphics[width=2.5in]{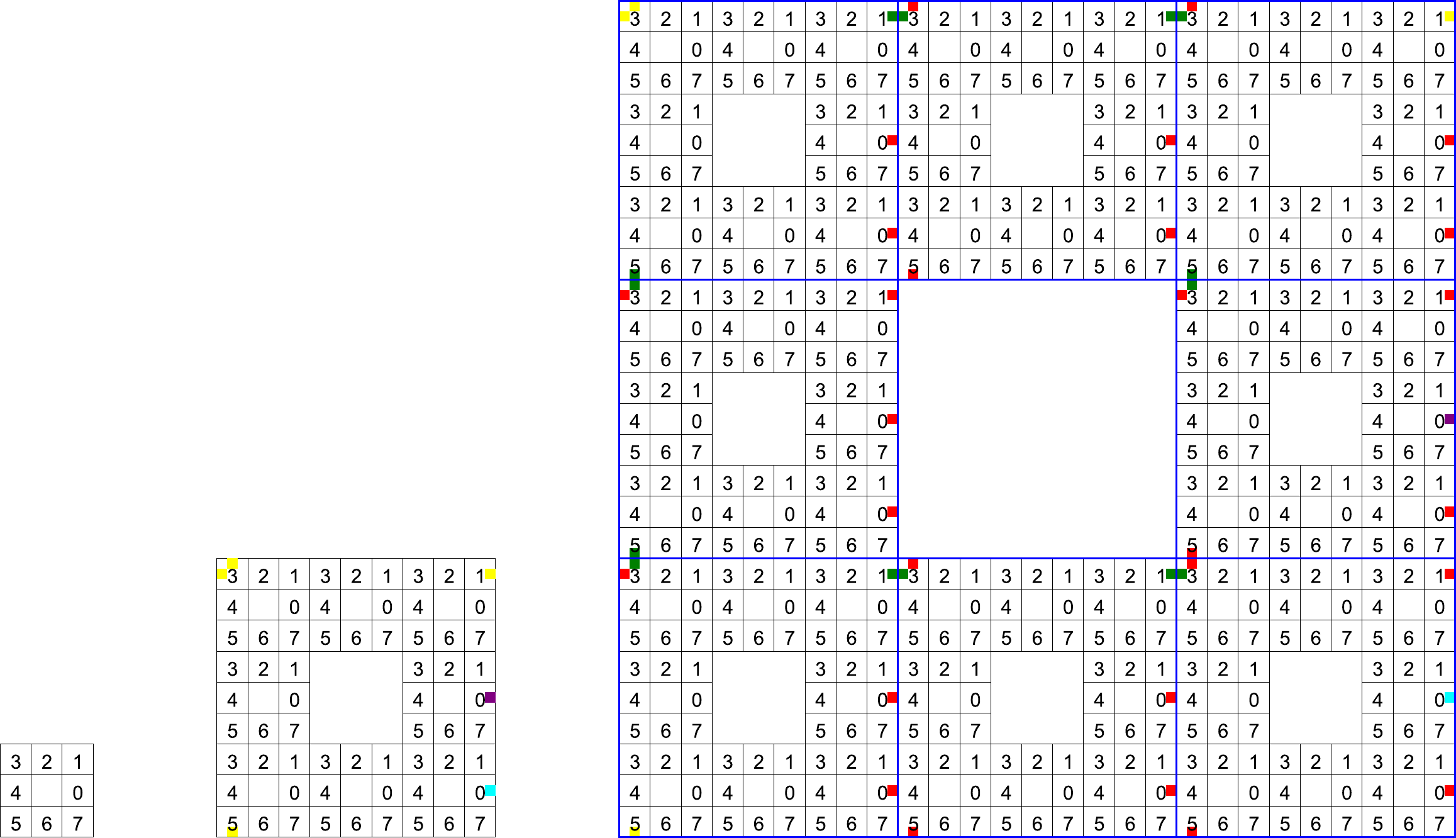}
        }\quad\quad
  \subfloat[][]{%
        \label{fig:initiators2}%
        \includegraphics[width=2.5in]{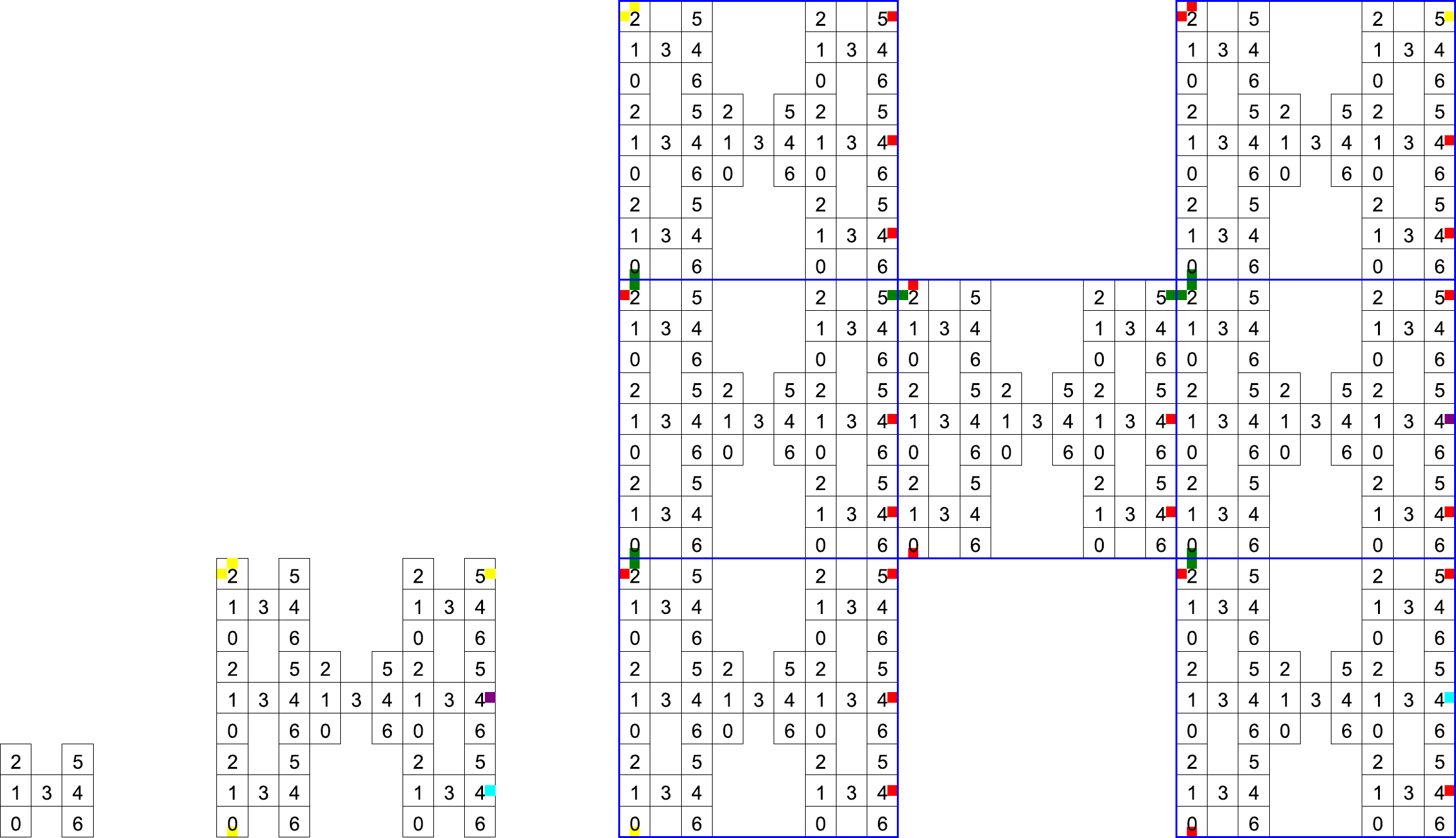}
        }

  \caption{Three fractals, each with its generator on the left, stage 2 in the middle, and stage 3 on the right. Initiator locations with \ginit/ glues active are colored aqua, and locations with \gpreinit/ glues active in purple.
  }
  \label{fig:initiators}
\end{figure*}

\begin{proof}

The proofs of both Theorem~\ref{thm:fractals-temp2} and Theorem~\ref{thm:fractals-temp1} are nearly identical, with only a very small difference, so we present a single proof and will mention that single point of divergence within it.  Our proof is by construction of the necessary STAM systems given fractal definitions as input.  The following provides an overview of our construction, and we note that a very large portion of the basic tiles and gadgets needed to carry out this construction are based on those used in the construction for the Sierpinski triangle, often requiring differences in just scaling, rotation, or number of signals.  Therefore, we present most of the tiles and signals at a high enough level to convey the design and rely on subsets of the specific designs of tiles and signals from the previous construction, only presenting specific tiles and signals in locations which are significantly unique.

Let $\calF$ be a discrete self-similar fractal, and let $G$ be the generator for $\calF$.  We now present a description of how to algorithmically generate a tile set $T$ such that the 2HAM system $\calTF = (T,1)$ creates a unique infinite terminal supertile of shape $\calF$, along with all other terminal assemblies being ``junk'' assemblies whose sizes are all $\le 2$.

Our construction will start growth of $\calF$ at stage 2, i.e. $\calF_2$, with tiles from $T$ directly combining to form $\calF_2$ in such a way that every point of $\calF_2$ receives a unique tile type, which is done by creating a tile type for each position with a glue unique to each pair of adjacent sides between two such tiles.  This means there are unique glues to each pair of interior edges, and to create the glues for the exterior edges we will perform an analysis of the generator as follows.  At this point we also mention that, since we are discussing the constructions of both Theorem~\ref{thm:fractals-temp1} and Theorem~\ref{thm:fractals-temp2} in parallel, all signals, tiles, and glues throughout both constructions are identical except for the fact that in each, all glues are $\tau$-strength, where $\tau=2$ for the construction of Theorem~\ref{thm:fractals-temp2} and $\tau=1$ for the construction of Theorem~\ref{thm:fractals-temp1}, for every case except for a very few glues which will be specifically mentioned later.

\begin{figure}[htp]
\begin{center}
\includegraphics[width=3.2in]{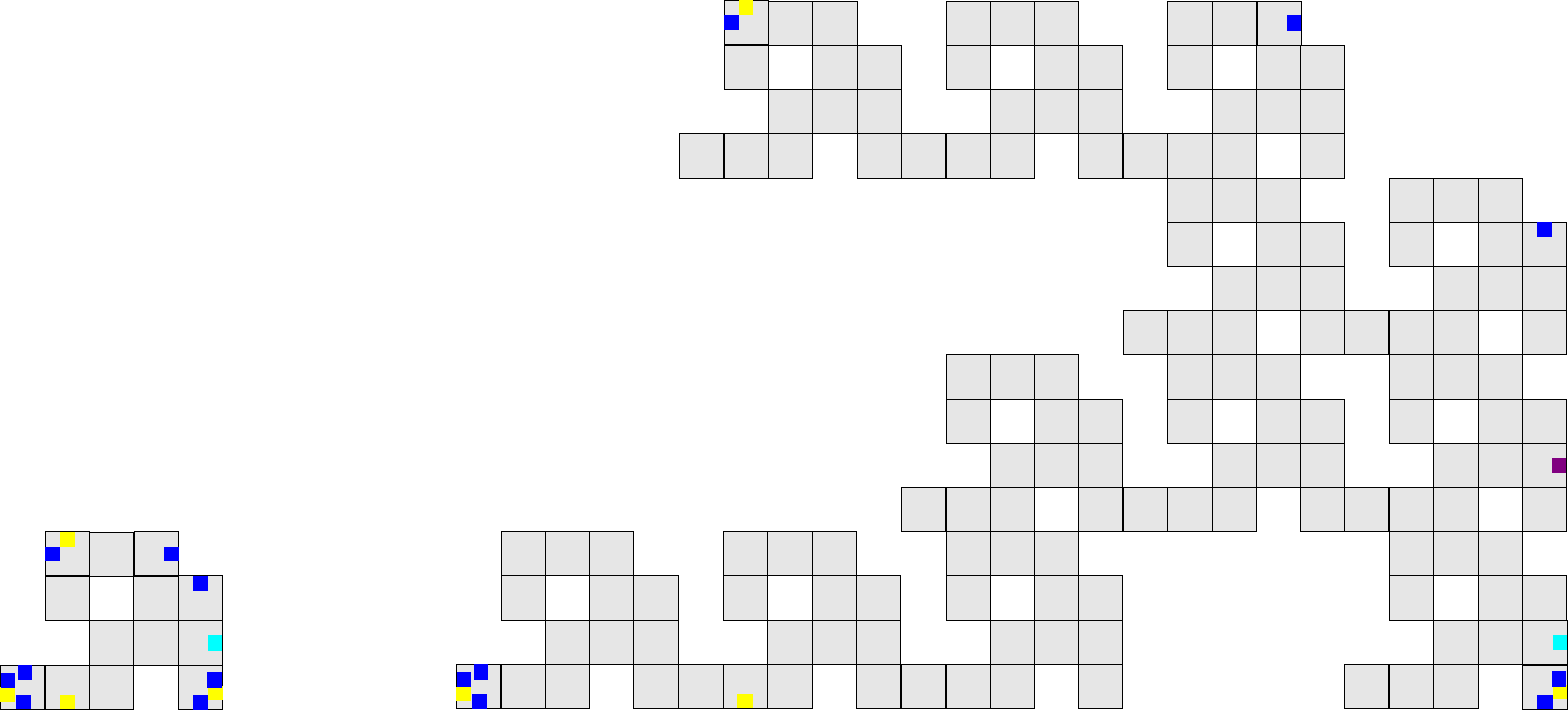}
\caption{An example of a generator and stage 2 with the locations of \gend/ (blue), \gpreconnect/ (yellow), \ginit/ (aqua), and \gpreinit/ (purple) glues shown. Note that those positions are first marked in stage 2 and not in the generator, but they are shown here with markings to demonstrate their locations relative to the generator.}
\label{fig:generator-example1-connectors-append}
\end{center}
\end{figure}

First, we will locate a perimeter location on $\calF_2$ to serve as the \emph{initiator} location where tiles can attach to begin growth of stage 3. Let $\vec{\alpha} \in \calF_2$ be the rightmost point in the second row (from the bottom) of the rightmost copy of $\calF_1$ in the bottom row of copies of $\calF_1$ in $\calF_2$. (See Figure~\ref{fig:initiators} for a few examples marked as aqua tiles in each stage 2.)  The tile which attaches in position $\vec{\alpha}$ has an east facing glue of type \ginit/ (which will allow for the attachment of an ``initiator'' tile).  Additionally, let $\vec{\beta} \in \calF_2$ be the rightmost point in the second row of the rightmost copy of $\calF_1$ in the second row (from the bottom) of copies of $\calF_1$ in $\calF_2$. The tile which attaches in position $\vec{\beta}$ has an east facing glue of type \gpreinit/.  Intuitively, the tile at this position will have the ability to transition into a state with an \ginit/ glue active if it happens to eventually be located in the unique position of a fully completed stage allowing it to begin the transformation of that assembly into one of the substages of the next stage.  (Note that the initiator location must be in a location which is maximal for the side it is on, in this case the east side.  Therefore, if the location that was selected above does not happen to be on an easternmost tile, we will instead put it on either location $(1,1)$ or $(0,1)$, one of which must be in the generator of a connected discrete self-similar fractal and thus also $\calF_2$.  Then, all directions of the following construction can be either rotated 90 or 180 degrees clockwise, respectively, and references to the location of the initiator point can be updated to the new location.  This results in conceptually the same construction so, without loss of generality, we will continue or description assuming the originally stated location and assuming it is on an easternmost location of $\calF_2$.)

We mark the extreme points of each side with special exterior glues.  In the westernmost column of tiles which form $\calF_2$, the topmost has a northern glue of type \gend/ which is specific to that corner, and similarly so does the southern glue of the bottommost tile.  Analogously, such ``end marker'' glues are located on all 4 sides.  See Figure~\ref{fig:generator-example1-connectors-append} for an example (specifically, the blue glues).
We also place special \gpreconnect/ glues at the four locations which could possibly be used to connect one copy of a stage to another, depending on which substage locations they represent within the next larger stage.  These are located on the north and south sides of the leftmost location which both the top and bottom row, respectively, have in common.  There is guaranteed to be at least one such location since the fractal is connected, and therefore at least one position aligns between two copies which are placed with one directly above the other.  Similarly, the horizontal connection points are marked.  See Figure~\ref{fig:generator-example1-connectors-append} for an example (specifically, the yellow glues).  Note that on there will actually be a set of different types of \gpreconnect/ glues in each such position, and the determination of those types will be discussed later.  The final glues that we add to the perimeter of the tiles of $\calF_2$ are glues which are specific to the direction that each of the external edges of any of those tiles face, i.e. there are \emph{directional} glues labeled $N$,$E$,$S$, and $W$ and $N'$,$E'$,$S'$, and $W'$ which are on each of the corresponding sides of tiles which are on the exterior of $\calF_2$.  Those glues are \on/ on any tile sides which do not have any of the previously mentioned special glues such as \gpreinit/, \gpreconnect/, etc., active.  For sides with one of those special glues \on/, the directional glues begin in the \latent/ state, and when the special glues are all turned \off/ on such a side, the directional glues of that side is then switched \on/.  Additionally, these are some of the few glues which differ between the constructions of Theorem~\ref{thm:fractals-temp1} and Theorem~\ref{thm:fractals-temp2}, and instead of being $\tau$-strength for each, the directional glues are strength-1 in both constructions.

We now define a spanning tree $T_G$ through the graph of $G$ such that the root is at the rightmost location of the bottom row of $G$ (i.e. $(x_{max},0)$ for \\ $x_{max} = \max(\{x | (x,y) \in G \}$) and which is constructed as follows.  Beginning from point $(x_{max},0)$ and assuming an orientation facing to the west, add the second node by adding either the node to its west if it exists in the generator, and if not, turn right (facing north) and add that point.  Continue by following a depth-first-search to build a spanning tree which always attempts to take the leftmost available option when there are more than one adjacent points which have not already been added to the tree.  See Figure~\ref{fig:generator-trees} for examples.

\begin{figure*}[htp]
\begin{center}
\includegraphics[width=\linewidth]{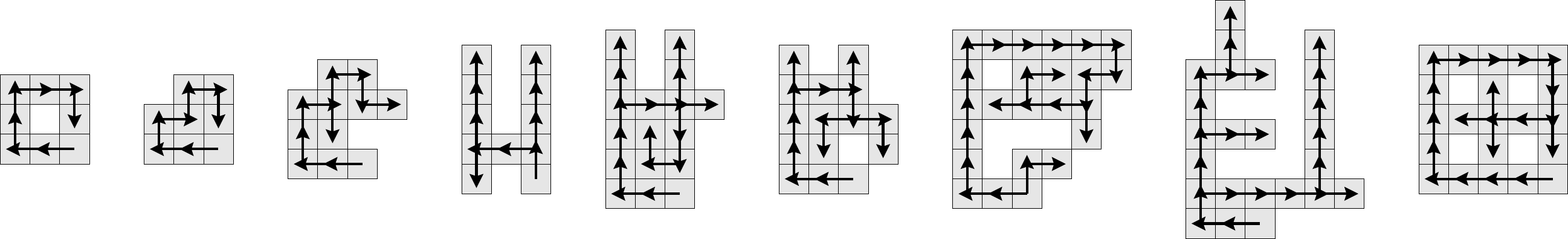}
\caption{Examples of generators and spanning trees through them.}
\label{fig:generator-trees}
\end{center}
\end{figure*}

Using the spanning tree $T_G$, we will now define an ordering by which the substages of stage $i$, for $i > 2$, combine to form that stage. Simply, we start from the furthest leaf from the root in $T_G$ (and if there are multiple leafs at the same distance, randomly select one).  The substage representing that node will first connect to one representing the location which comes immediately before it on the path backward to the root. Subassemblies representing each leaf location will combine to those preceding them on their branches, with each branch assembling in parallel. Whenever there is a ``base'' node from which more than one branch goes outward, an ordering by which the branch assemblies attach to the assembly representing that base node is set as follows.  From the side $d$ of that node which attaches to the node along the path back to the root, i.e. the one which is closer to the root in $T_G$, traverse the sides in counterclockwise (CCW) order and set the ordering of attachment to be the order in which the branches are encountered, with side $d$ being the final side in the ordering. After this attachment, growth continues toward the root in the same way until reaching the root.  Examples of orderings can be seen in Figure~\ref{fig:generator-trees} by following the paths in reverse.

Beginning from the initial formation of stage 2, our construction will guarantee that only once a stage is complete, i.e. the assembly representing it has a tile in every position which is defined for that stage of the fractal (assuming the assembly is translated to $(0,0)$), can the necessary glues activate and structures form which allow it to combine to other assemblies representing exactly the same stage.  For stage 2 assemblies, this is enforced by a ``circuit'' of signals which are encoded in the tiles so that as tiles bind to form a stage 2 assembly, a ``completion'' signal propagates so that if and only if every tile of stage 2 has attached does that signal arrive at the single tile which has the \ginit/ glue on its east side.  At that point, that glue is turned from \latent/ to \on/.  This guarantees that only completed stage 2 assemblies can begin the process of combining with other assemblies, and starting from that base case we will prove inductively by our construction that all subsequent stages must correctly complete before allowing additional combinations.

Therefore, assume that only assemblies representing complete stages of $\calF$ have \ginit/ glues which are \on/, and that each such assembly has exactly one \ginit/ glue active on its perimeter.  (Note that it may be the case that there are some extra tiles bound to such an assembly for stages $> 2$, but any such tiles not included in the domain of the matching stage of $\calF$ will be guaranteed to fall off eventually.  Although the exact timing of their detachments cannot be guaranteed due to the asynchronous nature of the STAM, no such lingering tiles can cause incorrect growth or attachments, but can only temporarily delay correct forward growth of the ultimately infinite structure.)

Let $\alpha_i$ be a producible supertile in $\calTF$ which represents stage $i$ of $\calF$ and which has a single \ginit/ glue active on its perimeter.  In $T$, there are a subset of tile types called \emph{initiators}, and there are exactly $|G|$ of them, i.e. one for every point in the generator of $\calF$.  Nondeterministically, a tile of any initiator type can bind to $\alpha_i$.  Assuming that the initiator which corresponds to $\vec{v} \in G$ binds to the \ginit/ glue of $\alpha_i$, and that in the tree $T_G$ the node at location $\vec{v}$ is connected to nodes in the set $\vec{V}$, that will begin the following series of tile attachments and detachments which ultimately result in $\alpha_i$ binding to supertiles representing each of the points in $\vec{V}$.  We call this the process of \emph{differentiation} since an identical set of supertiles nondeterministically (by the type of initiator which attaches) each transform into supertiles able to represent different substages of the next stage.

\subsubsection{Substage differentiation}

For each initiator tile type, there is a unique set of tile types which bind to it to grow paths and portions of assemblies necessary for the particular process of differentiation needed for the generator point that the supertile will eventually represent.  The actions needed for differentiation to any particular type are completely determined by the shape of the generator, and independent of the stage of $\calF$ being formed.

As previously mentioned, as soon as a supertile completes growth into a stage of $\calF$, it has exactly one \ginit/ glue active on its perimeter.  Another invariant which is maintained for all such stage-complete supertiles is that they also have exactly the following numbers of each of the following types of glues active on their perimeters:  one \gpreinit/ glue (which is on the east side), one \gpreconnect/ glue on each side, and one \gend/ glue on each of the two extreme points in each direction (e.g. the north sides of the northernmost of the westernmost tiles and the northernmost of the easternmost tiles are marked as the extreme left and right points, respectively, for the north side).  Since every such location in a stage 2 supertile has the corresponding glue \on/, as those supertiles combine to form larger stages, in order for the proper number of each type of glue to be active, appropriate subsets of those glues must be deactivated (and sometimes others, such as \gconnect/ activated).  We now discuss how it is determined which such glues of each differentiating substage need to be activated or deactivated.

\begin{enumerate}
    \item \gend/ glues:  A key observation to the management of the end marker glues is that, for instance, the tile with the left \gend/ marker of the north side of stage $i+1$ is the northernmost of the westernmost tiles of stage $i+1$, which is exactly the northernmost of the westernmost of the tiles that make up the substage $i$ supertile which attaches in the northernmost of the westernmost positions of $i+1$. Thus, for every \gend/ marker glue, there is exactly one substage which its supertile can differentiate into for the next stage in which that glue should remain \on/.  Therefore, during differentiation, every \gend/ marker glue is forced to turn off except for those which will also mark the extreme points of the next stage.
    \item \gpreconnect/ glues:  There is a \gpreconnect/ glue active on the side of each direction.  By noting which sides of the location $\vec{v}$ in $T_G$ are attached to neighboring points, we can determine which subset of directions will require that their \gpreconnect/ glues are deactivated and replaced with \gconnect/ glues (so they can bind to the necessary substage supertiles, to be described below). Furthermore, as previously mentioned, there are actually a set of \gpreconnect/ glues in each such location, with one being specific to each pair of substage locations which could bind along that boundary.  In the following, we assume that the correct \gpreconnect/ glue of that set is selected to turn on a similarly specific \gconnect/ glue and discuss the paths of tiles which do this later. Now, in a manner analogous to the way the \gend/ glues correspond to positions in the supertile of the next stage, we note that only a supertile which forms a substage corresponding to a generator position which has the northern \gconnect/ glue can be in a position in the next stage to possibly contain the northern \gconnect/ glue which may be needed to bind the supertile of that stage to another.  Thus, we can determine if any of the \gpreconnect/ glues which were not replaced with \gconnect/ glues need to remain as \gpreconnect/ glues for the next stage, in which they may be or may not be replaced or deactivated, depending on which substage differentiation occurs.  Finally, for the remaining directions for which the \gpreconnect/ glues were neither deactivated and replaced with \gconnect/ glues, nor was it determined that they must remain as active \gpreconnect/ glues, the \gpreconnect/ glues of those sides are turned \off/.
    \item \ginit/ glue: This glue is directly bound to by the initiator tile, and is turned \off/ after attaching to the initiator tile and then receiving a signal from that tile (which it sends after verifying that the first tile along the path which grows from it has connected to the substage supertile).
    \item \gpreinit/ glue: The final substage of stage $i+1$ which will bind to complete the formation of stage $i+1$ will be that corresponding to the location of the root of the tree $T_G$, which will contain the single \gpreinit/ glue of the forming stage which needs to be replaced with a \ginit/ glues.  For each other substage, its \gpreinit/ glue will be turned \off/.
\end{enumerate}

Given the fixed set of glues which need to be activated and deactivated for each differentiating substage, along with the locations and orderings of the sides with which the substage will need to bind to other substages, a unique set of tiles for each substage type (i.e. corresponding generator location) is created and added to $T$ which can specifically attach to that type's initiator tile and form the paths and growths necessary for the differentiation process.

\begin{figure}[htp]
\centering
  \subfloat[][]{%
        \label{fig:generator-example1-filled-paths}%
        \includegraphics[width=2.2in]{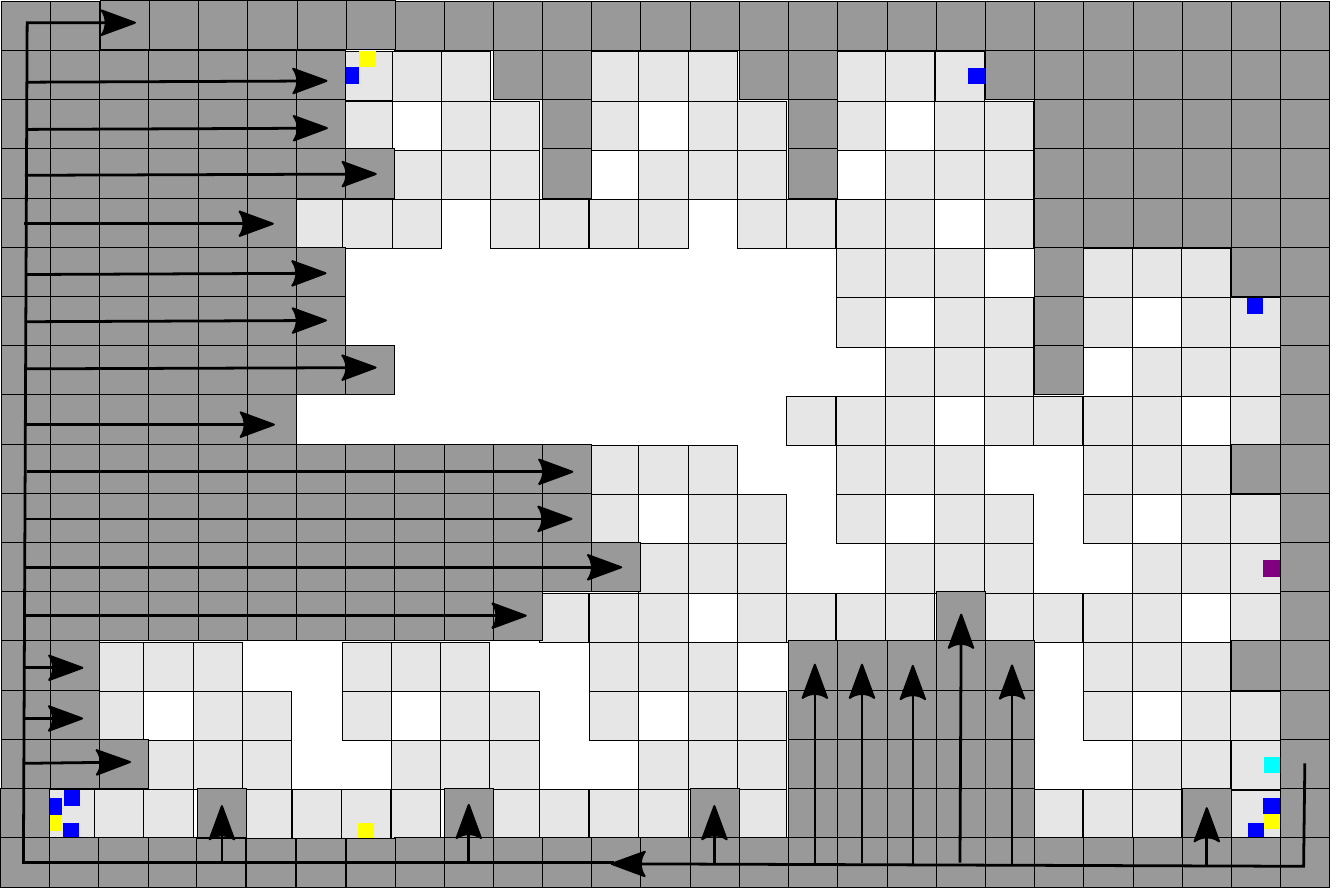}
        }
  \quad\quad
  \subfloat[][]{%
        \label{fig:generator-example1-filled}%
        \includegraphics[width=2.0in]{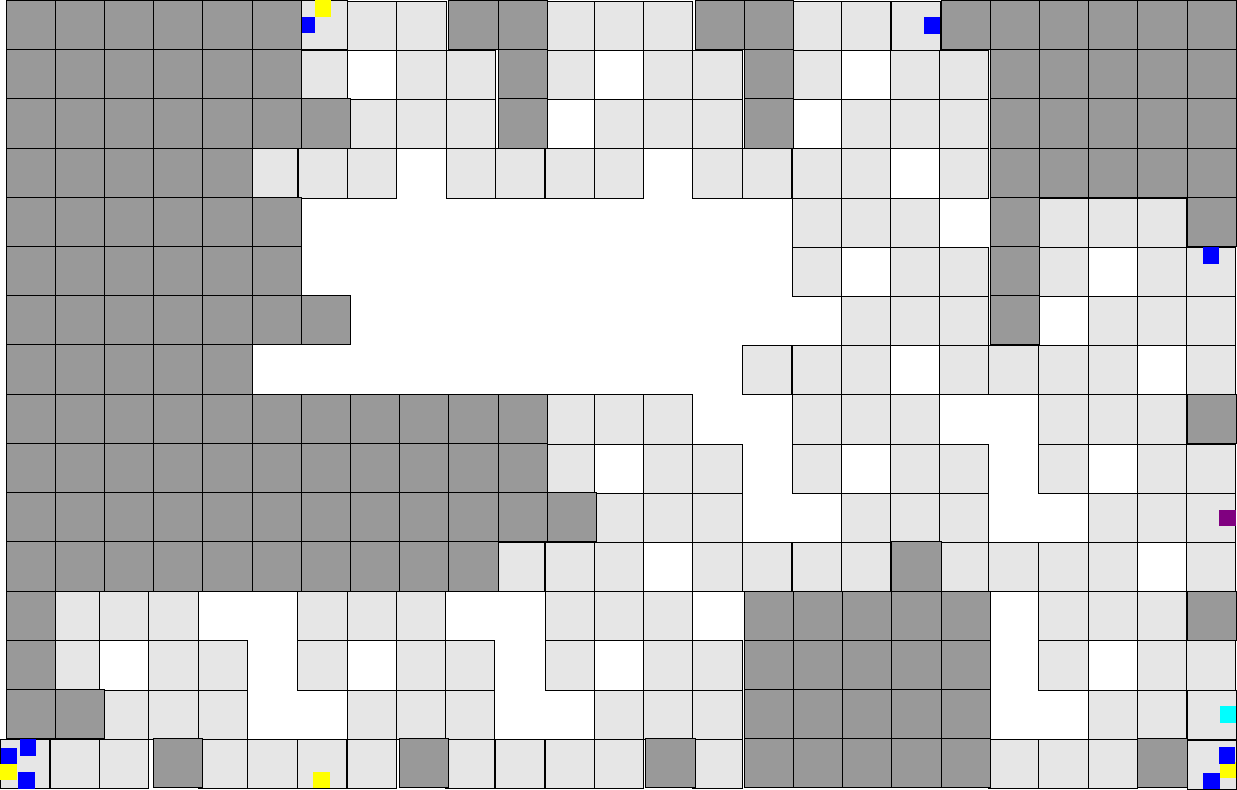}
        }
  \caption{Example of how sides of a substage are filled in by a perimeter path.}
  \label{fig:generator-example1-filled}
\end{figure}

Since the ordering in which sides of a substage will need to attach to others is fixed and occurs in CCW order, a path of tile grows from the location of the \ginit/ glue around the perimeter in a CCW direction until reaching the side which first needs to allow a connection.  The specific way in which such a path is able to grow around the perimeter of a substage is explained in more detail in Section~\ref{sec:perimeter-paths}, but the general scheme is depicted in Figure~\ref{fig:generator-example1-filled-paths}.  It is worth noting that the final difference between the constructions of Theorem~\ref{thm:fractals-temp2} and Theorem~\ref{thm:fractals-temp1} occur in the tiles which form the paths which fill in the perimeters.  Specifically, the glues on the perimeter paths and rows of tiles which grow from them into concavities of the substages have $N$,$E$,$S$, and $W$ glues which are able to attach to those on the sides of the substage to ``detect'' those sides, and these are all strength-1 in both constructions (to match those on the substages).  Only once they attach to matching glues on the substage tiles do they activate the corresponding $N'$,$E'$,$S'$, or $W'$ glues which then bind to strengthen the connection (to be required later). This single difference in the two constructions, in the strength of initial binding of these filler row tiles to the sides of the substages, is necessary to prevent a problem where at temperature 1 those filler rows could allow binding between supertiles representing substages of different stages if those substages have appropriately shaped concavities.  (An example can be seen in Section~\ref{sec:perimeter-paths} for more details.)

Each initiator tile initially has only a single glue in the \on/ state which binds to the \ginit/ glue.  After this binding, it activates a glue which allows the first tile of a new path to bind, and each subsequent tile in the path attaches in a similar fashion.  While growing to the first side which needs to be configured to allow a connection, if and when the path passes the last such side in the ordering, for all sides in between the last and the first, it sends the necessary signals to deactivate and/or activate any of the marker glues (e.g. \gend/ glues) necessary for that side, as determined following the procedure mentioned above. More details about the process of activating and deactivating the marker glues can be found in Section~\ref{sec:perimeter-paths}.  Whenever growth of a side has completed and the tiles which form the corner for the new side have attached, signals begin the detachment of the filler tiles and then the perimeter path of the completed side in the manner described in Section~\ref{sec:perimeter-paths}.

Upon reaching the first side which is to be configured for a connection, the following process occurs:
\begin{enumerate}
    \item The perimeter path and filler tiles grow to completely cover the side and reach the far \gend/ marker.
    \item Then, a signal is passed back through the outermost row which causes every location other than that which will become the \emph{tooth} to detach and also for the adjacent location which will be the \emph{gap} for the complementary supertile's tooth, to detach.  By default, on a western side of a supertile, the tooth is in the very top location and the coordinate of the gap is offset by $(1,-1)$ from that, and an eastern side has its tooth in the second to topmost location and its gap offset from that by $(-1,1)$.  (See Figure~\ref{fig:ew-interface1} for a high-level example.)  For north sides, by default the tooth is in the easternmost location with the gap offset by $(-1,-1)$ and for south sides the tooth is in the position left of the easternmost with the gap offset by $(1,1)$.  A special case can occur when one side of a generator has only one point filled in.  For example, see the west side of the generator in Figure~\ref{fig:generator-example1-connectors-append} where only the southernmost location exists.  In such a side, it must be the case that the tooth and gap geometry occurs opposite that location so that location can contain the \gconnect/ glue used for the attachment.  Therefore, if the generator in Figure~\ref{fig:generator-example1-connectors-append} happened to have its only western point at the top, then the teeth and gaps of supertiles binding along east and west edges would be moved to the southern sides of those edges.
    \item The signal passing through the outermost row will pass the tile adjacent to the tile with the \gpreconnect/ glue.  There, a signal is sent to that tile to activate its \gconnect/ glue.  (See Figure~\ref{fig:ew-interface1} for an example.)
    \item The activation of the \gconnect/ glue indicates that the side is configured for its connection.  Once the outer row of tiles and the tile in the location of the gap have detached, the \gconnect/ glue is exposed and the side is able to connect to the complementary substage supertile.  Note that the \gconnect/ glue which is actually turned on is selected by the specific type of path tile which activates it (which was ultimately determined by the type of initiator tile which began the path) which correctly selects from the set of \gpreconnect/ glues which are specific to each pair of adjacent sides which could be connecting. Given the requirement that the tooth of a side is in place before its \gconnect/ glue can be activated and that the rest of the side must already be completely filled in other than the single gap location, no supertile can combine to a supertile representing a substage different stage, an incorrect substage of the same stage, or a misaligned supertile.  (Please see Section~\ref{sec:correct-bindings} for more details.)
\end{enumerate}

\begin{figure}[htp]
\begin{center}
\includegraphics[width=\linewidth]{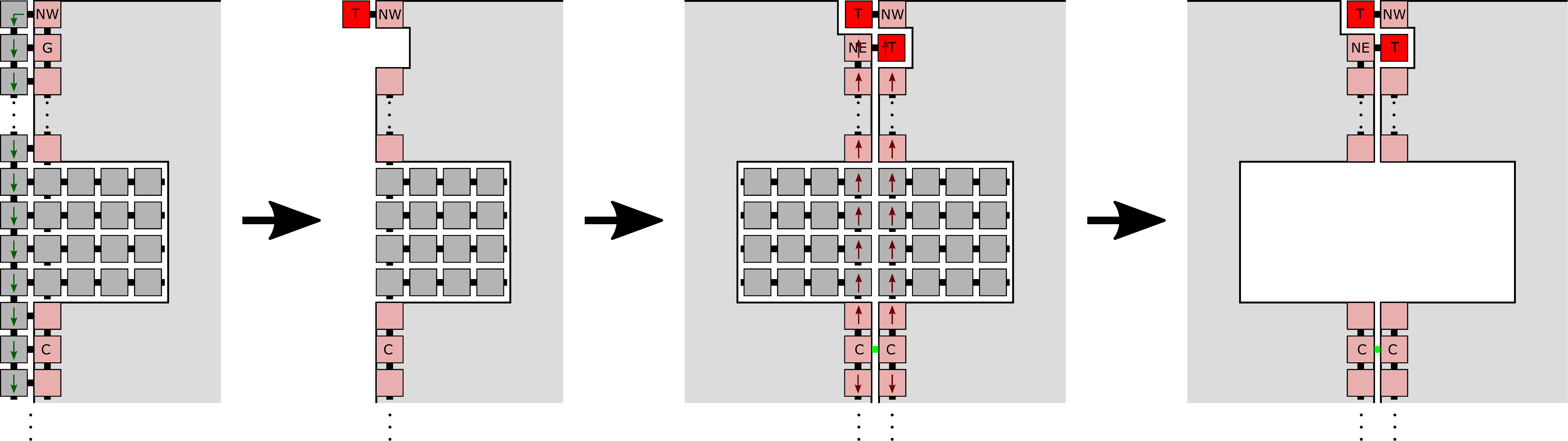}
\caption{An example of the growth process of a tooth and gap on the western side of a supertile preparing to bind to another supertile along that side.}
\label{fig:ew-interface1}
\end{center}
\end{figure}

A complementary supertile binds to this side via the \gconnect/ glue, and this glue causes a signal to begin which initiates the detachment of the filler tiles used on that side.  A signal also propagates to the next side in the CCW direction to continue the differentiation process for each side.  Details of this process can be seen in Section~\ref{sec:perimeter-paths}.  Whether or not a side is configured for, and then participates in, a connection with another substage supertile, the perimeter path and filler tiles are signaled to detach when necessary, and the ordering of those detachments as well as the specific glues which are deactivated are mitigated by ``helper'' tiles similar to the way there were in the construction of the Sierpinski triangle.  These helper tiles generally bind on one side of a tile which must detach in order to cover active glues which could otherwise allow the dissociated and freely floating tile to later incorrectly bind to another assembly.  The active glues of the other side can then be guaranteed to be deactivated because only then can the tile dissociate, resulting in sterile size-2 junk assemblies.  (See Section~\ref{sec:perimeter-paths} for additional details of this process.)

We now note that the signals which are used to configure the consecutive sides of the substage supertiles either (1) pass through tiles which will later detach and become junk assemblies, or (2) pass through tiles which will remain within the fractal-shaped assembly but which are contained along a side which has just bound to another substage.  Since such tiles are now buried within the interior of the newly formed supertile, they will never again be required to pass signals and therefore there is no need for signal reuse (which is not allowed within the STAM).

Finally, we note that by design, the last substage supertile to differentiate will be the one representing the root position of $T_G$ and thus which has the \gpreinit/ glue which will need to be replaced with a \ginit/ to allow differentiation for the next stage.  Additionally, the ordering of side differentiation for that substage ensures that that side is the final to be reconfigured, meaning that only after all necessary connections have been made will the \ginit/ glue be activated. (And since this supertile represents the root node, any which attached to it could only have attached after completing all of their other attachments.) This ensures that all necessary growth has occurred for the newly completed stage, although there may still be perimeter path and filler tiles which need to finish their deactivations and detach.  (Again, this is due solely to the asynchronous nature of the STAM which makes it impossible to know exactly when a tile which has received a signal to deactivate a glue will actually turn that glue \off/.)  Nonetheless, such lingering tiles can cannot cause any incorrect binding behaviors, but can only temporarily delay the correct growth of the next stage.

At this point, the assembly of a new stage is complete, with the guarantee of all necessary, correctly formed and sized substage supertiles being attached, with the exactly correct number and placement of marker glues being active on its perimeter.  It then nondeterministically binds with one of the initiator tiles to begin the differentiation process for the next stage.  By the guarantees of the complete and correct construction of a supertile representing stage $i+1$ assuming the complete and correct assembly of the supertiles representing stage $i$, the correctness of the construction in generating the infinite number of stages in the shape $\calF$ is proven.
\end{proof}

\subsection{Errors at $\tau=1$ with general fractal shapes} \label{sec:errors}

\begin{figure}[htp]
\centering
\includegraphics[width=3.0in]{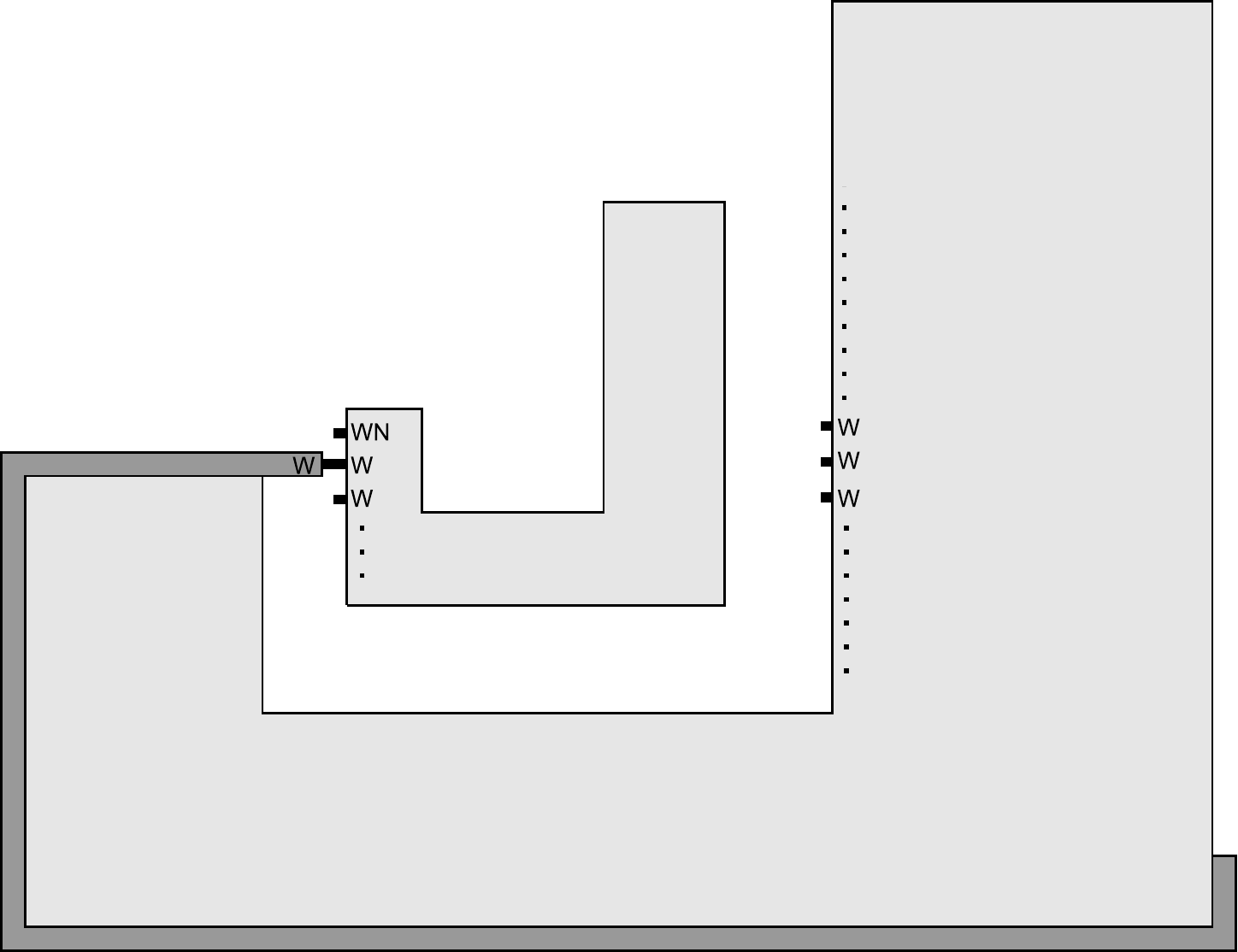}
\caption{A schematic depiction of the error that could occur in a $\tau=1$ system in which the fractal shape contains certain types of concavities.}
\label{fig:temp1-stickiness-problem}
\end{figure}

Figure~\ref{fig:temp1-stickiness-problem} shows why there must be a difference between the temperature-2 construction which works with general discrete self-similar fractal shapes, and the $\tau=1$ construction which only works for a constrained subset of fractal shapes.  Specifically, if a row of filler tiles can grow to a position where there is nothing below it, but its tiles also have active glues which are able to detect their eventual collision with the far wall, then at $\tau=1$ a supertile representing a smaller substage assembly could possibly bind to that row, thus incorrectly connecting supertiles of different stages.  This problem is avoided in the $\tau=2$ construction by making these glues strength-1, thus preventing them from having enough strength to bind the supertiles, along with the fact that only one such row can be growing at a time and thus exposing such glues, and the pattern of the eventual decay of those filler rows is also carefully designed to prevent the ability for erroneous binding.

\def\gend/{{\texttt{end}}}
\subsection{Technical details for the general fractal construction}

Here we provide additional technical details about the construction to prove Theorems~\ref{thm:fractals-temp2} and \ref{thm:fractals-temp1}.

\subsubsection{Perimeter paths and filler tiles}\label{sec:perimeter-paths}

In the general construction, to prepare a supertile representing a stage of the fractal for binding with another such supertile, we must first grow paths of tiles around the supertile in such a way that filler tiles are placed to make the side of a supertile a contiguous set of tile locations that actually contain tiles. The construction of this path is depicted at a high level in Figure~\ref{fig:generator-example1-filled-paths}, where tiles grow a path along the south and west edges of a supertile. Tiles to grow a path along the north and east sides of a supertile are similar. As this path grows around a supertile, it forms a perimeter around the supertile. If a tile of the path is placed next to an empty tile location contained in this perimeter (i.e. a concavity in the supertile), before proceeding with growth of the perimeter path, tiles are placed one after another (using a sequence of signal firings and binding events) so that a single tile wide path of tiles forms. This latter path, growing into the concavity, will eventually place a tile that exposes a glue that binds to a tile of the existing supertile; this binding event will trigger another sequence of signal firings and binding events that cause the perimeter path to continue growth. See Figure~\ref{fig:generator-example1-filled}. Once this perimeter path and filler tiles have been placed, we can detach the outer most perimeter tiles and form ``tooth and gap'' geometries that will ensure that supertile binding takes place between two supertiles in the same stage. We next describe the formation of the tooth and gap geometry.

\subsubsection*{Tooth and gap formation}

Figures~\ref{fig:ew-interface1} and~\ref{fig:ew-interface2} depict the growth process of a tooth and gap on the western side of a supertile as it is being prepared to bind to another supertile along that side. There are two cases to consider depending on whether or not the top of the western side does not have a concavity (that is, the tile WN is the westernmost tile of the supertile). Figure~\ref{fig:ew-interface1} depicts the first case where there is not a concavity at the top of the western side. Figure~\ref{fig:ew-interface2} depicts the second case.

\begin{figure}[htp]
\begin{center}
\includegraphics[width=\linewidth]{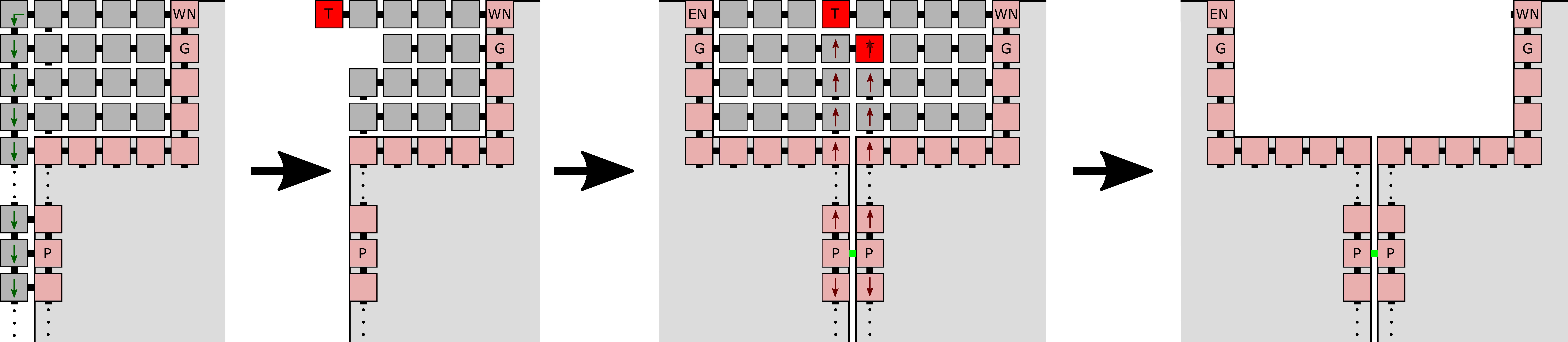}
\caption{An example of the growth process of a tooth and gap on the western side of a supertile preparing to bind to another supertile along that side.  This is similar to that of Figure~\ref{fig:ew-interface1}, but it shows the case where the top of the western side of the supertile has a concavity. Here the tile that detaches to leave the gap labeled $G$ is not part of the fractal assembly though this need not be the case in general. }
\label{fig:ew-interface2}
\end{center}
\end{figure}

In both cases a sequence of signals shown in green in both figures fires via a propagated sequence of glues changing to the \on/ state. This causes the perimeter path (consisting of the tiles with green signals in Figure~\ref{fig:ew-interface2}) to detach. This detachment is described in Section~\ref{sec:detach-perim}. Here we describe the gadgetry for assembling the tooth and gap. The left figures in Figures~\ref{fig:detach-gap-concave} and~\ref{fig:detach-gap-convex} show the tiles and glues making up this gadgetry. Figure~\ref{fig:detach-gap-convex} can be thought of as a portion of the top left corner of the supertile shown in Figure\ref{fig:ew-interface1} and Figure~\ref{fig:detach-gap-concave} as the top left corner of the supertile shown in Figure~\ref{fig:ew-interface2}. Tile $G$ in both figures is the tile that detaches to form the gap (an empty tile location).

\begin{figure}[htp]
\begin{center}
\includegraphics[width=0.9\linewidth]{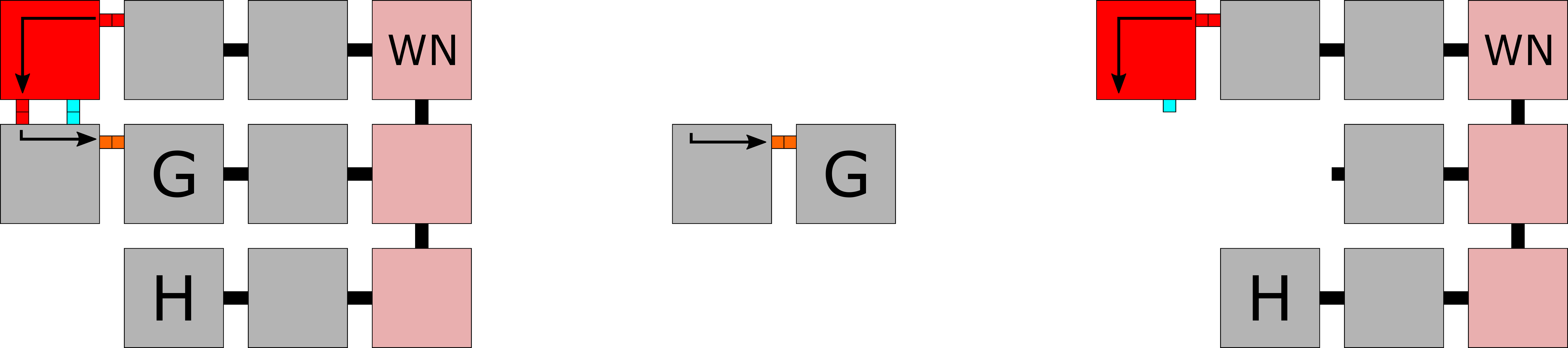}
\caption{Left: A depiction of the the top left region consisting of tiles in the case where there the supertile representing a stage of the fractal assembly has a concavity and the left side of the supertile is preparing to bind along this left side. The red, orange, and aqua glues fire signals that result in the tile labeled $G$ detaching along with the tile to its west. The red tile is the tooth for this side of the supertile. Middle: The duple (a two tile assembly) that detaches to for a gap. Right: The remaining portion of the assembly shown on the left after the gap has formed.}
\label{fig:detach-gap-concave}
\end{center}
\end{figure}

Referring to the left figure of Figure~\ref{fig:detach-gap-convex}, after the perimeter path of tiles detaches, a tile shown in red attaches to the tile labeled $WN$ via the red glue. This fires a signal to allow another tile to attach along the south edge of the red tile via the red glue. This binding event fires an orange glue which binds to an orange glue exposed on the tile $G$. In the case where the top of the supertile side does not have a concavity (shown on the left), $G$ is a tile of the fractal assembly. Suppose that the original input glue allowing this tile to bind to the first stage of the fractal is the blue glue on its south edge. (The north and east edge cases are similar.) When the orange glue of $G$ binds, this binding event fires a signal which turns \on/ a yellow glue along the south edge of $G$, which in turn fires signals to turn \off/ the blue glue of the tile to the south of $G$ (labeled $H$ in the figure) and turn \off/ all glues on all sides of $G$ except the orange glue and the blue glue. Moreover, the binding event of the yellow glue fires an \on/ signal to expose a glue to bind to the pink glue of the tile to the west of $G$. When the pink glue binds, this event triggers the aqua glue to turn \on/ and bind, which finally triggers the aqua glue on the north edge of the tile to the west of $G$ and the red glue on the south edge of the red tile to turn \off/. Following this sequence of binding events and signal firings, the duple $D$ consisting of the tile $G$ and the tile to the west of $G$ will detach with only a red glue and blue glue exposed. See Figure~\ref{fig:detach-gap-convex} for the glue locations. Note that a red glue could be exposed on the south edge of a tooth tile of some other supertile in the system. However also note that the only time such a red glue is exposed on the south edge of a tooth tile is prior to the detachment of the gap tile. Therefore, the duple $D$ would be blocked and unable to bind to a supertile with a red glue exposed on the south edge of a tooth tile. Finally, the yellow glue serves another purpose. The binding event of the yellow glue starts a propogation of signal firings that result in the \gconnect/ glue (shown as the green glue of tile $P$ in Figures~\ref{fig:ew-interface1} and~\ref{fig:ew-interface1}) being turned \on/.

\begin{figure}[htp]
\begin{center}
\includegraphics[width=0.8\linewidth]{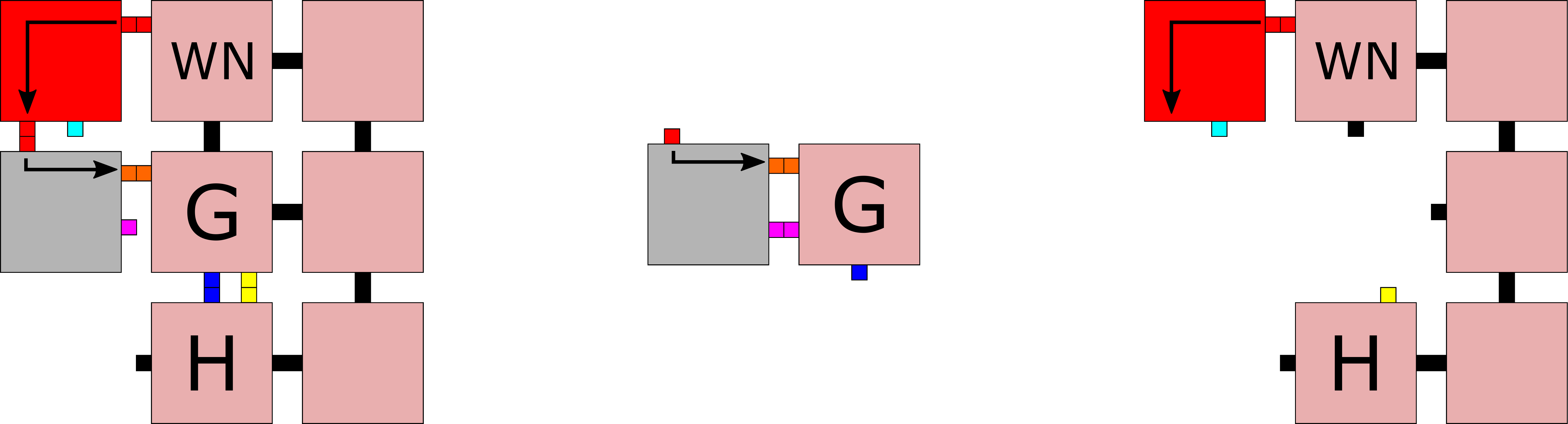}
\caption{Left: A depiction of the the top left region consisting of tiles in the case where there the supertile representing a stage of the fractal assembly does not have a concavity and the left side of the supertile is preparing to bind along this left side. The red, orange, aqua, blue, pink, and yellow glues fire signals that result in the tile labeled $G$ detaching along with the tile to its west. The red tile is the tooth for this side of the supertile. The blue glue is the glue that allowed $G$ to bind to for the fractal assembly of the initial stage of the fractal. Middle: The duple that detaches to for a gap. Right: The remaining portion of the assembly shown on the left after the gap has formed. Note that the yellow glue and an aqua glue are signaled to change to the \off/ state, thought this may not occur until after the gap forms.}
\label{fig:detach-gap-convex}
\end{center}
\end{figure}

$D$ also exposes a blue glue on its south edge. This glue could bind to a tile of the same type as the tile labeled $H$ in Figure~\ref{fig:detach-gap-convex}. Such an $H$ tile must belong to a subassembly of an assembly representing the initial stage of the fractal being assembled. As it stands, this binding could cause erroneous growth. Therefore, we modify the tile types used in the construction of the assembly representing the initial stage of the fractal being assembled. Figures~\ref{fig:gap-is-junk1} and~\ref{fig:gap-is-junk2} describe this modification.

\begin{figure}[htp]
\begin{center}
\includegraphics[width=0.8\linewidth]{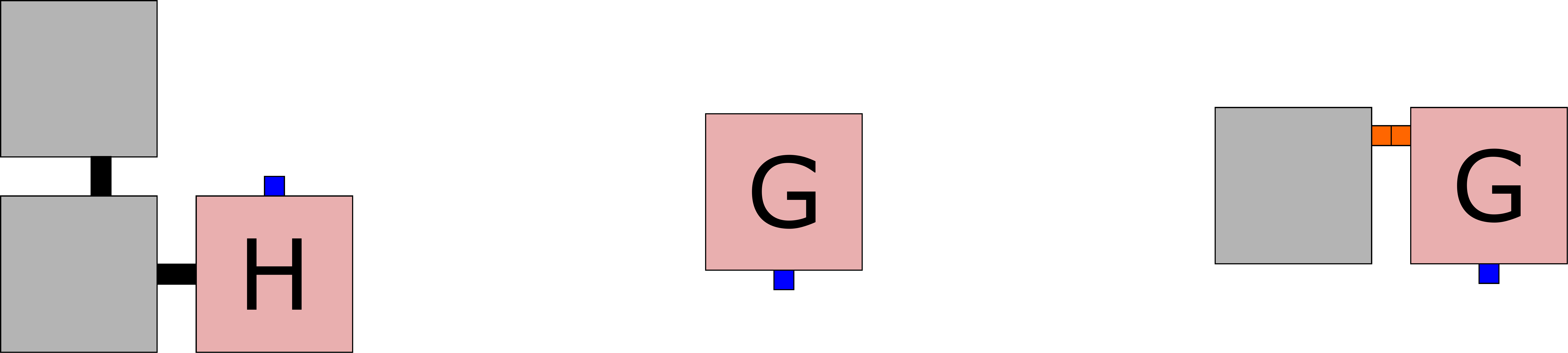}
\caption{This figure shows how $H$ and $G$ bind to additional tiles before binding to a larger assembly to form the initial fractal stage assembly. Here, we are assuming that the $H$ tile binds to the south of the $G$ tile. The case where the $G$ tile binds to the south of the $H$ tile is similar. Left: Using signals, the gray tiles bind first before firing a signal to turn \on/ a glue on the east edge of the southern gray tile. This glue binds to a glue on the west edge of the $H$ tile. In turn, this binding event triggers a blue glue on the north edge of $H$ to turn \on/. Only after this blue glue of the $H$ tile is exposed can the $G$ tile attach via the blue glue on its south edge. Middle: a $G$ tile that can take part in the assembly of an initial fractal stage. Right: A duple containing a $G$ tile with an exposed blue glue. Note that this duple cannot bind the $H$ tile of the left figure, and therefore, cannot take part in the assembly of the initial fractal stage.}
\label{fig:gap-is-junk1}
\end{center}
\end{figure}

\begin{figure}[htp]
\begin{center}
\includegraphics[width=0.9\linewidth]{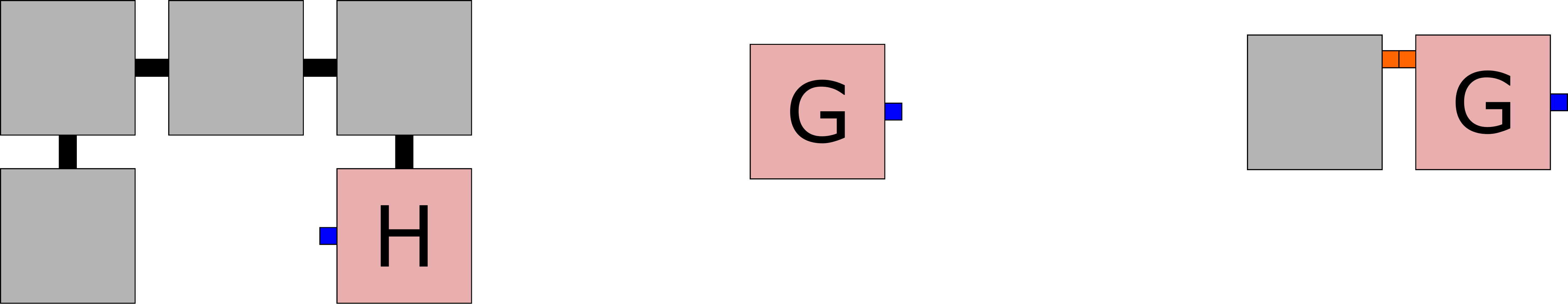}
\caption{This figure shows how $H$ and $G$ bind to additional tiles before binding to a larger assembly to form the initial fractal stage assembly. Here, we are assuming that the $H$ tile binds to the east of the $G$ tile. Left: Using signals, the gray tiles bind first before firing a signal to turn \on/ a glue on the south edge of the easternmost gray tile. This glue binds to a glue on the north edge of the $H$ tile. In turn, this binding event triggers a blue glue on the west edge of $H$ to turn \on/. Only after this blue glue of the $H$ tile is exposed can the $G$ tile attach via the blue glue on its east edge. Middle: a $G$ tile that can take part in the assembly of an initial fractal stage. Right: A duple containing a $G$ tile with an exposed blue glue. Note that this duple cannot bind to the $H$ tile of the left figure, and therefore, cannot take part in the assembly of the initial fractal stage.}
\label{fig:gap-is-junk2}
\end{center}
\end{figure}

The modifications shown in Figures~\ref{fig:gap-is-junk1} and~\ref{fig:gap-is-junk2} take care of the case where the gap forms on the west side of a supertile. The cases where the gap forms on the north, south, or east side of a supertile is similar. With the tooth attached and gap detached, we have ensured that the tooth and gap form and the detachment of the gap does not cause erroneous growth. Note that a tooth and gap must form on the east, south, and north sides of appropriate supertiles. The gadgets for doing so are similar to the the gadgets given here for the west side tooth and gap formation.

Now we describe what happens when the west side of a supertile binds to the east side of a supertile via the \gconnect/ glue. First, the placement of the tooth and gap ensure that the east and west supertiles are substages of the same stage. When the \gconnect/ glue binds, this results in a sequence of binding events dictated by signal firings that cause the detachment of filler tiles. These signals are described later in this section. Here we note that it may be the case that the tile location of the gap should contain a tile in the fractal assembly. For example, the gap of the supertile that binds to the east of a west supertile may have an empty tile location corresponding to a location of the fractal being assembled. In this case, after the east and west supertiles bind and as remaining filler tiles detach, we must take care that the tooth of the western supertile which ``fills the gap'' of the eastern supertile does not detach. A scheme for doing this is shown in Figure~\ref{fig:tooth-gap-after-binding}.

\begin{figure}[htp]
\begin{center}
\includegraphics[width=2in]{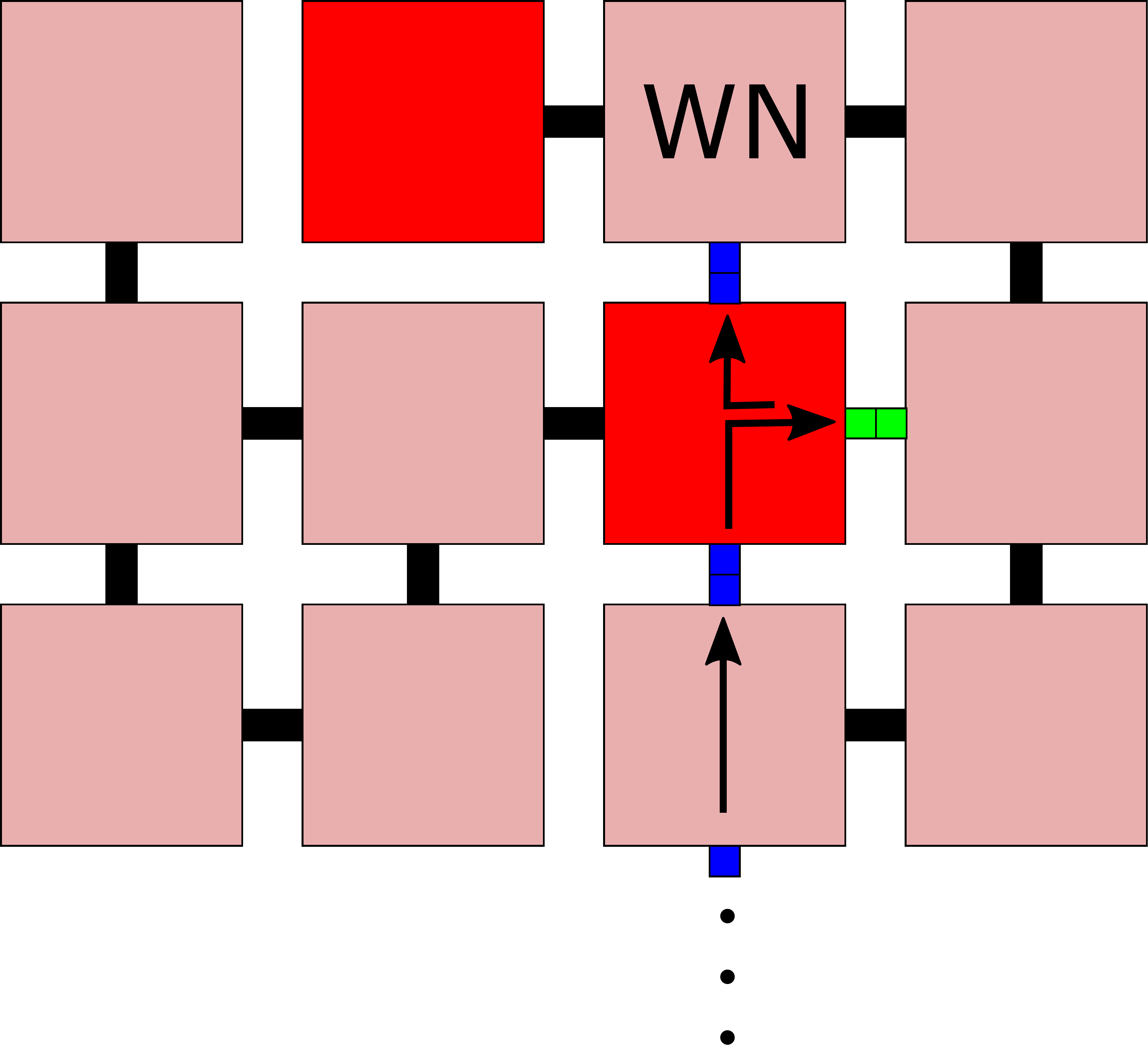}
\caption{The arrows here depict the signals whose firing results in the detachment of filler tiles. These signals are being propagated by glue binding events starting from the binding of the \gconnect/ glue. When a gap is formed by the detachment of a tile at a location that corresponds to a point in the fractal, we can ensure a tooth tile binds in the location of the gap. Such a tile is shown in red. Then, instead of firing signals to detach a tooth tile, a glue is triggered to turn \on/ which bind to some tile located to the north, east, or south of the tooth tile. This glue is shown in green in the figure, where we assume that the tile to the east of the tooth is part of the fractal assembly. The cases where the tile to the north or south of the tooth tile is a tile of the fractal assembly is similar. The binding of this green glue ensure that the tooth tile does not detach and the binding event of this green glue triggers a glue to turn \on/ so that the signal to detach filler tiles can continue to propagate.}
\label{fig:tooth-gap-after-binding}
\end{center}
\end{figure}

\subsubsection*{Detaching the perimeter path and filler tiles.}\label{sec:detach-perim}

We now describe the mechanisms used to detach the perimeter path and filler tiles.  Sets of tile types are designed for two scenarios: 1) the side of the substage assembly the perimeter path is attached to does not need to connect to another substage assembly or 2) the side of the substage assembly the perimeter path is attached to does need to connect to another substage assembly.

Figure~\ref{fig:decay-noncon} shows a schematic which gives a high level overview of how a southern perimeter path dissociates (note that all perimeter paths dissociate the same up to rotation).  During the growth of the perimeter path and filler tiles, glues with the label $DT$ are activated on all filler and path tiles.  The position of the $DT$ glue is such that the $DT$ glue always appears on the input side of the tile (the side of the tile which has a glue that allows the tile to bind to the assembly).  This means that as the path is growing around the perimeter, the $DT$ glue on any tile is not exposed.  Once a tile in the perimeter path binds to the the westernmost southern \gend/ glue, a cascade of signals in the perimeter path is triggered as shown in part (b) of Figure~\ref{fig:decay-noncon} (this cascade of signals is propagated by the $DB$ glue on tile $P$ in Figure~\ref{fig:decay-deats1}).  Eventually, this cascade of signals causes the tile bound to the southernmost western \gend/ glue, to expose a $DT$ glue as shown in part (b) of Figure~\ref{fig:decay-noncon} where the $DT$ glue is depicted as a red glue.  This allows a tile which only exposes a $DT$ glue (and has no other glues), to bind to the exposed $DT$ glue.  Upon binding, the $DT$ glue deactivates all glues on the tile except itself.  Once this tile detaches, it exposes a $DT$ glue on the next tile.  This process is shown in parts (c) and (d) of Figure~\ref{fig:decay-noncon}.  Eventually the perimeter path decays down so that only the tile $t$ on the same row as the westernmost southern \gend/ glue remains along with the tiles above it as shown in part (e) of Figure~\ref{fig:decay-noncon}.  Now, a path of tiles grows from the tile $t$ and binds to the westernmost southern \gend/ glue. This propagates a signal back to the tile $t$ and allows for a tile to bind which begins the growth of the perimeter path on the west side of the substage assembly as shown in part (f) of Figure~\ref{fig:decay-noncon}.  When this occurs, the tiles above $t$ in the same column receive a signal which causes them to deactivate their glues.  Now, the same process repeats on the west side of the substage assembly.

\begin{figure}[htp]
\begin{center}
\includegraphics[width=\linewidth]{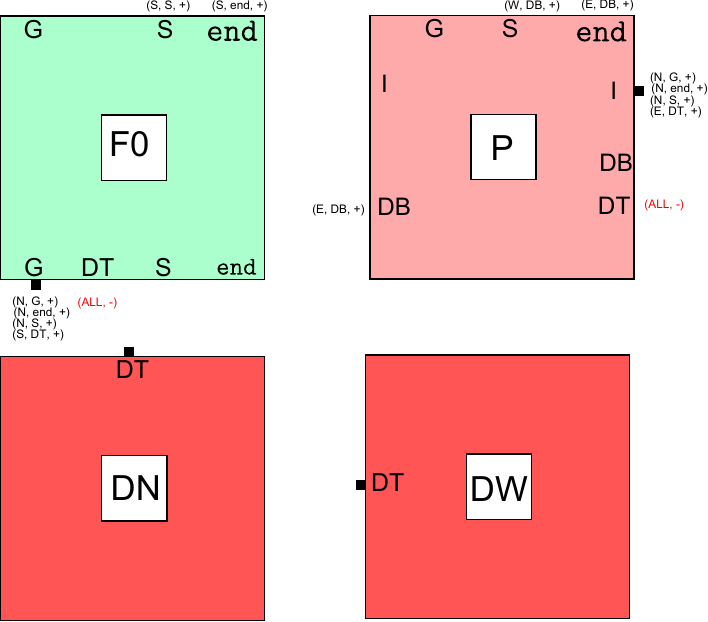}
\caption{ The tiles which implement the detachment of the perimeter path and filler tiles for case 1.  Note that the glue $DT$ deactivates all glues on the tile except itself.
}
\label{fig:decay-deats1}
\end{center}
\end{figure}

\begin{figure}[htp]
\begin{center}
\includegraphics[width=\linewidth]{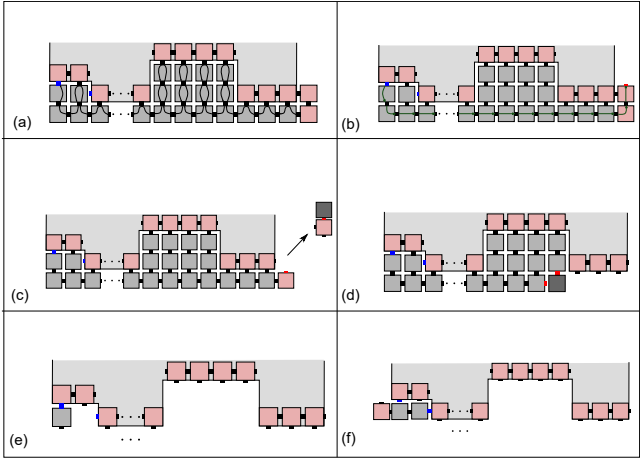}
\caption{ A schematic which shows an overview of the process by which the perimeter path and filler tiles dissociate for the case where the side of the substage assembly on which the perimeter path tiles are growing does not need to form a connection between another substage assembly.  The blue glues represent the \gend/ glues. The red glues represent the $DT$ glues.  The lines and arrows show how signals propagate.
}
\label{fig:decay-noncon}
\end{center}
\end{figure}

Figure~\ref{fig:decay-con} shows a schematic which gives a high level overview of how a southern perimeter path dissociates (note that all perimeter paths dissociate the same up to rotation).  During the growth of the perimeter path and filler tiles, glues with the label $DT$ are activated on all filler and path tiles except the filler tiles which bind to the perimeter path tiles.  These tiles are shown in green in part (a) of Figure~\ref{fig:decay-con}.  In these tiles, a glue $DTW$ is activated on their east side.  The position of the $DT$ glue is such that the $DT$ glue always appears on the input side of the tile (the side of the tile which has a glue that allows the tile to bind to the assembly).  This means that as the path is growing around the perimeter, the $DT$ glue on any tile is not exposed.   In this case, we design the filler tiles so that they bind with strength $\tau$ (the temperature of the system) to the south side of the substage assembly.  Section~\ref{sec:errors} explores the potential downfalls of this and explains why they are not a problem.  If the system being designed has $\tau=2$, we use the mechanism shown in Figure~\ref{fig:t2wall} to bind the filler tiles to the south side of the substage assembly with strength $2$. Similar to the previous case, the binding of the westernmost southern \gend/ glue to a tile in the perimeter path triggers a cascade of signals in the perimeter path.  Eventually, this cascade of signals causes the tile bound to the southernmost western \gend/ glue, to expose a $DT$ glue as shown in part (b) of Figure~\ref{fig:decay-con} where the $DT$ glue is depicted as a red glue.  This allows a tile which only exposes a $DT$ glue (and has no other glues), to bind to the exposed $DT$ glue.  Upon binding, the $DT$ glue deactivates all glues on the tile except itself.  Once this tile detaches, it exposes a $DT$ glue on the next tile.  This process is shown in parts (c) and (d) of Figure~\ref{fig:decay-con}.  Note that this case differs from the previous case in that during this process, the filler tiles do not dissociate.  Recall that this is because the filler tiles which bind to the perimeter path tiles do not have $DT$ glues activated.  Eventually, the only tiles left on the south side of the substage assembly are the filler tiles as shown in part (e) of Figure~\ref{fig:decay-con}.  Once the connection glue is activated, a cascade of signals is activated as shown in part (f) of Figure~\ref{fig:decay-con} and is propagated by the tiles shown in green and the perimeter tiles of the substage assembly via the $DTW$ glue.  Receiving this signal causes the green filler tiles to detach and deactivate all of their glues.  Note that a tile binds to westernmost green filler tile which we call $t$ in part (g) of Figure~\ref{fig:decay-con}.  This prevents $t$ from exposing a glue which if activated, would deactivate all of its glues.  Also, this glue could trigger a $DTW$ glue on a substage assembly which would have unintended consequences.  Consequently, we design this tile so that it never separates from $t$.  Furthermore, the tile which binds to the west of $t$ serves as a starting point for the growth of perimeter path for the west side of the substage assembly.  Now, a path of tiles grows from the tile $t$ and binds to the westernmost southern \gend/ glue . This propagates a signal back to the tile $t$ and allows for the tile to its west to begin the growth of the perimeter path on the west side of the substage assembly as shown in part (h) of Figure~\ref{fig:decay-noncon}.  When this occurs, the tiles above $t$ in the same column receive a signal which causes them to deactivate their glues.  Now, the same process repeats on the west side of the substage assembly.

\begin{figure}[htp]
\begin{center}
\includegraphics[width=1.5in]{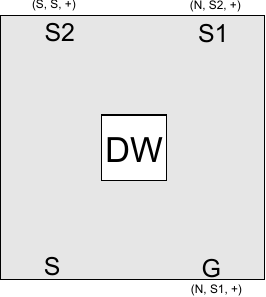}
\caption{The mechanism used to attach to the wall when designing temperature-2 systems.  The tile requires both $S1$ and $S2$ glues to be bound before sending the signal to activate $S$.  The $S1$ and $S2$ glues are strength 1 glues while the $G$ glue is to be strength 2.
}
\label{fig:t2wall}
\end{center}
\end{figure}

\begin{figure}[htp]
\begin{center}
\includegraphics[width=\linewidth]{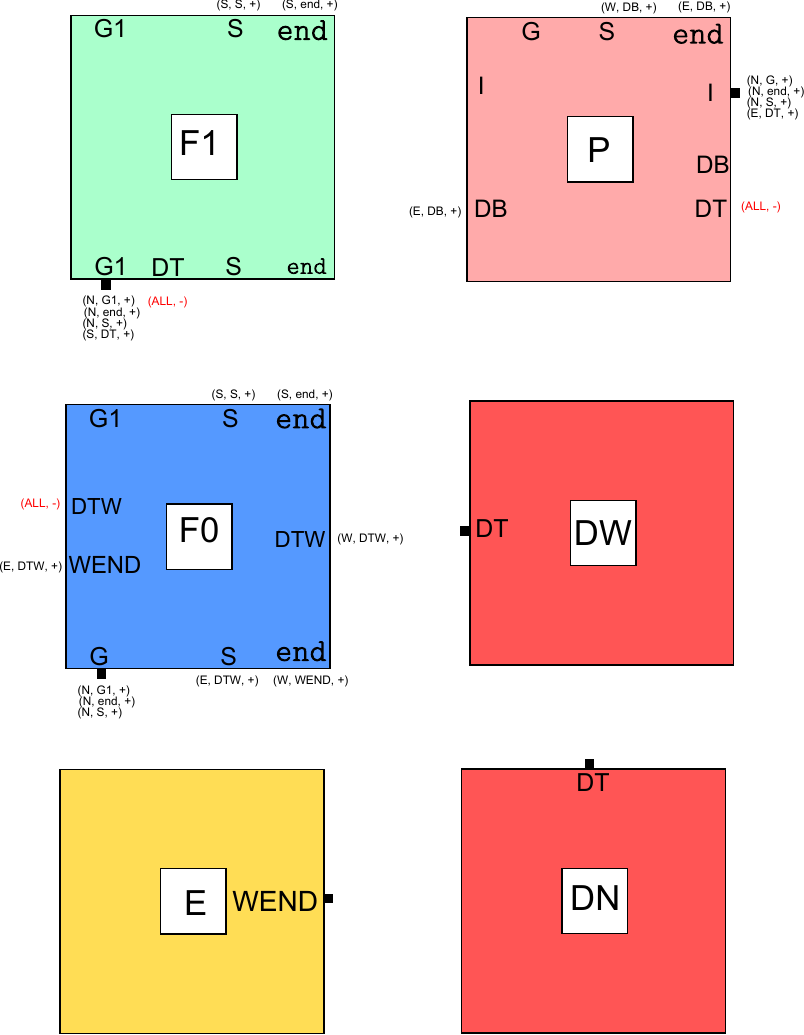}
\caption{ The tiles which implement the detachment of the perimeter path and filler tiles for case 2.  Note that the glue $DTW$ deactivates all glues on the tile.
}
\label{fig:decay-deats2}
\end{center}
\end{figure}

Note that while the description of mechanisms here were focused on dissociating the perimeter path and filler tiles on the south side of the substage assembly, these mechanisms can be easily adapted for any side of the substage assembly by rotating the tiles discussed here.  Also note that while both of the scenarios we consider above examine the hard case where the south west corner of the substage assembly is a concave corner, the tile set can easily be adapted to handle the case where the corner is convex.

Every tile that we added will eventually have either a $DT$ glue activated or a $DTW$ glue activated.  Since both of these glues deactivate all other glues on the tile (with the potential exception of themselves), eventually, all of the tiles will become inert or the tile will be bound to the tile which has a single $DT$ glue and will not expose any glues.  Thus, we need only ensure that the binding of glues to the glues in the queue to be deactivated is benign.  First, note that it follows from the construction of the tiles above that no tiles can bind to ``non-static'' glues (that is, glues which trigger an event) on any substage assemblies.  Also, note that by the design of the tiles, the tiles could not unintentionally ``link'' two stages by exposing glues so that two separate stages could bind.  Consequently, the only glues the glues in the queue to be deactivated can bind to are glues which will eventually detach and do not cause any binding events to occur within the glues of on a substage assembly.  It follows that the tiles which have dissociated cannot interfere with the proper growth of substage assemblies.
\begin{figure}[htp]
\begin{center}
\includegraphics[width=\linewidth]{}
\caption{ A schematic which shows an overview of the process by which the perimeter path and filler tiles dissociate for the case where the side of the substage assembly on which the perimeter path tiles are growing does need to form a connection between another substage assembly.  The blue glues represent the \gend/ glues.  The red glues represent the $DT$ glues.  The lines and arrows show how signals propagate.
}
\label{fig:decay-con}
\end{center}
\end{figure}

\subsection{Correctness of supertile bindings}\label{sec:correct-bindings}
We now discuss the geometric and timing constraints of substage connections.  The key insight is that the connection glue is exposed only once a tooth and gap are present.  To see this, an overview of the substage connection process is shown in Figure~\ref{fig:ew-interface2}.  As shown in the Figure, the tooth tile sends a signal which causes the connection glue (shown as green in Figure~\ref{fig:ew-interface2}) to be exposed.  This means that the connection glue cannot be exposed until the tooth tile is present.  Also, the attachment of the tooth tile eventually causes a tile to fall off to make a gap on the side as shown in Figure~\ref{fig:ew-interface2}.  Once this occurs, another substage can attach via the connection glue.  Then, and only then, is a signal sent which causes the filler tiles to dissociate.

Another important insight is that the filler tiles are present when the tooth tile attaches and remain present until after another substage has attached.  The presence of filler tiles makes the side of the substage completely ``flat'' with the exception of a gap and tooth.  This means, that only the same substages can attach to one another since exactly those substages have the tooth and gap in the same location.  All other substages have their gaps and tooth tiles in different places relative to the connection glue which means they are geometrically prevented from binding.

\section{Conclusion}
In this paper, we have shown how signal-passing tiles can be designed to self-assemble discrete self-similar fractals in a very hierarchical manner. That is, separately and in parallel, the copies of each stage of a given fractal self-assemble, and then those copies combine to form the next stage.  During this process, the sides of each copy which must combine with each other are prepared for that combination in such a way that only copies of the same stage can combine, and only once all pieces of the previous stage have attached.  This is done by using geometric hindrance created by small bumps and dents at carefully spaced locations, along with well-controlled timing of glue activations.  Once copies of a stage have bound together, any ``filler'' tiles which attached to create the necessary geometry along the combining edge, but which are not part of the final fractal shape, then detach.  These detached tiles, called ``junk'', are designed so that they always eventually become subassemblies which are no greater than size 2, and they cannot interfere with any additional assembly.

While these results allow for the self-assembly of fractals without any increase in scale factor and, for an infinite set of fractals, at temperature 1, there remain a number of improvements which may be possible and open questions which remain.  First, is the cooperativity of temperature 2 strictly necessary for fractals which are not singly concave?  I.e., is there a temperature 1 construction which can correctly self-assemble such fractals?  It appears that such a construction would require non-trivial adaptations to or departures from our current construction. Conversely, proving it impossible also appears to be difficult.  Second, can the maximum size of the terminal junk assemblies be reduced to the optimal value of 1?  This also appears difficult due to the fact that within the STAM glue deactivations (like activations) happen asynchronously, and therefore it is not possible to guarantee that both sides of a bound glue (i.e. the glues on the two adjacent edges of bound tiles) have been deactivated before a tile or subassembly detaches.  To counter problems that may arise from this, we frequently ensure that special ``blocker'' tiles first attach to the soon-to-be-junk tiles to hide glues which may not be \off/ and which may allow the junk assembly to interfere with other supertiles.  Third, is it possible to further reduce the number of junk assemblies which are produced, especially during the self-assembly of fractals such as the Sierpinski triangle, which in the current construction requires a number of tiles and junk assemblies on the order of approximately $1/4$ of the number of tiles which remain to form each given stage?  Fourth, our construction for self-assembling the Sierpinski triangle in the proof of Theorem~\ref{thm:triangle} uses 48 tile types (compared with 19 for the scale factor 2 version of \cite{jSignals}).  Can this be reduced to a similar or smaller number?

\vspace{-10pt}
\bibliographystyle{abbrv} %
\bibliography{tam,experimental_refs}

\begin{thebibliography}{10}

\bibitem{TreeFractals}
K.~Barth, D.~Furcy, S.~M. Summers, and P.~Totzke.
\newblock Scaled tree fractals do not strictly self-assemble.
\newblock In {\em Unconventional Computation \& Natural Computation (UCNC)
  2014, University of Western Ontario, London, Ontario, Canada {\rm July 14-18,
  2014}}, pages 27--39, 2014.

\bibitem{Versus}
S.~Cannon, E.~D. Demaine, M.~L. Demaine, S.~Eisenstat, M.~J. Patitz, R.~T.
  Schweller, S.~M. Summers, and A.~Winslow.
\newblock Two hands are better than one (up to constant factors): Self-assembly
  in the 2ham vs. atam.
\newblock In N.~Portier and T.~Wilke, editors, {\em STACS}, volume~20 of {\em
  LIPIcs}, pages 172--184. Schloss Dagstuhl - Leibniz-Zentrum fuer Informatik,
  2013.

\bibitem{MHAM}
C.~T. Chalk, D.~A. Fernandez, A.~Huerta, M.~A. Maldonado, R.~T. Schweller, and
  L.~Sweet.
\newblock Strict self-assembly of fractals using multiple hands.
\newblock {\em Algorithmica}, pages 1--30, 2015.

\bibitem{AGKS05g}
Q.~Cheng, G.~Aggarwal, M.~H. Goldwasser, M.-Y. Kao, R.~T. Schweller, and P.~M.
  de~Espan\'{e}s.
\newblock Complexities for generalized models of self-assembly.
\newblock {\em SIAM Journal on Computing}, 34:1493--1515, 2005.

\bibitem{DDFIRSS07}
E.~D. Demaine, M.~L. Demaine, S.~P. Fekete, M.~Ishaque, E.~Rafalin, R.~T.
  Schweller, and D.~L. Souvaine.
\newblock Staged self-assembly: nanomanufacture of arbitrary shapes with
  ${O}(1)$ glues.
\newblock {\em Natural Computing}, 7(3):347--370, 2008.

\bibitem{Signals3DArxiv}
T.~Fochtman, J.~Hendricks, J.~E. Padilla, M.~J. Patitz, and T.~A. Rogers.
\newblock Signal transmission across tile assemblies: 3{D} static tiles
  simulate active self-assembly by 2{D} signal-passing tiles.
\newblock Technical Report 1306.5005, Computing Research Repository, 2013.

\bibitem{STAM-fractals}
J.~Hendricks, M.~Olsen, M.~J. Patitz, T.~A. Rogers, and H.~Thomas.
\newblock Hierarchical self-assembly of fractals with signal-passing tiles
  (extended abstract).
\newblock In {\em Proceedings of the 22nd International Conference on DNA
  Computing and Molecular Programming (DNA 22),
  Ludwig-Maximilians-Universität, Munich, Germany {\rm September 4-8, 2016}},
  pages 82--97.

\bibitem{STAM-fractals-arxiv}
J.~Hendricks, M.~Olsen, M.~J. Patitz, T.~A. Rogers, and H.~Thomas.
\newblock Hierarchical self-assembly of fractals with signal-passing tiles.
\newblock Technical Report 1606.01856, Computing Research Repository, 2016.

\bibitem{STAMshapes}
J.~Hendricks, M.~J. Patitz, and T.~A. Rogers.
\newblock Replication of arbitrary hole-free shapes via self-assembly with
  signal-passing tiles (extended abstract).
\newblock In C.~S. Calude and M.~J. Dinneen, editors, {\em Unconventional
  Computation and Natural Computation - 14th International Conference, {UCNC}
  2015, Auckland, New Zealand, August 30 - September 3, 2015, Proceedings},
  volume 9252 of {\em Lecture Notes in Computer Science}, pages 202--214.
  Springer, 2015.

\bibitem{JonoskaSignals1}
N.~Jonoska and D.~Karpenko.
\newblock Active tile self-assembly, part 1: Universality at temperature 1.
\newblock {\em International Journal of Foundations of Computer Science},
  25(02):141--163, 2014.

\bibitem{JonoskaSignals2}
N.~Jonoska and D.~Karpenko.
\newblock Active tile self-assembly, part 2: Self-similar structures and
  structural recursion.
\newblock {\em International Journal of Foundations of Computer Science},
  25(02):165--194, 2014.

\bibitem{STAMPatternRep}
A.~Keenan, R.~Schweller, and X.~Zhong.
\newblock Exponential replication of patterns in the signal tile assembly
  model.
\newblock {\em Natural Computing}, 14(2):265--278.

\bibitem{jSSADST}
J.~I. Lathrop, J.~H. Lutz, and S.~M. Summers.
\newblock Strict self-assembly of discrete {S}ierpinski triangles.
\newblock {\em Theoretical Computer Science}, 410:384--405, 2009.

\bibitem{LutzShutters12}
J.~H. Lutz and B.~Shutters.
\newblock Approximate self-assembly of the sierpinski triangle.
\newblock {\em Theory Comput. Syst.}, 51(3):372--400, 2012.

\bibitem{jSignals}
J.~E. Padilla, M.~J. Patitz, R.~T. Schweller, N.~C. Seeman, S.~M. Summers, and
  X.~Zhong.
\newblock Asynchronous signal passing for tile self-assembly: Fuel efficient
  computation and efficient assembly of shapes.
\newblock {\em International Journal of Foundations of Computer Science},
  25(4):459--488, 2014.

\bibitem{SignalTilesExperimental}
J.~E. Padilla, R.~Sha, M.~Kristiansen, J.~Chen, N.~Jonoska, and N.~C. Seeman.
\newblock A signal-passing dna-strand-exchange mechanism for active
  self-assembly of dna nanostructures.
\newblock {\em Angewandte Chemie International Edition}, 2015.

\bibitem{jSADSSF}
M.~J. Patitz and S.~M. Summers.
\newblock Self-assembly of discrete self-similar fractals.
\newblock {\em Natural Computing}, 1:135--172, 2010.

\bibitem{Winf98}
E.~Winfree.
\newblock {\em Algorithmic Self-Assembly of {D}{N}{A}}.
\newblock PhD thesis, California Institute of Technology, June 1998.

\end{thebibliography}

\end{document}
